\theoremstyle{plain}
\newtheorem{theorem}{Theorem}
\newtheorem{lemma}[theorem]{Lemma}
\newtheorem{proposition}[theorem]{Proposition}
\newtheorem{corollary}[theorem]{Corollary}
\theoremstyle{remark}
\newtheorem{definition}{Definition}
\newtheorem{remark}{Remark}
\newtheorem{condition}{Condition}
\newtheorem{example}{Example}
\providecommand{\keywords}[1]
{
  \small	
  \textbf{\textit{Keywords---}} #1
}
\newcommand{\dd}{\mathtt{d}}
\newcommand{\mbb}[1]{\mathbb{#1}}
\newcommand{\mcal}[1]{\mathcal{#1}}
\newcommand{\mtt}[1]{\mathtt{#1}}
\newcommand{\bmx}{\bm{x}}
\newcommand{\bmy}{\bm{y}}
\newcommand{\bmX}{\bm{X}}
\newcommand{\bmXi}{\bm{\Xi}}
\newcommand{\bmY}{\bm{Y}}
\newcommand{\bmD}{\bm{\nabla}}
\newcommand{\bmW}{\bm{W}}
\newcommand{\indi}{\mathbb{I}}
\newcommand{\bmc}{\bm{c}}
\newcommand{\blue}{\color{blue}}
\newcommand{\splitcell}[2]{\begin{tabular}{@{}#1@{}} #2 \end{tabular}}
\newcommand{\vmf}{\text{vMF}}
\newcommand{\Vol}{\text{Vol}}
\newcommand{\changed}{\blue}
\DeclareMathOperator*{\argmin}{argmin\,}
\DeclareMathOperator*{\GenLog}{GenLog}
\DeclareMathOperator*{\Var}{Var}
\DeclareMathOperator*{\Skew}{Skew}
\DeclareMathOperator*{\kurt}{XsKurt}
\title{Statistical Disaggregation --- a Monte Carlo Approach for Imputation under Constraints}
\author[1]{Shenggang~Hu}
\author[2]{Hongsheng~Dai}
\author[3]{Fanlin~Meng}
\author[4]{\authorcr Louis~Aslett}
\author[2]{Murray~Pollock}
\author[1]{Gareth~O.~Roberts}
\affil[1]{Department of Statistics,
University of Warwick, Coventry, CV4 7AL, UK}
\affil[2]{School of Mathematics, Statistics and Physics, Newcastle University, NE1 7RU, UK}
\affil[3]{University of Exeter Business School, University of Exeter, Exeter, EX4 4PU, UK}
\affil[4]{Department of Mathematical Sciences, Durham University, DH1 3LE, UK}
\date{}
\begin{document}
\maketitle
\begin{abstract}
Equality-constrained models naturally arise in problems in which measurements are taken at different levels of resolution. The challenge in this setting is that the models usually induce a joint distribution which is intractable. Resorting to instead sampling from the joint distribution by means of a Monte Carlo approach is also challenging. For example, a naive rejection sampling does not work when the probability mass of the constraint is zero.
A typical example of such constrained problems is to learn energy consumption for a higher resolution level based on data at a lower resolution, e.g., to decompose a daily reading into readings at a finer level.
We introduce a novel Monte Carlo sampling algorithm based on Langevin diffusions and rejection sampling to solve the problem of sampling from equality-constrained models. 
Our method has the advantage of being exact for linear constraints and naturally deals with multimodal distributions on arbitrary constraints.
We test our method on statistical disaggregation problems for electricity consumption datasets, and our approach provides better uncertainty estimation and accuracy in data imputation compared with other naive/unconstrained methods.




\keywords{Langevin diffusion, rejection sampling, exact sampling, perfect simulation, time series forecasting}
\end{abstract}

\section{Introduction}


\subsection{Motivation}
Consider the following sampling problem from a constrained joint distribution
\begin{equation}
f_{\mcal{H}}\left(\bmy^{(1)},\dots,\bmy^{(m)}\right)\propto f_1(\bmy^{(1)})f_2(\bmy^{(2)})\cdots f_m(\bmy^{(m)}) \mbb{I}_{\bm{y}^{(1:m)}\in\mcal{H}}\label{eq:cons1}
\end{equation}
where, for simplicity, $\bmy^{(i)}\in\mbb{R}^d$ are vectors of the same dimension and $f_i(y^{(i)}):\mbb{R}^d\rightarrow\mbb{R}_{>0}$ are strictly positive continuous density functions on $\mbb{R}^d$.
The joint distribution $f_{\mcal{H}}$ in (\ref{eq:cons1}) is constrained on the level set $\mcal{H}:=\{\bmy^{(1:m)}\in\mbb{R}^{md}:h(\bmy^{(1:m)})=0\}$ defined by some function $h:\mbb{R}^{md}\rightarrow\mbb{R}^{k}$.
Under mild regularity conditions of $h$, the level set will form a Riemannian manifold embedded in the Euclidean space $\mbb{R}^{md}$.
When the Riemannian metric $g$ associated with the Riemannian manifold $\mcal{H}$ is fixed, then $(\mcal{H},g)$ admits a unique canonical measure which we use as the dominating measure of the density in (\ref{eq:cons1}).
When we fix the Riemannian metric associated with $\mcal{H}$ to be the Riemmanian metric induced by the Euclidean space of $y^{(1)},\dots,y^{(m)}$, 
 (Please refer to Appendix \ref{appx:manifold} for more details.)
Note that an ad-hoc rejection sampling does not work on $f_\mcal{H}$ since $f_{\mcal{H}}$ integrates to zero with respect to the Lebesgue measure on $\mbb{R}^{md}$.

In general, sampling from (\ref{eq:cons1}) is not trivial even if the constraint $\mcal{H}$ is linear, since the constrained distribution is usually not tractable apart from a handful of special cases such as a Gaussian distribution constrained on a hyperplane or a hypersphere.
However, constrained problems arise naturally in Statistics for both linear and non-linear constraints.
In fact, one can find various settings where the simple linear, sum or average, constraint inherently resides in the model, whenever measurements are taken at different levels of resolutions.
For instance, in energy consumption time series \citep{Peppanen.2016} where the consumption is recorded in different time resolution, in spatial statistics \citep{li2023hierarchical} where the average of measurements in the fine-grained map should match with the measurements in the coarse-grained map, in survey sampling calibrations \citep{deville1992calibration} where the calibration weights $w_k$ are computed using auxiliary variables $x_k$ such that the sample statistics $\sum_k w_k x_k$ matches the population statistics $T$, {\changed etc..}
In essence, the goal is to model the unknown values of finer resolution conditioned on knowing exactly the corresponding aggregated value of lower resolution.

The prediction of high-resolution data from low-resolution data is often named as \textbf{\emph{disaggregation}} in some literature \citep{wang2020regional,rafsanjani2020load}, and when the unknown values are missing data, then this is often termed as \textbf{\emph{data imputation}}.
Throughout this paper, we refer to such problems captured by (\ref{eq:cons1}) as {\slshape disaggregation} or {\slshape imputation} in general, but when talking about the statistical models without constraint we still refer to them as {\slshape forecasting models} or {\slshape predictors}, and we refer to the high-resolution data estimated from the low-resolution data as {\slshape imputed data}.

More concretely, in the settings of energy consumption time series for instance, an energy supplier usually has customers with different types of meters installed \citep{meng2018integrated}, e.g., customers with smart meters (record energy consumption from every hour to every minute), customers with time-of-use meters (e.g., Economy 7 in the UK that a day can be divided into two time periods and the meter records aggregated consumption over the two periods), and customers with traditional meters (record aggregated consumption). 
It poses a great challenge for the energy suppliers to fully understand energy customers' consumption patterns, especially for the latter two types of customers with conventional meters in the absence of high time-resolution meter data. 
Even though it is supposed to be easy to know the detailed and high-resolution energy consumption of customers with smart meters, the smart meter data may still be subject to delays and lower reliability \citep{Peppanen.2016}, or may be aggregated to preserve customers' privacy.
For example in the UK, the smart meter data that distributed network operators receive will be an aggregated reading without the real-time data \citep{Poursharif.2017}.
Therefore, the supplier often has low-resolution data for some (usually recent) periods, but possibly high-resolution data for the periods before. 
For those days with missing high-resolution data from a customer, nonetheless, the energy supplier may still want to know the more fine-grained energy consumption of such customers.
For instance, knowing consumption during peak time periods for customers with traditional meters or consumption during each hour for customers with time-of-use meters helps understand energy usage behaviours, which is essential to transform the energy systems in industrialized countries in order to reduce the total energy consumption \citep{customerbehavior.2015}. 


A similar problem can be found in power distribution networks where a grid operator would like to understand when spikes of energy demand could occur.
This study would require a continuous recording of energy usage in the network at a fine-grained level. 
Such detailed monitoring needs the installation of additional equipment and storage devices which can be expensive.
However, such expenses can be avoided if one can reasonably predict the peak and trough measurements in each time period given the low-frequency data.

We may summarize this problem under the following framework.\footnote{Code deposited \href{https://github.com/Shenggang/linearly_constrained_fusion}{here}.} 
Let $\boldsymbol Y_t = (Y^{(1)}_t,\cdots, Y^{(m)}_t)$ denote the high-resolution data for time period $t$, where the high-resolution data is the result of naturally dividing each low-resolution data into $m$ readings.
Our target is to impute the missing high-resolution data $\boldsymbol Y_t$ for time period $t$ from the existing data set ${\cal D}_t = \{\boldsymbol Y_k, k=1,\cdots, t-1 \}$ and a set of additional covariates $\bmXi_t$ containing information related to $\bm{Y}_t$, under some equality constraint, in many cases linear, for instance, $\sum_{i=1}^{m} Y^{(i)}_t = S_t$. 
Here $S_t$ corresponds to the low-resolution aggregated reading for time period $t$ and it is available when we impute the high-resolution readings. 
Therefore, if we denote the imputed values as $\hat{\boldsymbol Y}_t = (\hat Y^{(1)}_t,\cdots, \hat Y^{(m)}_t)$, it must satisfy the constraint $\sum_{i=1}^{m} \hat Y^{(i)}_t = S_t$ too. 
In one of the application problems of this paper, we consider $m=3$, i.e., peak time period (evening), off-peak time period (midnight), and day-time, since these are of most interests to electricity providers and different tariffs are often made on these time periods.
The data vector $\boldsymbol Y_t = (Y^{(1)}_t,\cdots, Y^{(m)}_t)$ follows a density $\prod_i f(y^{(i)}_t|\boldsymbol \theta, {\cal D}_t, \bmXi)$.
For $t$ fixed, $Y^{(i)}_t, i=1,\cdots, m$ are assumed to be independent conditioned on all historical data and covariates. 


In many cases, the observation data are not well-captured by Gaussian models, thus sampling from (\ref{eq:cons1}) is often not trivial in reality even if the constraint is linear.
For electricity consumption data, the residual distribution based on time series models usually will not be Gaussian because of the extreme values, e.g., due to abnormal weather conditions.
Considering the dataset used in Section \ref{sec:study1}, the Irish Smart Meter Trial data \citep{CER1,CER2}, we fitted an auto-regressive time series model using the 2009 autumn season data to avoid seasonal components and the fitted error is presented in Fig. \ref{fig:error_qq} as a quantile-quantile plot, where the error points are plotted against its normal estimation.
It is clear from the graph that the Gaussian assumption for residuals is not appropriate (also evidenced by the Shapiro-Wilk test having a p-value $<2.2 \times 10^{-16}$). 

\begin{figure}[t]
    \centering
    \includegraphics[width=0.65\linewidth]{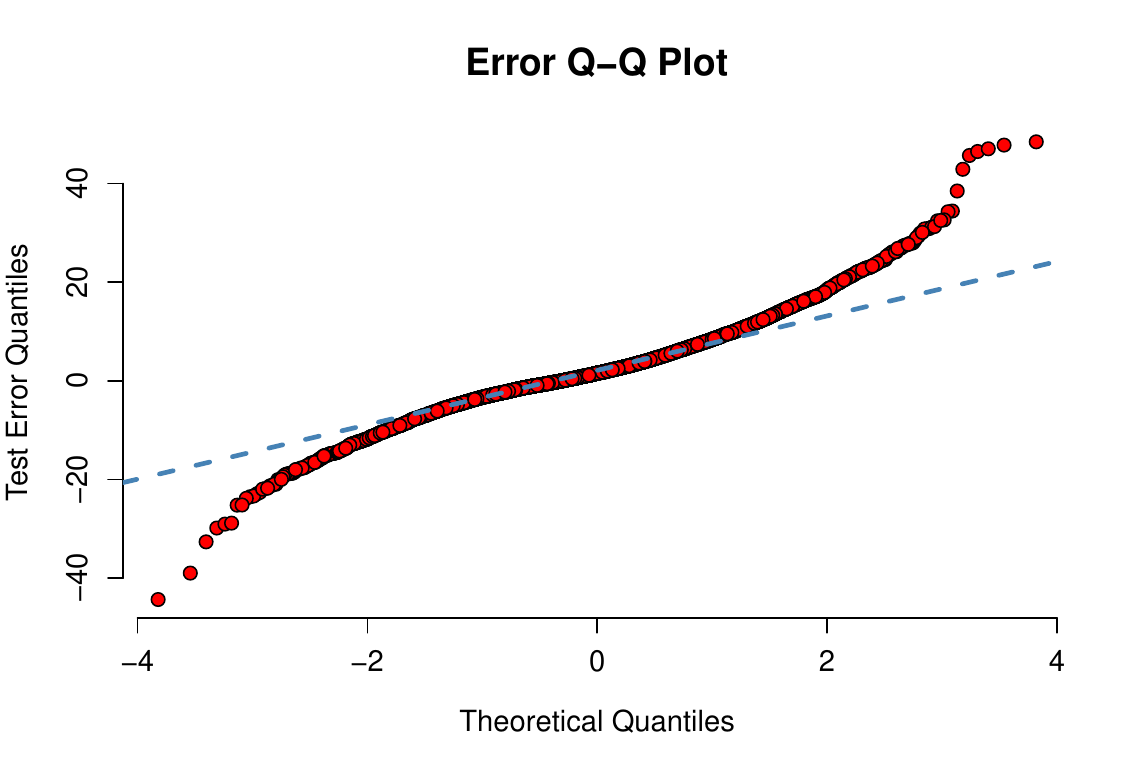}
    \caption{Residual distribution plot based on AR model (without constraints) for the Irish Smart Meter Trial dataset from Study 1 (Sec. \ref{sec:study1}). An AR(7) model is fitted on the 2009 autumn season consumption data to avoid seasonal components. The residual plot indicates non-Gaussian error.}
    \label{fig:error_qq}
\end{figure}

\subsection{Why and When to Consider Disaggregation}
To understand how adding the constraint can benefit the unconstrained model, we begin with a simple multivariate Gaussian model and study the difference in the model uncertainty and mean-squared error (MSE). 
The result can be informally posed as:
\begin{enumerate}
    \item The total uncertainty of the {\bf constrained} model is guaranteed to be {\bf less} than the total uncertainty of the {\bf unconstrained} model.
    \item When the original model has a large uncertainty compared with its bias, the model MSE will be improved when incorporating the constraint.
\end{enumerate}
Moreover, a simulation study shows the second result remains effective on some other unimodal distributions when the condition in the proposition is met.
The detailed analysis and proof of the above two points are presented in Appendix \ref{appx:MSE}. 

In the next section, we will review some of the methods that one may use to conduct inference on a constrained problem and point out their drawbacks.
We present our Constrained Fusion algorithm in Section \ref{sec:MCF}, and compare our method with three other more conventional sampling approaches in Section \ref{sec:comparison}.
The results reveal our method has a faster convergence rate in difficult situations when the target is heavy-tailed and lies away from the constraint.
We then formulate a model for the imputation problem we described above and show how our algorithm can be applied.
We also illustrate how our methodology works on disaggregating constrained time series on real datasets in Section \ref{sec:application_framework} with two more examples in Appendix \ref{appx:application}.
Our data analysis result shows combining constraints could indeed improve the accuracy of imputation by mainly reducing its uncertainty, agreeing with what we see theoretically in Appendix \ref{appx:MSE}.
The paper ends with a discussion in Section \ref{sec:discussion}.

\section{Background}
Returning to the general case, recall from (\ref{eq:cons1}) that we want to sample from a product density subject to a certain equality constraint
\begin{equation*}
f_{\mcal{H}}\left(\bmy^{(1)},\dots,\bmy^{(m)}\right)\propto f_1(\bmy^{(1)})f_2(\bmy^{(2)})\cdots f_m(\bmy^{(m)}) \mbb{I}_{\bm{y}^{(1:m)}\in\mcal{H}}
\end{equation*}
In fact, it is fair to question whether the above density is well-defined, which we will discuss in Appendix \ref{appx:manifold} the (sufficient) conditions when $f_\mcal{H}$ is properly a density function with respect to a dominating measure on the set $\mcal{H}$.


Suppose for now that (\ref{eq:cons1}) is well-defined with respect to a base measure in $\mcal{H}$.
The difficulty in implementing an MCMC algorithm on a constrained sampling problem lies in generating proposals that satisfy the constraint, usually by means of projection or transformation onto the constraint set.
\cite{zappa2018monte} and \cite{chua2020sampling} both consider generating proposals by first sampling from the tangential plane and then projecting onto the manifold.
\cite{chua2020sampling} presents a way to efficiently expand the base sample set to $n\times m$ weighted samples that are approximately distributed as the target distribution.
\cite{zappa2018monte} presents a modified Metropolis-Hastings (MH) algorithm where the proposals are generated through tangential projections onto the manifold. 
This algorithm resembles the usual random walk MH algorithm in that each proposal has a relatively low cost to generate and the rejection rate is directly related to the step size. 
Since the proposal generation depends on the result of an iterative solver, there is an additional rejection stage for reverse projection check to ensure every step is reversible, i.e., the iterative solver can also move from the proposal back to the current state.

The other branch builds upon the Hamiltonian Monte Carlo (HMC) method, where proposals are generated from a simulated Hamiltonian system. 
CHMC \citep{brubaker2012family} extends HMC where samples are generated by including the constraint into the Hamiltonian system and the evolution of which is solved by a constrained integrator.
The problem with CHMC is that the usual explicit integrator cannot be adopted as there is a constraint on the system, and the implicit integrator requires an iterative solver for each simulation step. 
A special case is Geodesic HMC \citep{byrne2013geodesic} which splits the Hamiltonian system such that the integrator avoids the need for an iterative solver for simulating the Hamiltonian mechanics, given that the geodesic flow can be exactly computed. 
This approach can be applied in directional statistics where the state space is usually an $n$-sphere for which the geodesics are explicitly known. 
Another problem of HMC is that the simulated Hamiltonian system needs to be reversible up to momentum reversal for detailed balance to hold.
Although such reversibility is usually satisfied for sufficiently small time steps, this might be violated when the parameter is tuned for more efficient simulation.

Recently, \cite{dai2017new} developed a rejection sampling approach based on Langevin diffusion bridges, whereby a diffusion is simulated subject to a constraint on the ending point. Further, \cite{dai2019monte} developed the Monte Carlo Fusion (MCF) algorithm, which simulates multiple diffusion bridges that coalesce into a single ending point, such that the marginal distribution of the endpoints is distributed exactly as the target distribution.
The MCF algorithm can be viewed as sampling from (\ref{eq:cons1}) subject to the constraint that all components take the same value.
In this paper, we extend these ideas to handle arbitrary constraints in (\ref{eq:cons1}). 
The new method employs $m$ Langevin diffusions, which start from a value at time $0$ following the distribution $f_i$, and their ending points at time $T$ follow a Gaussian distribution with the required constraints.
Therefore, the original non-Gaussian constraint problem becomes a Gaussian constraint problem. 
Finally, the outcomes based on Gaussian constraints will be adjusted according to a path-space rejection sampling for the Langevin diffusion processes. 
This adjusted ending point at $T$ exactly satisfies the constraint and follows the required target distribution. 
Based on the simulated Monte Carlo samples, we can obtain estimated statistics of interest, e.g., mean, variance, quantile points, etc.
The ability to simulate the samples exactly, through algorithms like MCF \cite{dai2019monte}, is of significant importance in practice since the samples are i.i.d. and there is no need to assess convergence like in Markov Chain Monte Carlo methods.

Perfect sampling (even for approximate sampling) under equality constraints is genuinely hard even for simple constraints like linear ones, with a couple of exceptions such as Gaussian distribution under linear constraints \citep{cong2017fast, vrins2018sampling}, which is not {suitable to} apply to {skewed} data such as discussed herein. 
\cite{allard2015disaggregating} addressed a similar problem of disaggregation with respect to linear constraint.
However, their approach is to approximate the constraint by allowing Monte Carlo samples close enough to the constraint to be accepted. As a consequence, the acceptance rate diminishes quickly with the error margin. 
In contrast, our approach innately ensures the samples always land on the constraint.


\section{Methodology}\label{sec:MCF}
It is hard to simulate directly from (\ref{eq:cons1}) 
due to the support having a lower dimension than the unconstrained state space.
However, when simulating the diffusion process, we can restrict the endpoints at time $T$ (typically conditional Gaussian under the proposal distribution) to satisfy the constraint and consider the probability law of the diffusion bridge conditioned on the endpoints instead.
This way, it is easier to construct the sampling method, since the target distribution $\prod f_i$ and the constraint $\mcal{H}$ are essentially satisfied separately at two independent stages.
In this section, we will first discuss the target and proposal distributions before presenting the full algorithm.



\subsection{Constructing Target and Proposal Diffusions} \label{sec:construction}
To begin with, we consider the following augmented distribution which leads us to the constrained product density in (\ref{eq:cons1}).
\textbf{Informally}, we state the following proposition.
\begin{proposition}[Informal]\label{proposition:proposition_conditional}
Consider a set of $m$ diffusion processes of length $T$, with the transition kernel $p_i(\bmX_T|\bmX_0),\,i=1,2,\cdots,m$ such that process $i$ admits $f_i^2(\cdot)$ as its invariant distribution. 
Then the joint density defined on the space $\mbb{R}^{md}\times\mcal{H}$
\begin{eqnarray}
g_{\mcal{H}}\left(\bm{x}^{(1)},\cdots, \bm{x}^{(m)}, \bm{y}^{(1)},\cdots,\bm{y}^{(m)}\right) \propto 
\prod_{i=1}^m f_i^2(\bm{x}^{(i)}) p_i(\bm{y}^{(i)}|\bm{x}^{(i)}) \frac{1}{f_i(\bm{y}^{(i)})}\indi_{\bmy^{(1:m)}\in \mcal{H}}. 
\label{eq:biased_langevin}
\end{eqnarray}
admits the constrained target density (\ref{eq:cons1}) as the marginal distribution of the ending points $\left(\bmy^{(1)}, \cdots, \bmy^{(m)}\right)$.
\end{proposition}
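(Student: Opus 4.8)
The plan is to recover the target density (\ref{eq:cons1}) as the $\bmy$-marginal of $g_{\mcal{H}}$ by integrating out the starting points $\bmx^{(1)},\dots,\bmx^{(m)}$ against Lebesgue measure on $\mbb{R}^{md}$, while keeping $\bmy^{(1:m)}$ confined to the manifold $\mcal{H}$. The crucial structural observation is that the right-hand side of (\ref{eq:biased_langevin}) factorizes across the index $i$: the only coupling between the $m$ blocks is the indicator $\indi_{\bmy^{(1:m)}\in\mcal{H}}$, which depends on $\bmy$ alone and on no $\bmx^{(i)}$. Hence that indicator, together with the factors $1/f_i(\bmy^{(i)})$, can be pulled outside the $\bmx$-integral, reducing the task to evaluating, separately for each $i$, the single-block integral $\int_{\mbb{R}^d} f_i^2(\bmx^{(i)})\, p_i(\bmy^{(i)}\mid\bmx^{(i)})\,\dd\bmx^{(i)}$.

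The key step is to recognize this single-block integral as the stationarity identity for the $i$-th diffusion. By hypothesis process $i$ admits $f_i^2$ as its invariant density, which is precisely the statement that the transition kernel $p_i$ preserves $f_i^2$:
\begin{equation*}
\int_{\mbb{R}^d} f_i^2(\bmx)\, p_i(\bmy\mid\bmx)\,\dd\bmx = f_i^2(\bmy),
\end{equation*}
where the normalizing constant $\int f_i^2$ cancels from both sides and may be dropped. Substituting this for each $i$ collapses the marginal to
\begin{equation*}
\indi_{\bmy^{(1:m)}\in\mcal{H}}\prod_{i=1}^m \frac{f_i^2(\bmy^{(i)})}{f_i(\bmy^{(i)})} = \indi_{\bmy^{(1:m)}\in\mcal{H}}\prod_{i=1}^m f_i(\bmy^{(i)}),
\end{equation*}
which is exactly $f_{\mcal{H}}$ of (\ref{eq:cons1}) up to an overall constant. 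The single surviving factor $1/f_i$ is precisely what converts the squared invariant density back into the first-power target $f_i$; this is the design reason for the $1/f_i(\bmy^{(i)})$ correction appearing in (\ref{eq:biased_langevin}).

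The main work lies not in the algebra above but in justifying its ingredients. First, I would need the stationarity identity to hold genuinely, which requires $f_i^2$ to be a bona fide invariant probability density for the chosen Langevin diffusion --- that is, $f_i\in L^2(\mbb{R}^d)$ together with enough regularity and growth control on $\bm{\nabla}\log f_i$ for the diffusion targeting $f_i^2$ to be non-explosive and ergodic; these are the conditions I would collect as standing hypotheses, consistent with the strict positivity and continuity already assumed for $f_i$. Second, pulling the indicator out and interchanging the product with the integral is an appeal to Tonelli on the nonnegative integrand, which is immediate once each single-block integral is finite. The genuinely delicate point --- and the reason the statement is flagged \textbf{Informal} --- is the measure-theoretic meaning of ``marginal distribution'' here: the $\bmx$-integration is against Lebesgue measure on $\mbb{R}^{md}$, whereas the resulting $\bmy$-density lives on the lower-dimensional set $\mcal{H}$ and must be interpreted against the canonical (Hausdorff/Riemannian) measure that $\mcal{H}$ inherits from its Euclidean embedding. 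Making this precise, and verifying that $\prod_i f_i(\bmy^{(i)})\,\indi_{\bmy^{(1:m)}\in\mcal{H}}$ is integrable against that measure so the marginal is a proper density, is exactly what the manifold discussion in Appendix~\ref{appx:manifold} supplies; I would defer the formal version of the claim to that framework.
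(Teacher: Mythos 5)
Your proposal is correct and takes essentially the same approach as the paper: the paper's proof likewise integrates out the $\bmx^{(i)}$'s using the invariance of $f_i^2$ under the kernel $p_i$, with the $1/f_i(\bmy^{(i)})$ factor collapsing $f_i^2(\bmy^{(i)})$ to $f_i(\bmy^{(i)})$, and defers the measure-theoretic interpretation on $\mcal{H}$ to Appendix \ref{appx:manifold}. Your write-up merely spells out in detail (factorization across $i$, Tonelli, the stationarity identity) what the paper compresses into a single sentence.
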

Since the initial points $\bmx^{(i)}$ follow the invariant distribution of the processes, the conclusion follows directly from integrating out all the $\bmx^{(i)}$s in (\ref{eq:biased_langevin}).

\vspace{1em}\noindent
To avoid the technical details, we will assume that there exists a canonical choice for the dominating measure on $\mcal{H}$ and integrating (\ref{eq:biased_langevin}) on $\mcal{H}$ is a well-defined operation
and address the measure on manifold in more detail in Appendix \ref{appx:manifold}.
To help with understanding, consider a one-dimensional diffusion connecting $X_0=x$ and $X_T=y$. Define the constraint $\mcal{H}$ as the single point set $\mcal{H}=\{y: y=y^*\}$ for a given value $y^*$. 
Then computing $\int_{\mbb{R}}f^2(x) p(y^*|x)\tfrac{1}{f(y^*)}\dd x$ gives the normalizing constant for (\ref{eq:biased_langevin}), which means
(\ref{eq:biased_langevin}) is well-defined as long as $f(y^*)\neq 0$.
Notice that imposing/altering the constraint on $\bmy^{(1:m)}$ only affects the normalizing constant of (\ref{eq:biased_langevin}) but not the transition kernel $p_i$ nor the initial distribution $f_i^2$, due to the way we decompose the diffusion measure.
\begin{remark}
In Proposition \ref{proposition:proposition_conditional}, the reason that $f_i^2(\cdot)$ is chosen as the invariant distribution (instead of $f_i(\cdot)$) is to cancel out the extra $(f_i(x^{(i)}))^{-1}$ term introduced by the Girsanov formula when computing the transition kernel $p_i(\bmy^{(i)}|\bmx^{(i)})$.
\end{remark}
To construct such processes with transition kernel $p_i(\boldsymbol y|\boldsymbol x)$ in Proposition \ref{proposition:proposition_conditional}, we consider the following. 
Let $\bmX^{(i)}:=\{\bmX_s^{(i)}:s \in [0,T]\}$ be a $d$-dimensional Langevin diffusion process  with transition kernel $p_i(\bmX_T|\bmX_0),\,i=1,2,\cdots,m$, defined as 
\begin{equation}
 \dd \bm{X}_s^{(i)}=\bmD \log f_i(\bmX_s^{(i)})\dd s + \dd \bmW_s^{(i)}, \label{eq:langevin_process}
\end{equation}
where $T$ is a constant, $\bmW^{(i)}$ is a $d$-dimensional Brownian motion, $\bmD$ is the gradient operator.
By \cite{hansen2003geometric}, $\bm{X}^{(i)}$ has invariant distribution proportional to $f_i^2(\bm{x})$ over $[0,T]$.
Such diffusion processes can be simulated by using Brownian bridges as the proposal diffusion.
More importantly, we can simulate the proposals exactly with the constraint applied to the ending points.
Define the proposal distribution $h_{\mcal{H}}:\mbb{R}^{md}\times \mcal{H}\rightarrow \mbb{R}_{>0}$ as
\begin{equation}
h_{\mcal{H}}(\bm{x}^{(1)},\cdots\bm{x}^{(m)},\bm{y}^{(1)},\cdots,\bmy^{(m)}) \propto \prod_{i=1}^m f_i(\bmx^{(i)})(2\pi T)^{-1/2}\exp\left[ -\frac{\|\bmy^{(i)}-\bmx^{(i)}\|^2}{2T}\right]\indi_{\bmy^{(1:m)}\in\mcal{H}} \label{eq:wiener}
\end{equation}
for $\bmx^{(i)}, \bmy^{(i)}\in\mbb{R}^d$. The proposal distribution (\ref{eq:wiener}) looks like a unit-drift Brownian motion of time length $T$ with the starting points drawn from $f_i, \,i=1,\dots,m$, and ending on the constraint.

\begin{lemma}\label{theorem1}
Under Condition \ref{condition:regularitycondition} in Appendix \ref{appx:proof1}, 
define
\begin{equation}\label{eq:phi}
\phi_i(\bm{u}) := \frac{1}{2}\left[\|\bmD \log f_i(\bm{u})\|^2+ \bmD\cdot\bmD \log f_i(\bm{u})\right] - l_i \geq 0, 
\end{equation} 
for some constant $l_i$ and $\bmD\cdot$ is the divergence operator (as opposed to gradient operator $\bmD$).
The transition density from $\bmx^{(i)}$ at time $0$ to $\bmy^{(i)}$ at time $T$ for the diffusion process (\ref{eq:langevin_process}) is given by
\begin{equation}
p_i(\bm y^{(i)}| {\bm x^{(i)}}) = 
\frac{f_i({\bm y}^{(i)})}{f_i({\bm x}^{(i)})} \cdot \left(\frac{1}{\sqrt{2\pi T}}\right)^d \exp\left( - \frac{\|\bmy^{(i)} -  \bmx^{(i)}\|^2}{2T}\right) \cdot \mathbb E \left[ \exp \left(- \int_0^{T} \left( \phi_i({\bm{\omega}}^{(i)}_s) + l_i \right)\dd s\right)\right] \label{eq:transition}
\end{equation}
and thus if we \textbf{disregard} the constraint $\mcal{H}$ for now,
\begin{equation*}
 \frac{g\left(\bm{x}^{(1)},\cdots, \bm{x}^{(m)}, \bm{y}^{(1)},\cdots,\bm{y}^{(m)}\right)}{h\left({\bm x}^{(1)},\cdots, {\bm x}^{(m)}, \bmy^{(1)},\cdots,\bmy^{(m)}\right)}  \propto  \mathbb E\left[\exp\left( - \sum_{i=1}^m \int_0^{T} \phi_i({\bm{\omega}}_s^{(i)}) \dd s \right)\right] 
\end{equation*}
where $\mathbb E$ is taking expectation over the measure induced by Brownian bridges $\bm{\omega}^{(1:m)}$ of length $T$ connecting $(\bmx^{(1)},\cdots,\bmx^{(m)})$ and $(\bmy^{(1)},\cdots,\bmy^{(m)})$.
\end{lemma}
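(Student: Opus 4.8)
The plan is to read off the transition density (\ref{eq:transition}) from a Girsanov change of measure that trivialises the drift of (\ref{eq:langevin_process}), followed by an It\^o step that eliminates the resulting stochastic integral, and then a disintegration over the terminal value that converts the unconditioned path law into the Brownian-bridge expectation; the ratio $g/h$ then falls out by cancellation. Write $\mathbb{P}_i$ for the law on $C([0,T],\mbb{R}^d)$ of the solution to (\ref{eq:langevin_process}) started from $\bm{x}^{(i)}$, and $\mathbb{W}_{\bm{x}^{(i)}}$ for driftless Wiener measure from the same point. Under Condition \ref{condition:regularitycondition} (supplying a Novikov-type bound so the relevant exponential local martingale is a true martingale), Girsanov gives
\begin{equation*}
\frac{\dd\mathbb{P}_i}{\dd\mathbb{W}_{\bm{x}^{(i)}}}(\bm{X}) = \exp\left(\int_0^T \bmD\log f_i(\bm{X}_s)\cdot\dd\bm{X}_s - \frac12\int_0^T \|\bmD\log f_i(\bm{X}_s)\|^2\,\dd s\right).
\end{equation*}

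Because the drift is an exact gradient, I would apply It\^o's formula to $\log f_i(\bm{X}_s)$ to rewrite the stochastic integral as
\begin{equation*}
\int_0^T \bmD\log f_i(\bm{X}_s)\cdot\dd\bm{X}_s = \log\frac{f_i(\bm{X}_T)}{f_i(\bm{X}_0)} - \frac12\int_0^T \bmD\cdot\bmD\log f_i(\bm{X}_s)\,\dd s,
\end{equation*}
so that, recognising $\tfrac12[\|\bmD\log f_i\|^2+\bmD\cdot\bmD\log f_i]=\phi_i+l_i$ from (\ref{eq:phi}), the Radon--Nikodym derivative collapses to $\tfrac{f_i(\bm{X}_T)}{f_i(\bm{X}_0)}\exp\!\big(-\int_0^T(\phi_i(\bm{X}_s)+l_i)\,\dd s\big)$. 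To isolate the transition density I would then disintegrate over $\bm{X}_T$: integrating this derivative against $\mathbb{W}_{\bm{x}^{(i)}}$ on $\{\bm{X}_T\in A\}$ and using that the regular conditional law of a Brownian path given its endpoint is exactly the Brownian bridge $\bm{\omega}^{(i)}$ from $\bm{x}^{(i)}$ to $\bm{y}^{(i)}$, the endpoint factor $f_i(\bm{X}_T)/f_i(\bm{X}_0)$ becomes the deterministic constant $f_i(\bm{y}^{(i)})/f_i(\bm{x}^{(i)})$ and pulls out of the conditional expectation, while the Gaussian weight is simply the Brownian transition density $(2\pi T)^{-d/2}\exp(-\|\bm{y}^{(i)}-\bm{x}^{(i)}\|^2/2T)$. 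This produces (\ref{eq:transition}).

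The concluding identity is then purely algebraic. Substituting (\ref{eq:transition}) into the numerator $g$ of (\ref{eq:biased_langevin}) (ignoring $\indi_{\bm{y}^{(1:m)}\in\mcal{H}}$), the factor $f_i^2(\bm{x}^{(i)})$ cancels against the $1/f_i(\bm{x}^{(i)})$ inside the transition density, and the terminal $f_i(\bm{y}^{(i)})$ cancels the explicit $1/f_i(\bm{y}^{(i)})$ in (\ref{eq:biased_langevin}); dividing by $h$ from (\ref{eq:wiener}) removes the matching $f_i(\bm{x}^{(i)})$ and Gaussian factors, leaving $\prod_{i} e^{-l_iT}\,\mathbb{E}\big[\exp(-\int_0^T\phi_i(\bm{\omega}^{(i)}_s)\,\dd s)\big]$, with the constants $e^{-l_iT}$ absorbed into the overall proportionality. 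Independence of the $m$ Brownian bridges then turns the product of expectations into the single joint expectation $\mathbb{E}\big[\exp(-\sum_i\int_0^T\phi_i(\bm{\omega}^{(i)}_s)\,\dd s)\big]$ claimed in the statement.

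I expect the main obstacle to lie in the two measure-theoretic steps rather than in the algebra: verifying that Condition \ref{condition:regularitycondition} genuinely secures the martingale property required for Girsanov (and the applicability of It\^o to $\log f_i$, which needs $f_i$ strictly positive, which is assumed, and adequately smooth), and justifying the disintegration cleanly so that conditioning on $\bm{X}_T=\bm{y}^{(i)}$ sends the restricted path law to the Brownian bridge and renders the endpoint ratio deterministic. Note that the nonnegativity arranged by subtracting $l_i$ in (\ref{eq:phi}) plays no role in the identity itself; it is what subsequently makes the path-space expectation tractable by exact (Poisson-thinning) rejection sampling.
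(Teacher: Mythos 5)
Your proposal is correct and takes essentially the same route as the paper: the paper's proof simply cites Lemma 1 of \citet{beskos2006exact} for the Radon--Nikodym derivative of the diffusion bridge law with respect to the Brownian bridge law, then takes the expectation over the bridge measure to get (\ref{eq:transition}) and substitutes into $g/h$ exactly as you do. Your Girsanov + It\^o + endpoint-disintegration argument is precisely the standard derivation of that cited lemma, so your version is a self-contained expansion of the step the paper outsources to the reference.
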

The proof of this lemma is provided in Appendix \ref{appx:proof Lemma 2}.
The above Radon-Nikodym derivative \textbf{stays the same} under certain conditions after we include the constraints, i.e.,
\begin{corollary}\label{cor:constrained_rnd}
    Let $g_{\mcal{H}}\left(\bmx^{(1)},\dots,\bmx^{(m)},\bmy^{(1)},\dots,\bmy^{(m)}\right)$ given in (\ref{eq:biased_langevin}) and $h_{\mcal{H}}\left(\bmx^{(1)},\dots,\bmx^{(m)},\bmy^{(1)},\dots,\bmy^{(m)}\right)$ in (\ref{eq:wiener}).
    Suppose $\mcal{H}$ is a smooth manifold, then on the domain $\mbb{R}^{md}\times\mcal{H}$, and $g_\mcal{H}$, $h_{\mcal{H}}$ are integrable with respect to the product Lebesgue measure $\lambda_{\mbb{R}^{md}}\otimes \lambda_{\mcal{H}}$, then
\begin{equation}
 \frac{g_{\mcal{H}}\left(\bm{x}^{(1)},\cdots, \bm{x}^{(m)}, \bm{y}^{(1)},\cdots,\bm{y}^{(m)}\right)}{h_{\mcal{H}}\left({\bm x}^{(1)},\cdots, {\bm x}^{(m)}, \bmy^{(1)},\cdots,\bmy^{(m)}\right)}  \propto  \mathbb E\left[\exp\left( - \sum_{i=1}^m \int_0^{T} \phi_i({\bm{\omega}}_s^{(i)}) \dd s \right)\right] \label{eq:theorem1}
\end{equation}
where $\mathbb E$ is taking expectation over the measure induced by Brownian bridges $\bm{\omega}^{(1:m)}$ of length $T$ connecting $(\bmx^{(1)},\cdots,\bmx^{(m)})$ and $(\bmy^{(1)},\cdots,\bmy^{(m)})$, and $\phi_i$ as defined in (\ref{eq:phi}).
\end{corollary}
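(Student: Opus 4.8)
The plan is to reduce the claim to the unconstrained identity of Lemma~\ref{theorem1}, by showing that restricting to $\mcal{H}$ modifies $g$ and $h$ in identical fashion, so that every extra factor cancels in the Radon--Nikodym derivative and only an overall normalizing constant changes.

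First I would observe that $g_{\mcal{H}}$ in (\ref{eq:biased_langevin}) and $h_{\mcal{H}}$ in (\ref{eq:wiener}) are obtained from the unconstrained functional expressions appearing in Lemma~\ref{theorem1} by multiplying by the \emph{same} indicator $\indi_{\bmy^{(1:m)}\in\mcal{H}}$, and that by hypothesis both are densities against the \emph{same} dominating measure $\lambda_{\mbb{R}^{md}}\otimes\lambda_{\mcal{H}}$. Hence on the support $\mbb{R}^{md}\times\mcal{H}$ the derivative $\mathrm{d}g_{\mcal{H}}/\mathrm{d}h_{\mcal{H}}$ is just the pointwise quotient of the two densities: the indicators equal one and cancel, the (finite, by the integrability assumption) normalizing constants contribute only an overall multiplicative factor, and the remaining quotient is exactly the unconstrained ratio $g/h$ evaluated on $\mcal{H}$. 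Since the non-indicator parts of (\ref{eq:biased_langevin}) and (\ref{eq:wiener}) are the same functions of $(\bmx^{(1:m)},\bmy^{(1:m)})$ irrespective of whether $\bmy^{(1:m)}$ happens to lie on $\mcal{H}$, I can then invoke Lemma~\ref{theorem1} directly and substitute the transition density (\ref{eq:transition}), so that the per-coordinate factors $f_i^2(\bmx^{(i)})$, $1/f_i(\bmx^{(i)})$, $1/f_i(\bmy^{(i)})$ and the Gaussian kernels telescope and the ratio collapses to $\mbb{E}\big[\exp(-\sum_{i=1}^m\int_0^T\phi_i(\bmo_s^{(i)})\,\dd s)\big]$, which is (\ref{eq:theorem1}).

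The step I expect to require the most care is confirming that the change of dominating measure genuinely cancels rather than contributing a surviving factor. Reading $g_{\mcal{H}},h_{\mcal{H}}$ as conditionals of the ambient densities onto the level set $\{h=0\}$, the co-area formula relates $\lambda_{\mcal{H}}$ to ambient Lebesgue measure through a geometric Jacobian $\big(\det(Dh\,Dh^{\!\top})\big)^{-1/2}$; this factor depends only on $\bmy^{(1:m)}$ and is common to $g_{\mcal{H}}$ and $h_{\mcal{H}}$, so it disappears from the quotient. Smoothness of $\mcal{H}$ makes this Jacobian well defined and the integrability hypothesis keeps both normalizing constants finite, which together formalize the remark following (\ref{eq:biased_langevin}) that imposing the constraint alters only normalizing constants and leaves the diffusion-bridge expectation untouched.
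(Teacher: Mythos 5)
Your proof is correct and follows essentially the same route as the paper: the paper's own proof simply applies Theorem~\ref{thm:constrained_rn} (restriction of densities to $\mcal{H}$ against the volume measure $\lambda_{\mcal{H}}$ preserves pointwise ratios up to normalizing constants) to Lemma~\ref{theorem1}, which is exactly the indicator-and-normalizing-constant cancellation argument you spell out inline. Your co-area remark is an additional safeguard rather than a needed step --- under the paper's conventions the constrained densities are \emph{defined} by restriction against $\lambda_{\mcal{H}}$, so no Jacobian ever enters --- but it correctly shows the conclusion is insensitive to whether one instead reads $g_{\mcal{H}},h_{\mcal{H}}$ as disintegrated conditionals.
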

The proof of this corollary is provided in Appendix \ref{appx:proof Lemma 2}. Although (\ref{eq:theorem1}) is intractable, it is possible to construct a rejection sampling procedure that has the acceptance probability given by the right-hand side of (\ref{eq:theorem1}).
The procedure is sketched below and discussed in Appendix \ref{appx:poisson_process}.
\begin{remark}
The rejection stage can be done without computing the integral by simulating a Poisson point process on the space $[0,T]\times[0,M^{(i)}]$ for each $i$ where $M^{(i)}$ is an upper bound for the function $\phi_i$ and asserting if no point lies below the curve $\phi_i(\omega^{(i)}_s), s\in[0,T]$.
Provided that the functions $\phi_i$ are bounded above, this step is easy to execute (see Appendix \ref{appx:poisson_process}).
However, the function $\phi_i$ is usually not bounded above, in which case, one needs to determine the bounds for the proposal Brownian bridge and simulate the Brownian bridge conditioned on the pre-determined interval.
This approach is referred to as the "Layered approach for Brownian Bridge" by \cite{beskos2008factorisation}.
To avoid a complete re-iteration of the said paper, we will summarize the key steps in the appendix only and refer the reader to \cite{beskos2008factorisation} for the full detail.
\end{remark}

\subsection{Sampling from Constrained Proposals}\label{sec:proposal_sampling}
The preceding results ensure that if we can simulate the Brownian bridge that lands on the constraint $\mcal{H}$, a rejection step may be applied to correct the proposal into a sample from the constrained target distribution.
In other words, the problem is transformed from simulating an arbitrary distribution on an arbitrary manifold into simulating a Gaussian distribution on an arbitrary manifold.
Recall the constrained proposal distribution (\ref{eq:wiener})
\begin{equation}
 h_{\mcal{H}} \left(\bm{x}^{(1:m)},\bm{y}^{(1:m)}\right) \propto \left(\prod_{i=1}^m f_i(\bmx^{(i)})\right)\, f_{\bmy|\bmx}\left(\bmy^{(1:m)}|\bmx^{(1:m)}\right)\indi_{\bmy^{(1:m)}\in\mcal{H}}, \label{eq:wiener_c}
\end{equation}
where 
$$
f_{\bmy|\bmx}\left(\bmy^{(1:m)}|\bmx^{(1:m)}\right) := \prod_{i=1}^m(2\pi T)^{-1/2}\exp\left[ -\frac{\|\bmy^{(i)}-\bmx^{(i)}\|^2}{2T}\right].
$$
There are two possible ways one may handle the proposal:
\begin{enumerate}
    \item When $f_{\bmy|\bmx}\left(\bmy^{(1:m)}|\bmx^{(1:m)}\right)\indi_{\bmy^{(1:m)}\in\mcal{H}}$ can be directly sampled from with a tractable normalizing constant, then
    $$
    h_{\mcal{H}}\left(\bm{x}^{(1:m)},\bm{y}^{(1:m)}\right) \propto \underbrace{\left(\prod_{i=1}^m f_i(\bmx^{(i)})\right)}_{\text{Sample } \bmx^{(1:m)}} 
 \underbrace{Z_{\mcal{H}}(\bmx^{(1:m)})}_{\text{Accept/reject}} \underbrace{\frac{1}{Z_{\mcal{H}}(\bmx^{(1:m)})}f_{\bmy|\bmx}\left(\bmy^{(1:m)}|\bmx^{(1:m)}\right)\indi_{\bmy^{(1:m)}\in\mcal{H}}}_{\text{Sample } \bmy^{(1:m)} | \bmx^{(1:m)}}. 
    $$
    In this case, we can directly sample from the constrained Gaussian distribution and add a rejection step to correct for the normalizing constant.
    We showcase two types of constraints that may be treated this way in the Appendix \ref{appx:linear_constraint} and \ref{appx:spherical_constraint}. 

    \item In most other cases, it might not be trivial to sample from the constrained Gaussian distribution, or the normalizing constant is analytically intractable, then
    $$
    h_{\mcal{H}} \left(\bm{x}^{(1:m)},\bm{y}^{(1:m)}\right) \propto \underbrace{\left(\prod_{i=1}^m f_i(\bmx^{(i)})\right)}_{\text{Sample } \bmx^{(1:m)}} 
 \underbrace{f_{\bmy|\bmx}\left(\bmy^{(1:m)}|\bmx^{(1:m)}\right)}_{\text{Accept/Reject}}\underbrace{\indi_{\bmy^{(1:m)}\in\mcal{H}}}_{\text{Sample $\bmy^{(1:m)}$ uniformly from }\mcal{H}}. 
    $$
    Assuming we can sample uniformly from $\mcal{H}$, we can then obtain an exact sample from the target distribution (\ref{eq:cons1}).
\end{enumerate}
We summarize the sampling algorithm (named as Constrained Fusion Sampler) in Algorithm \ref{alg:CMCF-1} for the first case. 
The algorithm for the second case is given in the appendix.

\begin{algorithm}[t]
 \SetAlgoLined
 \SetKwInOut{Input}{input}
\Input{Manifold Constraint $\mcal{H}$; component distributions $f_i,i=1,\dots,C$; parameter $T$}
 Simulate, for each $1\leq i\leq m$, $\bm{x}^{(i)}\sim f_i(\cdot)$ \;
 Simulate $\bm{y}=(\bmy^{(1)},\dots,\bmy^{(m)})\sim \mathcal{N}(\bm{x}^{(1:m)}, T \boldsymbol I_d)$ constrained on $\bmy\in\mcal{H}$ \label{step:constraint Gaussian}\;
 Simulate a uniform random variable $U_1\in\mcal{U}[0,1]$\;
 \uIf{$U_1\leq Z_{\mcal{H}}\left(\bmx^{(1:m)}\right)$}{
    \For{$i=1,...,m$}{
        Simulate a Brownian Bridge of length $T$ connecting $\bm{x}^{(i)}$ and $\bm{y}^{(i)}$\;
    }
    Let $U_2\in\mcal{U}[0,1]$ and simulate the event ${\cal I}$ given by expression (\ref{eq:theorem1}), see Appendix \ref{appx:poisson_process}\;
    \uIf{$\mcal{I}$ is true}
    {
        Accept and return $\bm{y}^{(1:m)}$\;
    }
    \Else{
        Go back to step 1\;
    }
 }
 \Else {
    Go back to step 1\;
}
\caption{Constrained Fusion Sampler for Case 1}
\label{alg:CMCF-1}
\end{algorithm}

\subsection{Comparison with Existing Methods}\label{sec:comparison}

The Constrained Fusion algorithm proposed is based on a Gaussian proposal distribution.
We benefit from the proposal due to the ability to generate proposal samples distributed on the desired linear hyperplane.
Since the Gaussian proposal can be directly implemented into a naive importance sampling without the additional need to simulate diffusion processes, some may wonder how the proposed algorithm (Algorithm \ref{alg:CMCF-1}) performs compared with 
some other common variants of constrained sampler.
In this section, we conduct simulation studies to compare the algorithm performance in computing Monte Carlo estimates for linearly constrained models.
We considered the problem of computing the mean and variance of three independent random variables subject to a single sum constraint.
Four methods are tested on two different distributions and linear constraints. The results are shown in Fig. \ref{fig:error_compare}. Among these four methods, two of them are suitable for nonlinear constraints
are also tested on variance constraint, and results are shown in Figure \ref{fig:nonlinear_compare}. 
To compare the time efficiency, the computation time for each simulation per $10^4$ effective samples is plotted in Figure \ref{fig:time_comparison}.

Let $X_1$, $X_2$ and $X_3$ be three independent random variables subject to the constraint that $X_1+X_2+X_3=s$, where $s$ is known.
The distributions of $X_i$ are known and $n$ samples from the constrained joint distribution $\prod_i f_i(X_i)\indi_{X_1+X_2+X_3=s}$.
The following four constrained samplers are applied:
\begin{enumerate}
    \item The constrained fusion algorithm (Alg. \ref{alg:CMCF-1}), (\textbf{CF})
    \item The naive Gaussian proposal importance sampler. The proposal distribution is constructed by moment fitting the distributions of $X_j$ to generate three Gaussian approximations and then imposing the sum constraint. (\textbf{IS})
    \item The random-walk Metropolis-Hastings sampler. The sampler is initialized to a random point on the constraint hyperplane. At each step, the random walk displacement is drawn from a standard multivariate Gaussian distribution subject to the constraint that the dimensions sum up to zero, i.e., the sum of the components is unchanged so the constraint is still satisfied. The first $10^4$ samples are discarded. 
    (\textbf{MH})
    \item The Constrained Hamiltonian Monte Carlo (CHMC) as described in \cite{brubaker2012family}. Like in the Metropolis-Hastings case, the first $10^4$ samples are discarded. 
    The mass matrix is chosen to be the identity.
    (\textbf{CHMC})
\end{enumerate}
The $N$ drawn samples are used to compute the Monte Carlo approximated means and variances under the sum constraint, denoted as $\mu_i^N$ and $\Var_i^N$ respectively. 
The percentage error for estimated mean and variance at sample size $N$ are given by
\begin{equation}\label{eq:percentage_error}
    \text{PE}_{\mu_i}^N = \frac{\left|\mu_i^N-\mbb{E}[X_i]\right|}{|\mbb{E}[X_i]|}\times 100\%, \quad 
 \text{PE}_{\Var_i}^N=\frac{\left|\Var_i^N-\Var(X_i)\right|}{|\Var(X_i)|}\times 100\%, \quad i\in\{1,2,3\},
\end{equation}
where the ground truth $\mbb{E}[X_i]$ and $\Var(X_i)$ are computed using numerical integration, done using the 2-d integral functionality provided by the \emph{rmutil} package \citep{rmutil} in R \citep{rstats}.
The percentage errors in all three components are summed 
\begin{equation}\label{eq:percentage_error_total}
   \text{PE}_{\text{mean}}^N = \sum_{i=1}^3 \text{PE}_{\mu_i}^N, \qquad \text{PE}_{\text{Var}}^N = \sum_{i=1}^3 \text{PE}_{\Var_i}^N, 
\end{equation}
and the values are plotted against sample size $N$ in Fig. \ref{fig:error_compare}.

The computation time per $10^4$ effective samples for the subsequent simulation studies is summarised in Figure \ref{fig:time_comparison}. 
The effective sample size for the Markov chain-based methods is approximated by averaging the effective sample size of each output dimension using the \emph{ess} function provided in \emph{mcmcse} package \citep{mcmcse} in R.
From Figure \ref{fig:time_comparison}, we note that the CF algorithm requires high computational cost due to being a rejection algorithm, especially for the non-linear case. For the Linear cases, CF has a better computation efficiency compared with CHMC, since CF produces i.i.d. samples. 
For the non-linear case, the CF algorithm is able to explore the state space very well with only $600$ samples.
However, due to the extremely high rejection rate, the CF algorithm is essentially unviable to produce $10^4$ samples. A possible approach to improve the efficiency of the CF algorithm is discussed in Section 5.

Under all simulation scenarios, the tuning parameters of each algorithm are chosen to make sure that all algorithms reach a near optimal situation. For example, the random walk Metropolis-Hastings (MH) algorithm has a 42\% acceptance probability, and the importance sampling (IS) has an effective sample size of 25\% times the total particle size.
The constrained Hamiltonian Monte Carlo (CHMC) algorithm has roughly $80\%$ acceptance rate which is a good balance between computation time and jump size.
The constraint fusion (CF) algorithm uses an appropriate value of $T$ to obtain a high acceptance probability.  

\begin{figure}
    \centering
    \begin{subfigure}{0.76\linewidth}
        \centering
        \caption{}
        \includegraphics[width=\linewidth]{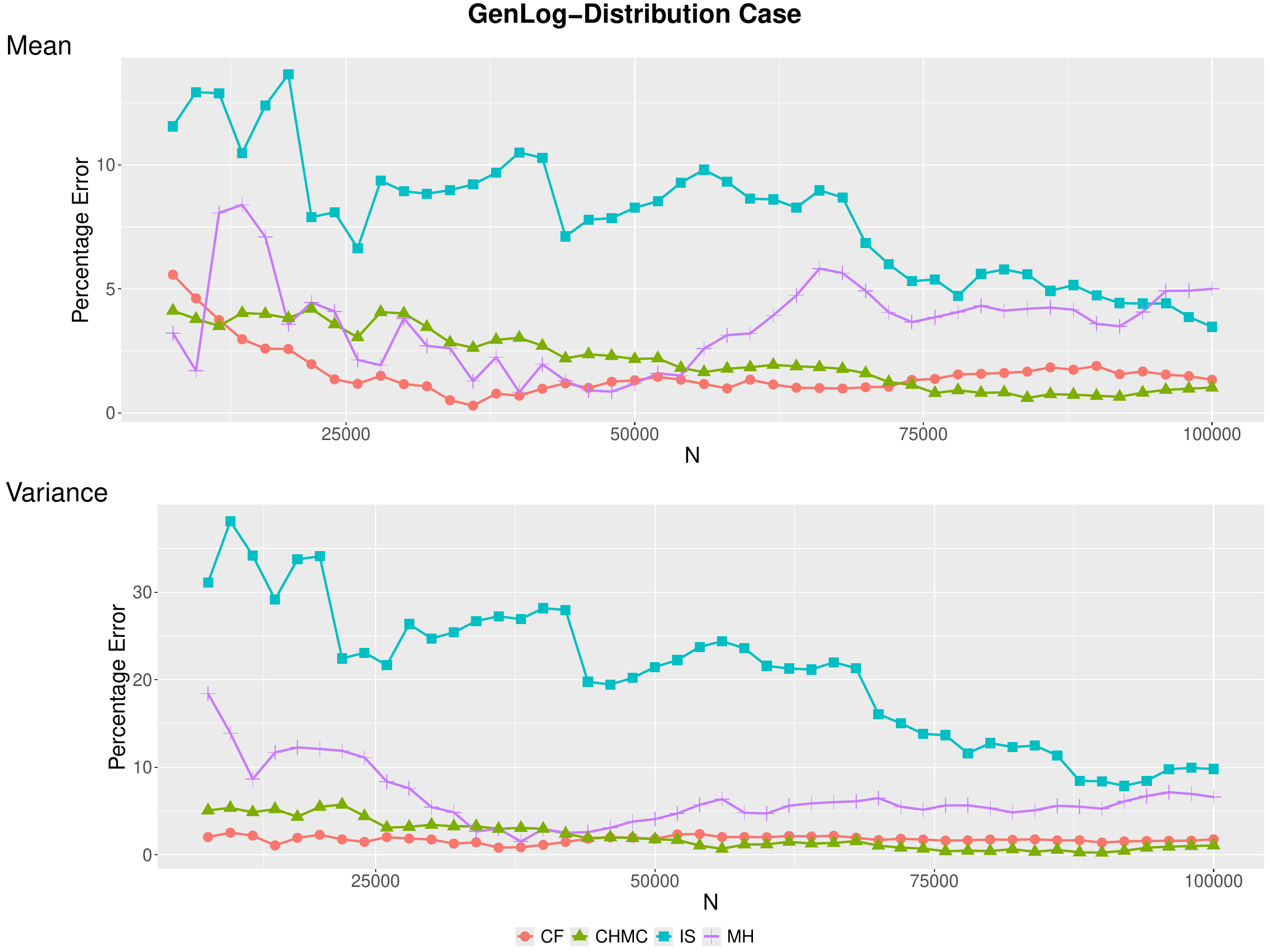}
        \label{fig:error_genlog}
    \end{subfigure}
    \begin{subfigure}{0.76\linewidth}
        \centering
        \caption{}
        \includegraphics[width=\linewidth]{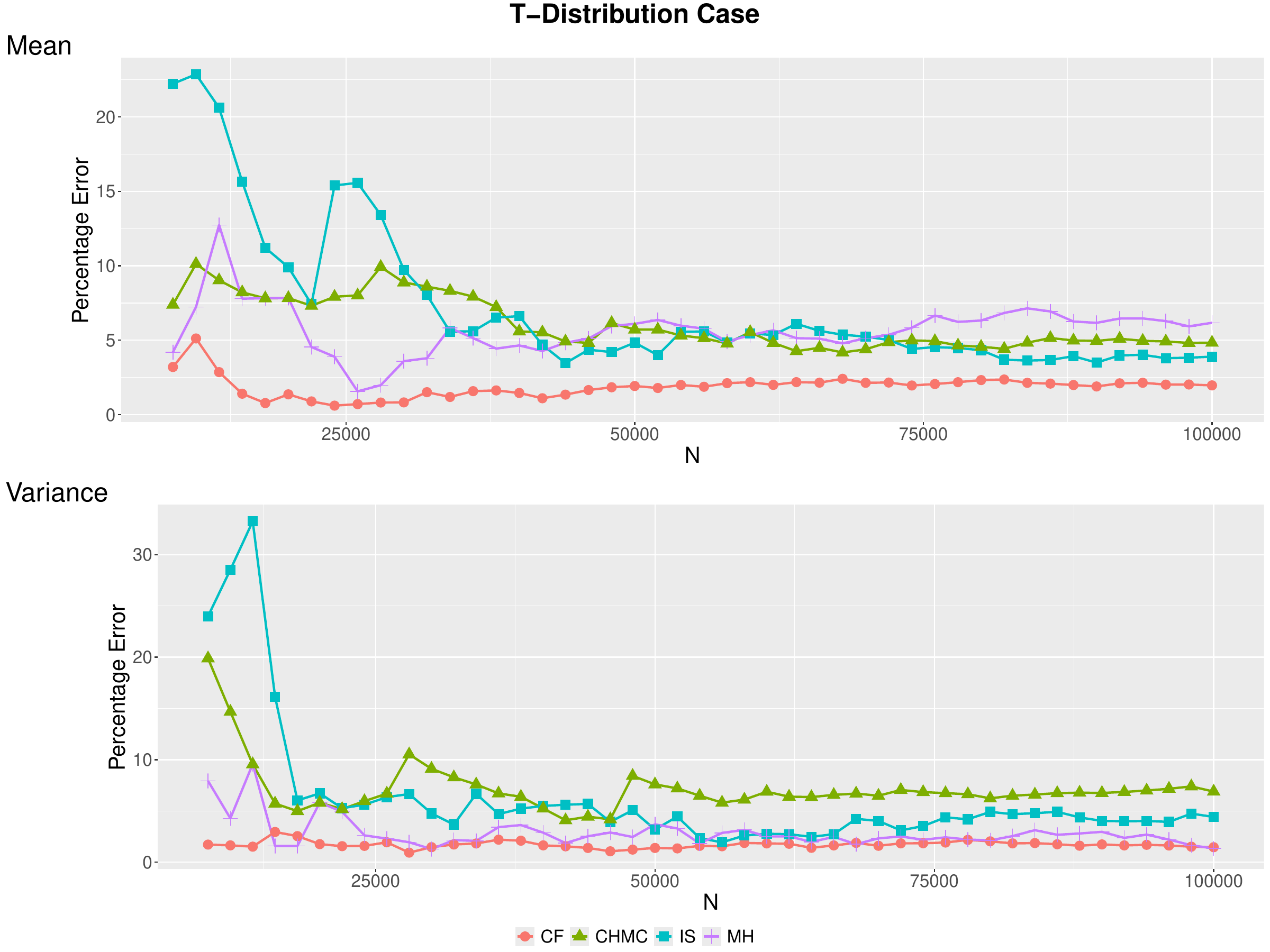}
        \label{fig:error_t}
    \end{subfigure}
    \caption{The percentage error curve varying with sample size $N$ is plotted for the mean and variance estimations (a) on the Generalized Logistic case and (b) on the T-distribution case.}
    \label{fig:error_compare}
\end{figure}

\subsubsection*{Case: Generalized Logistic Distribution}
Fig. \ref{fig:error_genlog} considers the Generalized Logistic distributions as described in \cite{halliwell2018log}, see Appendix \ref{appx:genlog} for more detail.
\begin{definition}[Generalized Logistic Distribution]
    Let $\alpha,\beta,\gamma>0$, $C\in\mbb{R}$, and $Y_1\sim\Gamma(\alpha,1)$, $Y_2\sim\Gamma(\beta,1)$.
    Let 
    $$X:=\gamma \log\left(\frac{Y_1}{Y_2}\right) + C,$$
    then $X$ is said to follow a Generalized Logistic distribution with parameter $(\alpha,\beta,\gamma,C)$, denoted $X\sim \GenLog(\alpha,\beta,\gamma,C)$.
\end{definition}
The Generalized Logistic distribution can model both positively and negatively skewed data.
The distribution always has a heavier tail than the normal distribution.

The setup assumes that 
$X_1\sim\GenLog(3,0.4,2,-5)$, $X_2\sim\GenLog(3,0.4,1,-2)$ and $X_3\sim\GenLog(3,0.4,1,-3)$.
The sum constraint is 
$$
\mcal{H}:=\{(X_1,X_2,X_3)\in\mbb{R}^3 : X_1+X_2+X_3=10\}.
$$
The random variables are positively skewed and leptokurtic while the mean sum is about $-6$ away from the sum constraint.
The performance is evaluated using the total percentage error given in (\ref{eq:percentage_error_total}) and error is plotted against the number of samples in Fig. \ref{fig:error_genlog}.
We can see that the CHMC algorithm and our CF algorithm quickly converged (to below $5\%$ error) while the other two algorithms would take a bit longer.

\subsubsection*{Case: Student's T-Distribution}
Simulation results in Fig. \ref{fig:error_t} are conducted on mean-shifted T-distribution, to test against a heavier-tailed distribution.
Use $X\sim T_{\nu}(\mu)$ to denote the random variable $X:=Y+\mu$, where $Y\sim T(\nu)$ is a standard T-distribution with degrees of freedom $\nu$.
The setup assumes that $X_1\sim T_{2.01}(-2)$, $X_2\sim T_{2.01}(3)$ and $X_3\sim T_{2.01}(5)$.
Again, the sum constraint is set to be 
$$
\mcal{H}:=\{(X_1,X_2,X_3)\in\mbb{R}^3 : X_1+X_2+X_3=10\}.
$$
The percentage error is computed using (\ref{eq:percentage_error_total}) and plotted against the number of samples in Fig. \ref{fig:error_t}.
Similar to the generalized logistic case, the CF estimates with faster convergence and are more stable over the run.

\begin{remark}
Both the random-walk MH sampler and the CHMC sampler require the user to manually decide some parameters, e.g., step-size, mass matrix, etc. 
The choice of these parameters directly links with the convergence and estimation accuracy of the sampler but are usually not trivial to choose.
In contrast, the CF sampler only has one tuning parameter $T$ which only affects the efficiency but not the accuracy.
\end{remark}

\subsubsection*{Case: Non-linear Constraint}
In this example, we consider sampling from a product T-distribution constrained on the sample mean and variance given by
$$
    f_{\mcal{H}}(X_1,X_2,X_3) := f_{T}(X_1;0, 0.6, 9) f_{T}(X_2;0,0.6,9) f_{T}(X_3; 0, 4, 3) \indi_{(X_1,X_2,X_3)\in \mcal{H}} 
$$
where $f_{T}(\cdot; \mu,\sigma, \nu)$ is the density function of a non-central T-distribution with mean $\mu$, scale $\sigma$ and degree of freedom $\nu$ given by
$$
f_{T}(x;\mu,\sigma,\nu):= \frac{\Gamma\left(\tfrac{\nu+1}{2}\right)}{\sqrt{\pi\nu}\Gamma(\nu/2)\sigma}\left(1+\frac{(x-\mu)^2}{\sigma^2}\right)^{-\frac{\nu+1}{2}},
$$
and
$$
\mcal{H}:=\left\{(x_1,x_2,x_3): \sum_{i}x_i=0, \frac{1}{3}\sum_{i}x_i^2=8\right\}.
$$
The unconstrained target distribution is designed to be bi-polar where the density in the first two dimensions is concentrated around $0$ but the third dimension is allowed to take a range of values.
Such a density, subject to mean and variance constraints, gives rise to a circle in 3D with at least two modes sitting opposite to each other.
Indeed, looking at Fig. \ref{fig:nonlinear_compare}, we see that the target actually has four modes, divided into two clusters.
Due to the multimodality, CHMC (in Fig. \ref{fig:chmc_non_linear}) with 10000 samples failed to explore the whole space and only produced samples from the lower half of the space.
In contrast, the Constrained Fusion (in Fig. \ref{fig:fusion_non_linear}) with only 600 samples already gives a good representation of the target distribution with all four modes discovered.

\begin{figure}[tb]
\centering
    \begin{subfigure}{0.45\linewidth}
        \centering
        \caption{Constrained Fusion}
        \includegraphics[width=\linewidth]{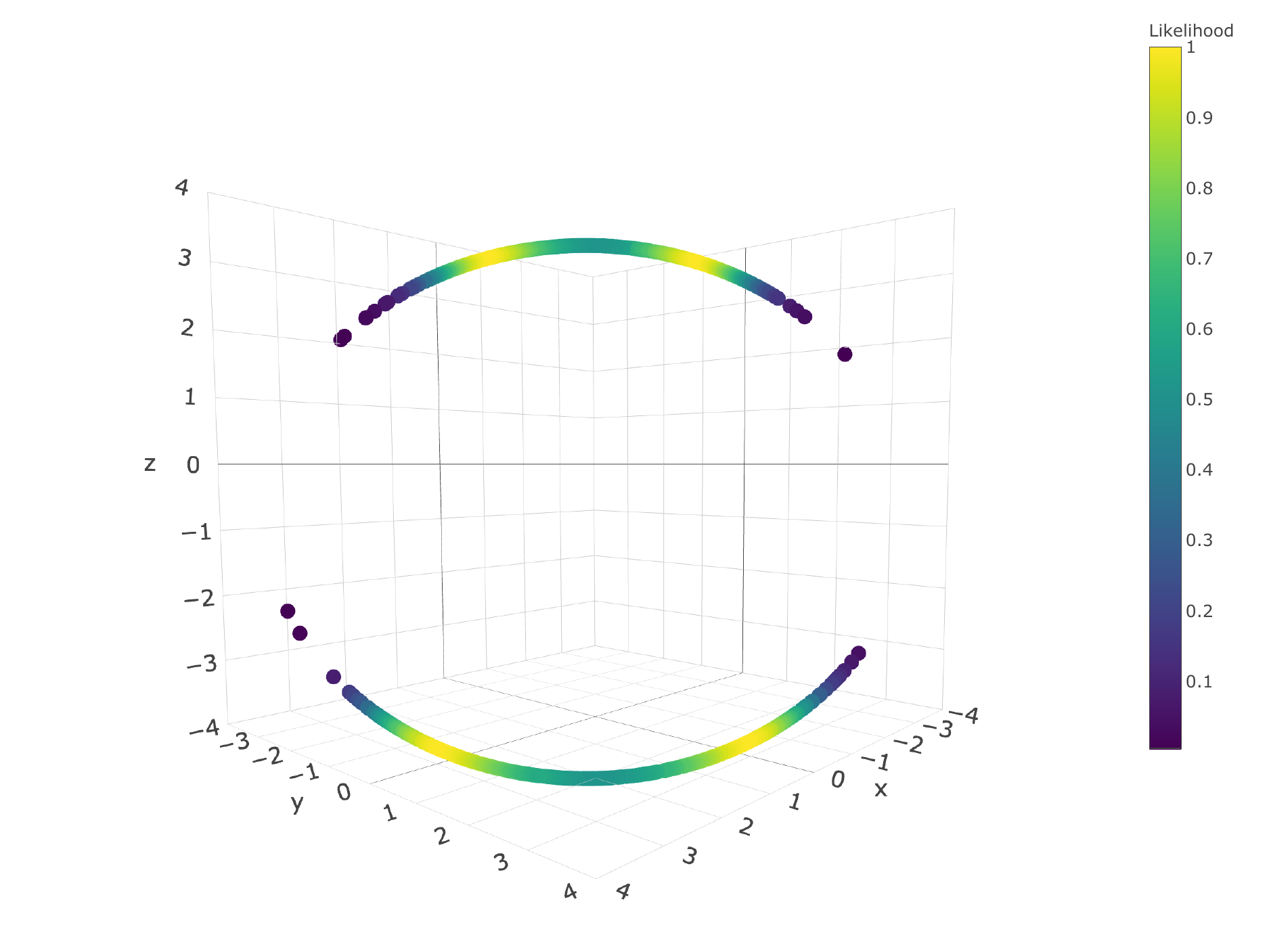}
        \label{fig:fusion_non_linear}
    \end{subfigure}
    \begin{subfigure}{0.45\linewidth}
        \centering
        \caption{CHMC}
        \includegraphics[width=\linewidth]{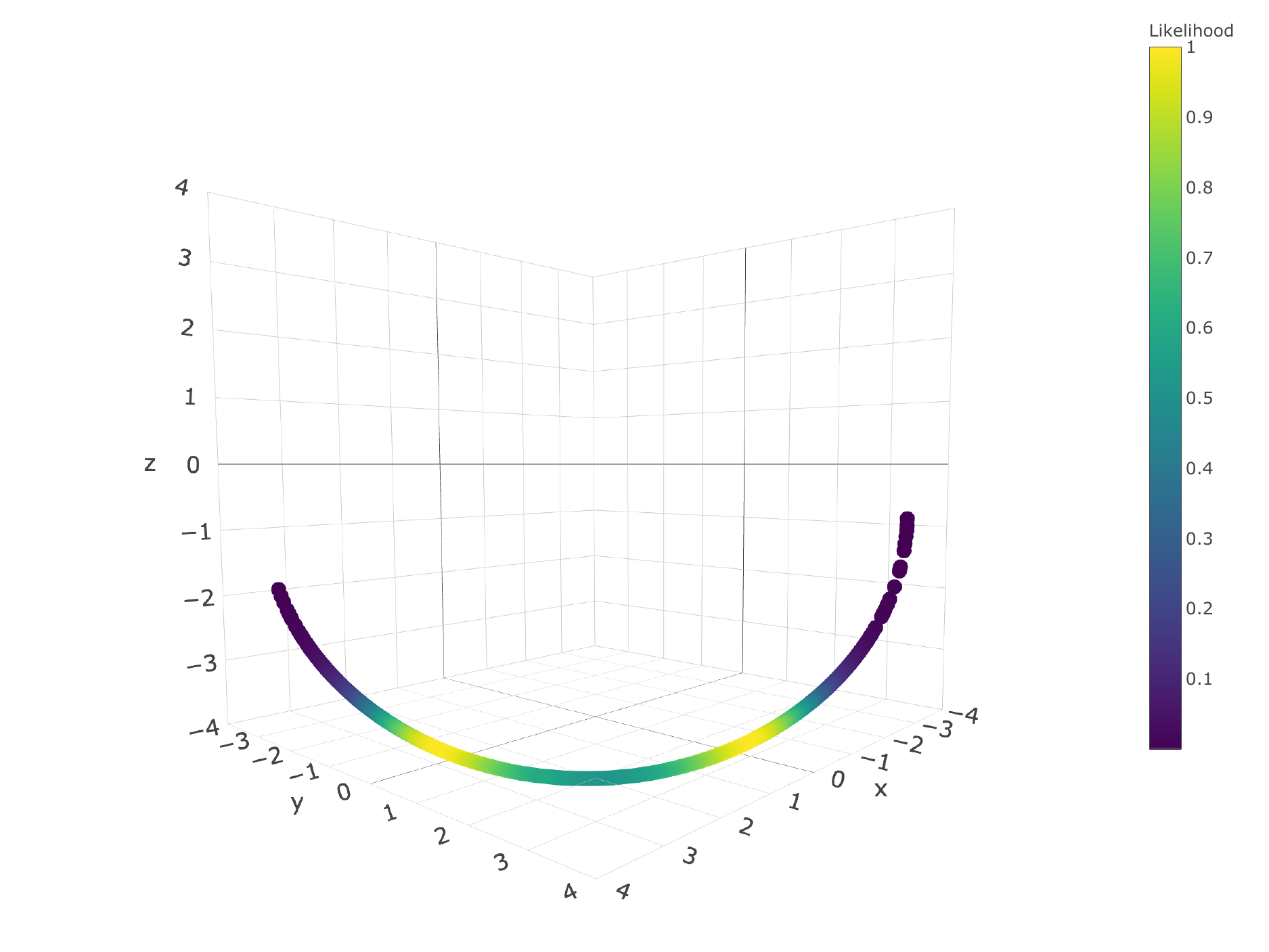}
        \label{fig:chmc_non_linear}
    \end{subfigure}
    \caption{Drawn samples using Constrained Fusion (left) and CHMC (right) plotted, the colour indicates the un-normalized likelihood value of the sample. The four modes of the constrained distribution are shown in yellow in the above plots. The CMHC (right) failed to find all four modes.}
    \label{fig:nonlinear_compare}
\end{figure}

\begin{remark}
    Note that CHMC is used in both cases with the same tuning parameter, standalone or as the backend for constrained fusion to generate uniform points. 
    The difference is that for Fig. \ref{fig:chmc_non_linear}, the sampler runs on the manifold with a non-uniform potential, whereas for Fig. \ref{fig:fusion_non_linear}, the CHMC algorithm only needs to produce samples from a uniform distribution on the constraint and the hard-lifting is all done by the Fusion algorithm.
    In this case, it is much easier to generate samples uniformly from the constraint since the sampler doesn't need to traverse a multimodal terrain.
\end{remark}

\begin{figure}[t]
    \centering
    \caption{Computation time (sec) for each simulation in Section 3.3}
    \includegraphics[width=0.9\linewidth]{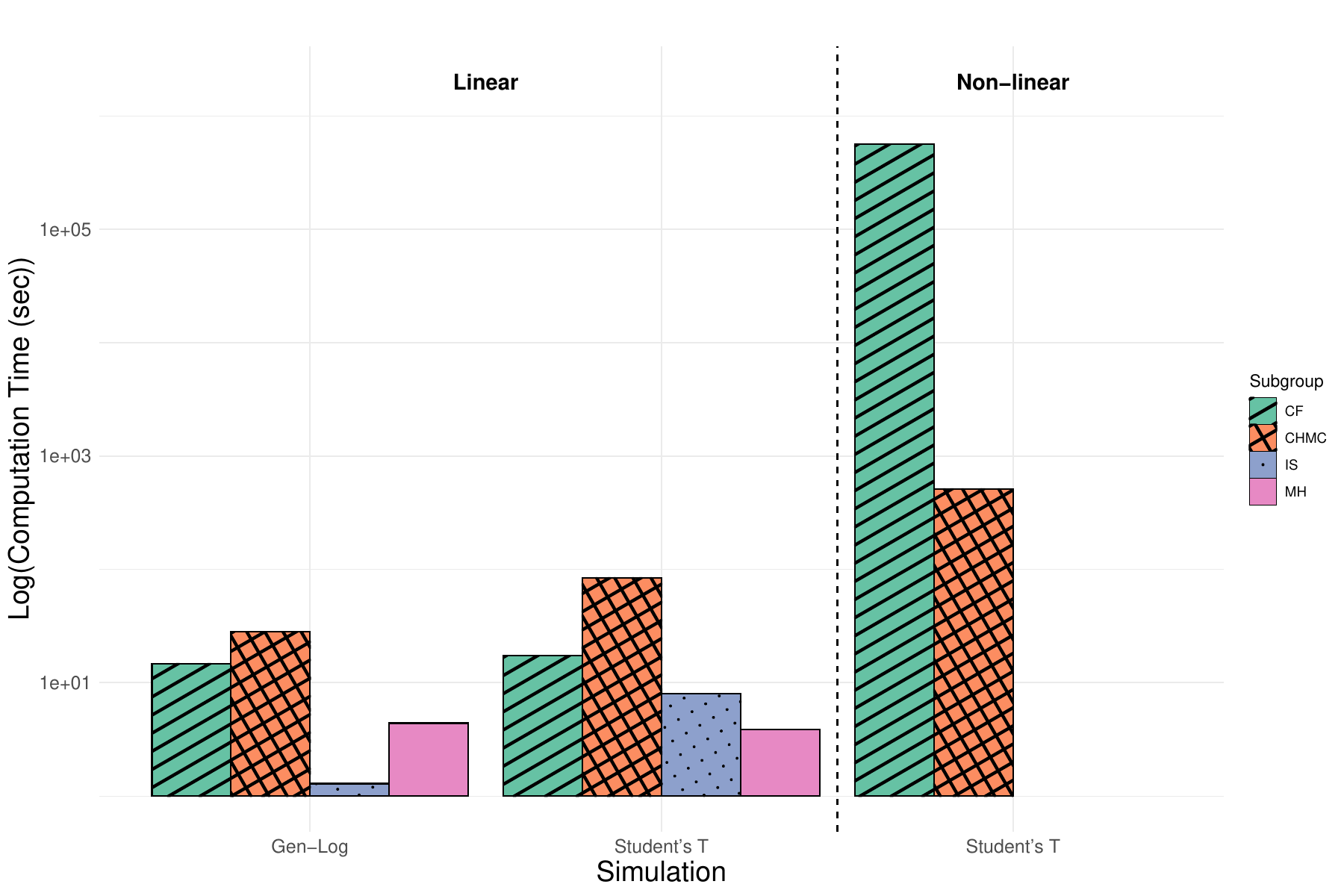}
    \label{fig:time_comparison}
\end{figure}


\newpage

\section{Application to Time Series Imputation}\label{sec:application_framework}
In this section, we focus on problems related to frequency up-scaling or disaggregation of time series models subject to a constraint.
Such problems often arise in consumption modeling, e.g., power consumption, water usage, etc., where the consumption data are recorded for a site at a certain frequency.
However, due to cost, privacy or other considerations, the observer, e.g., the energy provider, might opt to later record at a lower frequency than before, for instance, from 10 times per hour to twice per hour.
It is natural to question if one can disaggregate the later low-frequency readings and recover high-frequency readings from them.
Here, the low-frequency data poses linear constraints in the imputation problem, since the estimated high-frequency consumption should sum to the observed reading {at a lower frequency}.
The Constrained Fusion sampling algorithm presented above can be applied to simulate high-frequency readings that satisfy the constraint while preserving the statistical properties of the model. This section introduces the basic time series model in Section \ref{sec:model} and then we study a typical scenario in time series disaggregation with respect to either linear or non-linear constraints in Section \ref{sec:study1}. We showcase two other applications of constrained simulation in Appendix \ref{appx:application}.

\subsection{Basic Model}\label{sec:model}
Consider two parallel time series $\{S_{t}\}_{t=1,2,3,\dots}$ and $\{Y_{t}^{(i)}\}_{t=1,2,3,\dots}^{i=1,2,\dots,m}$ where $S_t$ is the low-frequency data and $Y_{t}^{(i)}$ is recorded $m$ times as frequently as $S_t$.
Temporally, the recordings are taken in this order:
$$
\cdots, \underbrace{(S_{t-1}, Y_{t-1}^{(m)})}_{\text{simultaneous}}, Y_t^{(1)}, Y_t^{(2)},\cdots, Y_t^{(m-1)}, (S_t, Y_t^{(m)}), Y_{t+1}^{(1)},\cdots,
$$
In particular, the recordings are such that $\sum_{i=1}^m Y_t^{(i)}=S_t \ \forall t$, since each time series records total consumption within the time period considered.
For the rest of this section, rather than taking $Y^{(i)}_t$ as a single time series, we would treat $\{Y_t^{(1)}\}_{t=1,2,\dots}$ to $\{Y_t^{(m)}\}_{t=1,2,\dots}$ as independent and model them separately.
\begin{remark}
The time series $Y_t^{(i)}$, $i=1,\dots,m$ are modelled independently to fit with the form of (\ref{eq:cons1}) where each $Y_t^{(i)}$ corresponds to a separate density.
However, it is possible to introduce dependency into $(Y_t^{(1)},\dots,Y_t^{(m)})$ when the dependency cannot be disregarded.
For instance, we considered a copula structure on the random variables in Example \ref{sec:study2} in the Appendix.
\end{remark}


\subsubsection{Autoregressive Model}
In general, we will use the Autoregressive (AR) model to fit each high-frequency time series $Y_t^{(i)}$, $i=1,\dots,m$.
In the same experiment, the $m$ AR models will share the same model structure, e.g., model order $K$ and choice of additional regressors, but can have distinct model parameters.
To estimate the parameters, we use the full resolution measurements $Y_t^{(i)}$ are usually available in practice for a certain past period of $t\in I$, which will form the training set later denoted as $(\hat{Y}_t^{(i)})$.

Referring back to Fig. \ref{fig:error_qq}, it is clear that a vanilla AR model $Y_{t}^{(i)}$ is not suitable for this data, since the model residuals are clearly not Gaussian.
In addition, energy consumption data are non-negative and positively skewed, so one may need to turn towards some non-Gaussian distributions to model the error term $\epsilon_t^{(i)}$. 
Here we use Generalized Logistic distribution \citep{halliwell2018log} which accommodates positive skewness.
Now, the AR model (of order K) with a Generalized Logistic link is given by, for $i=1,\dots,m$,
\begin{eqnarray}\label{eq:ARmodel}
 Y_{t}^{(i)} \sim  \GenLog\left(\alpha^{(i)}, \beta^{(i)}, \gamma^{(i)}, C^{(i)} + \mu_t^{(i)}\right), \;\;\;\;\;\;
\mu_{t}^{(i)} = \sum_{r=1}^K \Phi_r^{(i)} Y_{t-r}^{(i)} + \bmXi_t^{\top} \bm{\psi}^{(i)}
\end{eqnarray}
where $\bmXi_t$ are additional covariates and the parameters $\alpha^{(i)}, \beta^{(i)}, \gamma^{(i)}, C^{(i)}, \bm{\Phi}_r^{(i)}, \bm{\psi}^{(i)}$ are unknown.

We take a straightforward approach to fitting the model parameters.
Firstly, the regression parameters $\bm{\Phi}^{(i)}$ and $\bm{\psi}^{(i)}$ are fitted using the least-squares method, disregarding the distribution of $Y_t^{(i)}$.
With $\bm{\Phi}^{(i)}$ and $\bm{\psi}^{(i)}$ held fixed, the residuals are used to fit the parameters for the generalized logistic distribution by moment (cumulant) fitting.

In detail, let $n$ denote the total number of low frequency time points used in training the model under the 
training set of observed data $(\hat{Y}_t^{(i)})_{t\in\{1,\dots,n\}, i\in\{1,\dots,m\}}$.
Then, we first fit $\bm{\Phi}^{(i)}=\left(\Phi_{1}^{(i)},\dots,\Phi_{K}^{(i)}\right)$, and $\bm{\psi}^{(i)}$ by minimising the empirical squared loss, i.e., for each $i\in\{1,\dots,m\}$, solve
\begin{equation*}
    \argmin_{\bm{\Phi}^{{(i)}},\bm{\psi}^{(i)}} \sum_{t=1}^{n} \left[ \hat{Y}_{t}^{(i)}-\hat{\mu}_{t}^{(i)}\right]^2, \quad \hat{\mu}_{t}^{(i)} := \sum_{r=1}^K \Phi_{r}^{(i)} \hat{Y}_{t-r}^{(i)} + \bmXi_{t}^{\top} \bm{\psi}^{(i)}.
\end{equation*}
Then 
by computing $\hat{\mu}_{t}^{(i)}$ using the fitted parameters $\hat{\bm{\Phi}}^{(i)},\psi^{(i)}$, we fit a Generalised Logistic distribution with parameters $(\alpha^{(i)},\beta^{(i)},\gamma^{(i)},C^{(i)})$ for each $i$ using the residues $\hat{Y}_t^{(i)}-\hat{\mu}_t^{(i)}$.
The parameters $\alpha^{(i)},\beta^{(i)},\gamma^{(i)}$ are fitted by matching
the variance, skewness and excess kurtosis of the residues (see Appendix \ref{appx:genlog} for more detail).
Finally, the parameter $C^{(i)}$ is computed such that $\GenLog(\alpha^{(i)},\beta^{(i)},\gamma^{(i)},C^{(i)})$ has mean $0$.

\subsubsection{Constrained Imputation}
Given the parameter estimates $\left(\hat{\bm{\Phi}}^{(1:m)}, \hat{\bm{\psi}}^{(1:m)}, \hat{\alpha}^{(1:m)}, \hat{\beta}^{(1:m)},\hat{\gamma}^{(1:m)},\hat{C}^{(1:m)}\right)$, which have been fitted using an initial set of high-frequency data, we now focus on utilizing this fitted model to impute later missing high-frequency data that is constrained by low-frequency observations.
Let $Y_t^{(1:m)}$, $t=1,2,\dots, \mcal{T}$ be the high-resolution energy consumption we want to impute with respect to low-resolution time series data $\{S_t\}_{t=1,\dots,\mcal{T}}$ which is observed. Note that the time indices here are \emph{not} the same as in the previous subsection, 
i.e., $t$ also counts from $1$ for the test set in the sense that the training set $\hat{Y}_t^{(i)}$ is now disregarded after fitting the parameters.
By the construction of the AR model, every time point $\bmY_t$ depends only on its previous times $\bmY_{t-1},\bmY_{t-2},\dots$, and the constraint $\bmY_t\in\mcal{H}_t$.
To illustrate the workflow, we will assume a sum constraint, $\sum_{i=1}^m Y_t^{(i)}=S_t$, for now, but the process is also applicable to non-linear constraints.
Then given the explanatory variables $\bmXi_t$ at time $t$, the density of $\bm Y_t$ conditioned on the past is
\begin{equation}
 f(\bmY_t|\bmY_{t-1:t-K}) = \prod_{i=1}^m f\left(Y_{t}^{(i)}|Y_{t-K:t-1}^{(i)}\right) \indi_{\sum_{i=1}^m Y_t^{(i)}=S_t} \label{eq:pred_like}
\end{equation}
where $Y_t^{(i)}\sim \GenLog(\alpha^{(i)}, \beta^{(i)},\gamma^{(i)}, C^{(i)}+\mu_t^{(i)})$, $\mu_t^{(i)}=\sum_{r=1}^K\Phi_r^{(i)}Y_{t-r}^{(i)}+\Xi_t^{\top} \bm{\psi}^{(i)}$
As we can see, for the simulation of each $\bmY_t$, the target distribution (\ref{eq:pred_like}) follows the shape of (\ref{eq:cons1}) where each factored density $f\left(Y_{t}^{(i)}|Y_{t-K:t-1}^{(i)}\right)$ can be easily simulated and the product is subject to a linear constraint.
Thus Algorithm \ref{alg:CMCF-1} can be applied directly to this simulation problem by applying it sequentially in temporal order (see Fig \ref{fig:time_series_imputation})
$$
\bmY_{1}^{(1:m)} \rightarrow \bmY_{2}^{(1:m)} \rightarrow \bmY_{3}^{(1:m)} \rightarrow \cdots \rightarrow \bmY_{\mcal{T}}^{(1:m)}.
$$
To simulate time point $\bmY_t$, one would first draw a sample $\bmx$ from (\ref{eq:pred_like}) without the constraint as the starting points of the Brownian bridge.
Then sample $\bmy\sim \mcal{N}(\bmx, T \mathbf I_k)$ subject to sum constraint $\|\bmy\|_1 = S_{t}$ which can be vectorized as $\bm{A}\bmy=S_t$, where $\bm A$ is a row matrix of ones
\footnote{Note this step is different for non-linear case.}.
Then the sampled particle goes through two rejection steps as described in Alg. \ref{alg:CMCF-1}.
In the case where nonlinear constraints are used, we use Alg. \ref{alg:CMCF-2} with CHMC \citep{lelievre2019hybrid} as the base constrained uniform sampler and follow Alg. \ref{alg:CMCF-2}.

\begin{figure}
    \centering
    \includegraphics[width=0.7\linewidth]{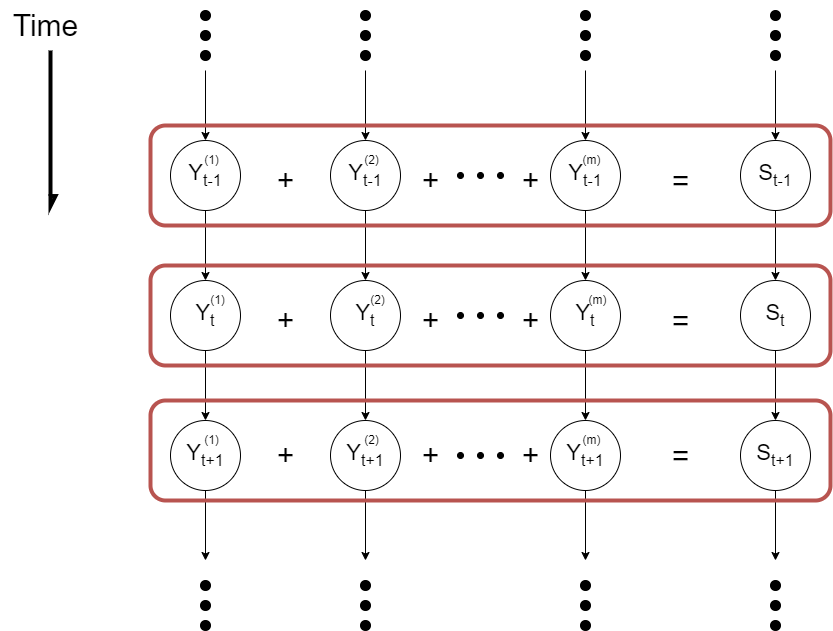}
    \caption{Constrained Imputation for Time series.}
    \label{fig:time_series_imputation}
\end{figure}

\subsection{Study 1: Day-readings Disaggregation} \label{sec:study1}
In this section, we consider a problem that electricity companies may encounter.  
Modern time-of-use meters are capable of reporting electricity usage for three periods per day (peak time, off-peak time, and midnight). 
However, these high-resolution meter readings may be missing due to unreliability or delay, etc., and on such days we only obtain one reading per day as for earlier generation meters.
We are interested in recovering the high-resolution energy consumption in each time period from the low-resolution aggregated consumption.

We use a subset of the data published in the Irish Smart Meter Trial \citep{CER1,CER2}, which includes half-hourly energy consumption readings of individual residential smart meters from July 2009 to the end of 2010 (in total 535 days) and corresponding questionnaire data of residential customers including the social-economic data of occupants.
From the original dataset, we randomly extracted 31 households that have no missing entries in the columns of survey responses of our interest.
These responses are used as covariates for the AR model, see Appendix \ref{appx:covariates} for the list of variables used.
{We processed the data to consider the problem of disaggregating the daily readings into thrice daily readings.}

\subsubsection*{Parameter Estimation}
In this particular problem, the aggregated readings form a time series with a unit of day and the goal is to impute a time series with three times the frequency, i.e., three readings per day.
Therefore, we will need three separate AR models to impute the time series.
Recall that the equation for each AR model is given by
\begin{equation}\label{eq:ARmodel1}
 Y_{j,t}^{(i)} \sim  \GenLog\left(\alpha^{(i)}, \beta^{(i)}, \gamma^{(i)}, C^{(i)}+\mu_{j,t}^{(i)}\right), \;\;\;\;\;\;
\mu_{j,t}^{(i)} = \sum_{r=1}^K \Phi_r^{(i)} Y_{j,t-r}^{(i)} + \bmXi_j^{\top} \bm{\psi}^{(i)}
\end{equation}
where $i\in\{1,2,3\}$ is the indexing for the three separate time series.

In fitting the model, the data may come from multiple customers and we use an additional subscript $j$ to denote the data from customer $j$ in equation (\ref{eq:ARmodel1}).
However, the parameters $\alpha^{(i)},\beta^{(i)},\gamma^{(i)}, C^{(i)},\bm{\Phi}^{(i)}, \bm{\psi}^{(i)}, \,i=1,2,3$ are assumed to be the same for all customers.
The model order $K$ is chosen to be $7$ as people tend to have their regular activities repeated weekly.

Entries from questionnaire data, including the number of adults/children in the household, number of bedrooms, number of large electrical appliances, etc., are used as additional covariates $\bmXi_j$ in this model.
The subscript $t$ has been dropped since the survey data is time-independent.
The high-frequency data of the extracted 30 households across the first 20 days are used to fit the model parameters.
The remaining one household's data is used for simulation.
As mentioned before, $\bm{\Phi}$ and $\bm{\psi}$ are fitted through least-squares estimation and $\alpha^{(i)},\beta^{(i)},\gamma^{(i)}, C^{(i)}$ from moment (cumulant) fitting.

\subsubsection*{Result}
\begin{figure}[t!]
 \centering
 \begin{subfigure}[t]{\linewidth}
    \centering
    \caption{Constrained on Mean}
    \includegraphics[width=0.8\linewidth]{./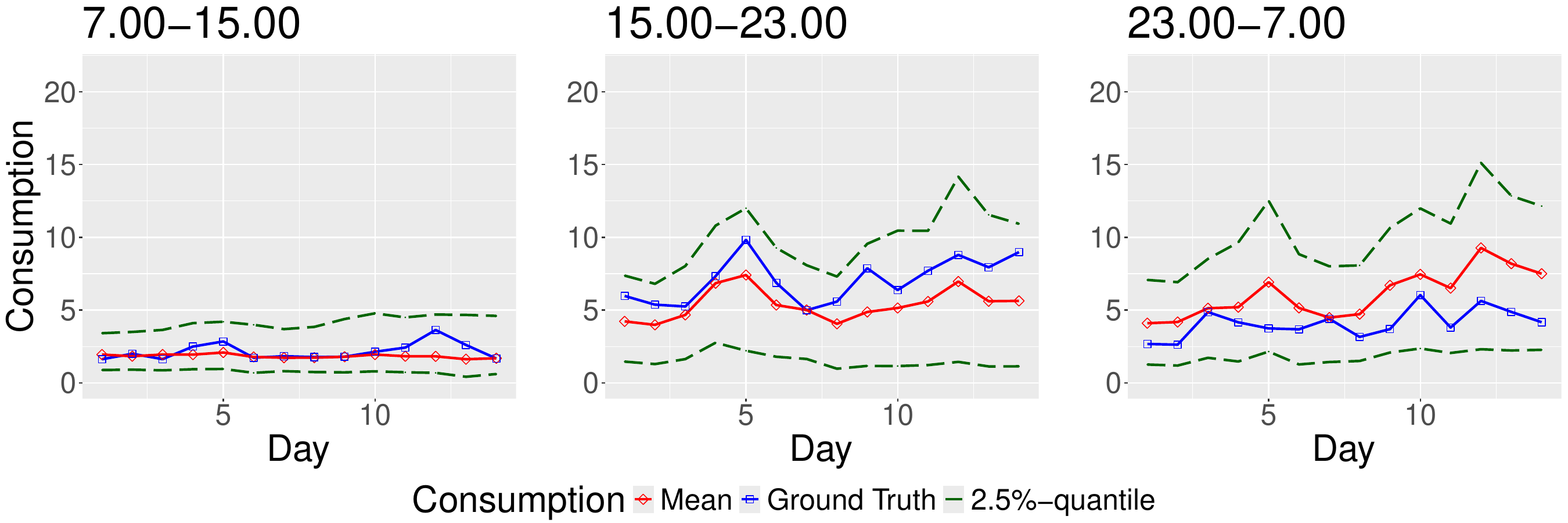}
     \label{fig:cons}
 \end{subfigure}

 \begin{subfigure}[t]{\linewidth}
    \centering
    \caption{Constrained on Mean and Variance}
    \includegraphics[width=0.8\linewidth]{./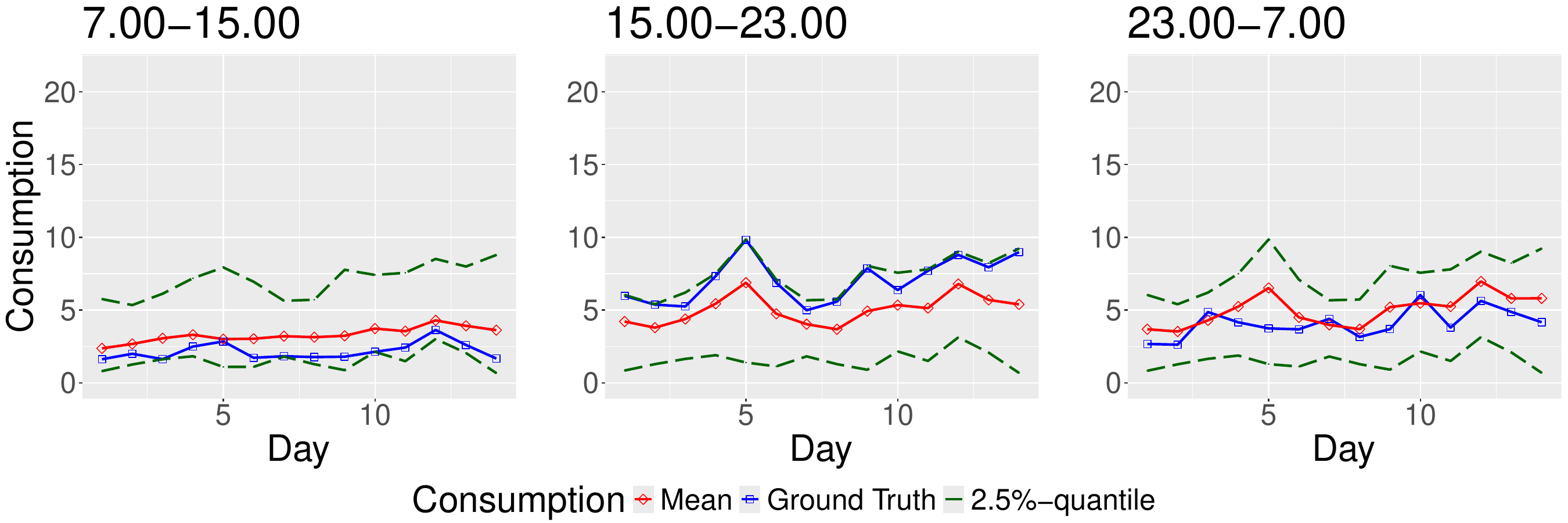}
     \label{fig:cons_var}
 \end{subfigure}
 
 \begin{subfigure}[t]{\linewidth}
    \centering
    \caption{Unconstrained simulation}
    \includegraphics[width=0.8\linewidth]{./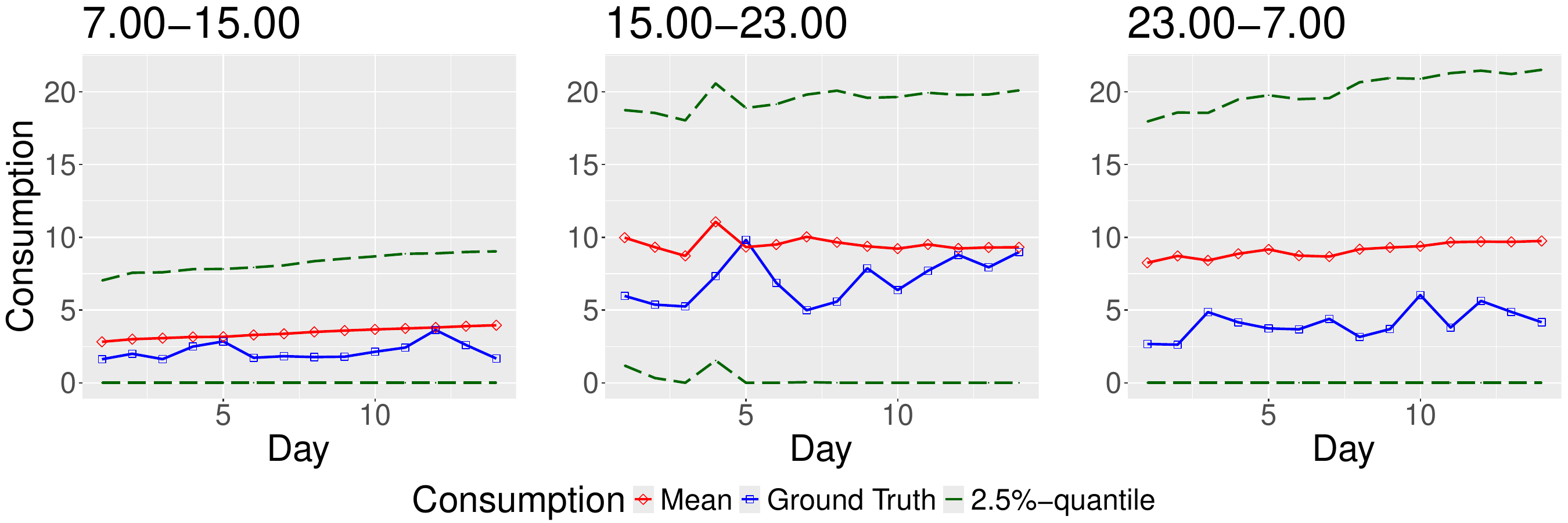}
     \label{fig:uncons}
 \end{subfigure}

 \caption{Energy consumption imputation and error with and without constraints}
 \label{fig:predictions}
\end{figure}

After estimating the set of parameters $\left(\hat{\bm{\Phi}}^{(1:3)},\hat{\bm{\psi}}^{(1:3)},\alpha^{(1:3)},\beta^{(1:3)},\gamma^{(1:3)}, C^{(1:3)}\right)$, we have three AR models of order 7 to estimate separately energy consumption in the three time periods of the day.
Taking the high-frequency data for the first 7 days in the dataset as the start of the time series, imputation is conducted on the remaining one household to up-scale the daily readings of the subsequent 14 days to a time series of length 52.
By implementing Algorithm \ref{alg:CMCF-1}, the imputation is done in temporal order progressing in time $t$, with $10^4$ samples of $Y_t^{(1:3)}$ drawn in each step to compute an estimate for the sample mean, sample variance and, the 95\% confidence interval.
Results are presented in Fig. \ref{fig:predictions}.
Three simulations are done with the same underlying model:
\begin{enumerate}
    \item Energy consumption imputed with respect to sum constraint 
    $$
    \mcal{H}_t:=\left\{Y_t^{(1:m)}:\sum_{i=1}^m Y_t^{(i)}=S_t\right\},
    $$
    shown in Fig. \ref{fig:cons};
    \item Energy consumption imputed with respect to sum and variance constraint
    $$
    \mcal{H}_t:=\left\{Y_t^{(1:m)}:\sum_{i=1}^m Y_t^{(i)}=S_t, \sum_{i=1}^m (Y_t^{(i)}-S_t)^2=\Sigma_t\right\},
    $$
    shown in Fig. \ref{fig:cons_var};
    \item Energy consumption estimated without any constraint, shown in Fig. \ref{fig:uncons}.
\end{enumerate}
In these figures, the sample mean, ground truth, and $95\%$ CI are plotted in red (diamond), blue (square) and green (dashed) lines respectively.

\noindent \textbf{Linear Constraint:}
Comparing the results under sum constraint (Fig. \ref{fig:cons}) and without constraint (Fig. \ref{fig:uncons}) we see directly the significance of injecting the information from the sum constraint.
The sample mean trajectory from the constrained model shows similar fluctuation as the real trajectory, unlike the unconstrained model where the estimated mean is mostly flat.
We also see improvements in the estimated $95\%$ CI in the constrained case, as the $95\%$-CI is always increasing with time for the unconstrained case.
This is reasonable since without extra information the uncertainty could only increase as extra uncertainty is injected every time step.
On the other hand, in Fig. \ref{fig:cons}, the constrained $95\%$-CI does not seem to increase with time.
Instead, it increases with the true value which is reasonable as the variance of Gamma distribution is proportional to its mean squared.

\noindent \textbf{Non-Linear Constraint:}
Comparing the sum constraint result (Fig. \ref{fig:cons}) and the sum-variance constraint (Fig. \ref{fig:cons_var}), we see the uncertainty estimation is tighter for the second and third time intervals where the electricity consumption is higher, but a more conservative uncertainty estimation when the consumption is low.
This is possibly due to the mean and variance constraint always drawing from a sphere.

One observation in the non-linear constraint case is that the ground truth lies very close to the quantile points at some time indices.
This is due to the fact that the non-linear constraint which restricts both the mean and variance of the imputed values is essentially restricting the sample space into a bounded set.
Thus when the ground truth value (of one coordinate) is close to the upper boundary of the constraint set, the $97.5\%$-quantile line is essentially as informative as the upper bound, and hence the coincidence.
The same applies when the ground truth is near the lower boundary of the constraint set.

\section{Discussion}\label{sec:discussion}
In this paper, we presented a novel sampling approach for constrained problems that could deal with a wide range of density functions and constraints, as long as the unconstrained density can be factored into a product of density functions.
The proposed method has been demonstrated, in the simulation studies, to have a better estimation efficiency than the naive counterparts.

For example, we demonstrated that in time series datasets that record accumulated values, e.g., energy consumption, it is natural to apply sum constraints if the data is available.
The resulting joint distribution will take the form of (\ref{eq:cons1}) as a product density subject to some constraint on the summary statistics. 
In those situations, our algorithm can be applied to obtain samples and estimates for further analysis.
We have shown the effectiveness of linearly constrained models in dealing with disaggregating time series data in Section \ref{sec:study1} with two other examples in Appendix \ref{appx:application}.
These situations have their significance in the domain of energy supply and retail, where the ability to impute high-resolution time series from a low-resolution time series could help the supplier to more accurately model the characteristics of energy consumption and potentially reduce cost in maintaining the monitoring system.
We also applied the algorithm to non-linear constraints where summary statistics of higher order are also obtained and used.

We have analyzed theoretically, in the Gaussian case, that adding a sum constraint to an unconstrained model could in many cases improve the MSE. (See Appendix \ref{appx:MSE}.)
One must be aware that adding the constraint does not make the mean estimation closer to the true value in every dimension.
The main benefit of applying the sum constraint is to reduce uncertainty in the model and such reduction in uncertainty outweighs the marginal increase in bias in cases when the mean estimation is relatively accurate but the uncertainty is high.
Through simulation, we have also shown the effect of applying such sum constraint to some other non-Gaussian models and the theoretical result carried over quite well.

When sampling from a constrained distribution, both the landscape of the distribution and the underlying constraint may pose challenges to the sampling algorithm.
Our proposed algorithm effectively decouples the distribution and the constraint, where one conducts rejection sampling on the distribution but only requires uniform samples from the constraint.
When the landscape of the constraint is easy to traverse, while the distribution is hard to sample, MCMC-based algorithm might struggle to explore the whole space, but the constrained fusion algorithm would not be hindered.
However, due to its rejection sampling nature, the Constrained Fusion algorithm tends to suffer from high rejection rates when the constraint does not lie close enough to the typical set of the full target distribution.
This may be solved by approximating the rejection stages by multiple stages of sequential Monte Carlo, in analogy to the Bayesian Fusion algorithm \citep{dai2023bayesian} which is a sequential adaptation of the Monte Carlo Fusion algorithm \citep{dai2019monte}. This is left to future work.

The constrained imputation framework is applicable to various other problems in the same field with very similar theoretical setups, for example, data imputation \citep{Peppanen.2016} for handling missing smart meter data or non-intrusive disaggregation \citep{zhao2020non} which deals with disaggregation at appliance level, which all admit a least one sum constraint in the problem.
Since readings might also be aggregated over various houses due to the nature of hierarchical energy systems \citep{wang2020regional}, data privacy, or other reasons \citep{Poursharif.2017}, we may also find sum constraints imposed among households at different resolution levels.
These problems, with more unknowns and constraints, would potentially require more efficient simulation methods which are left as future work.

\newpage
{\noindent \LARGE \bf Appendix \vspace{3em}}
\begin{appendix}

{\bf \noindent
The equations in the Appendix are labeled using $(letter.number)$ format (e.g., (A.1)), equations that are labeled with number only are references to the equations in the main manuscript.}

\renewcommand{\theequation}{\thesection.\arabic{equation}}

\section{Measure on Smooth Manifolds} \label{appx:manifold}
In order to extend the result from (\ref{eq:theorem1}) to the case where $(\bmy^{(1)},\dots,\bmy^{(m)})\in \mcal{H}$ for some constraint set $\mcal{H}\subset \mbb{R}^{md}$, one needs to formalize the notation of measure (or integration) on the constraint set $\mcal{H}$.

\begin{definition}
A (second countable, Hausdorff) topological space $\bm{M}$ is an $n$-dimensional \textbf{smooth manifold} if there exists a family of local coordinates $\mcal{A}=\{(U_i,\varphi_i):i\in I\}$ called \textbf{atlas}, where
\vspace{-1.5em}
\begin{itemize}
    \item $\{U_i\}$ is an \textit{open cover} for $\bm{M}$, i.e., $\cup_{i\in I} U_i = \bm{M}$;
    \item $\varphi_i: U_i\rightarrow \mbb{R}^{n}$ is a \textit{continuous bijection} onto its image $\varphi_i(U_i)\subseteq \mbb{R}^n$ with a \textit{continuous inverse};
    \item whenever $U_i\cap U_j\neq \emptyset$, the transition map $\varphi_j\circ \varphi_i^{-1}:\varphi_i(U_i\cap U_j)\rightarrow \varphi_j(U_i\cap U_j)$ is a \textit{smooth bijeciton} with \textit{smooth inverse}.
\end{itemize}
\vspace{-1.5em}
A pair of $(U_i,\phi_i)\in\mcal{A}$ is called a \textbf{chart}.
If there exists an atlas such that $\forall i,j\in I$, $\det(\dd(\varphi_j\circ\varphi_i^{-1})) >0$, then $\bm{M}$ is an orientable manifold.
\end{definition}
From the definition of manifolds, for any set $A$ and chart $(U_i,\varphi_i)$ such that $A\subset U_i$,  the integral on $A\cap U_i\subset \bm{M}$ may be transformed into an integral on $\varphi_i(A\cap U_i)\subseteq \mbb{R}^{n}$.
Thus we may say a set in $\bm{M}$ is measurable if any charted section of it is measurable in $\mbb{R}^{n}$.
\begin{definition}
Let $(\bm{M},\mcal{A})$ be a manifold. A set $A\subseteq \bm{M}$ is measurable if $\forall p\in A$, there is a chart $(\varphi,U)$ such that $p\in U$ and $\varphi(A\cap U)$ is Lebesgue-measurable.
Let $$\mcal{L}(\bm{M}):=\{A\subset \bm{M}: A \text{ is measurable}\}.$$
\end{definition}
The definition for Lebesgue-measurable sets in $\bm{M}$ is independent of the choice of atlas $\mcal{A}$ (see Remark 1.1, Chapter XII \citep{amann2009analysis}) and the set $\mcal{L}(\bm{M})$ is compatible with the Borel $\sigma$-algebra $\mcal{B}(\bm{M})$, $\mcal{B}(\bm{M})\subseteq \mcal{L}(\bm{M})$ (see Propotision 1.2 of Chapter XII \citep{amann2009analysis}).

\begin{lemma}\label{lemma:volume}
    Let $\vec{\bm{h}}:\mbb{R}^{n+k}\rightarrow\mbb{R}^k, \,\,0<k<md $ be a smooth function such that $\forall \bm{u}\in \vec{\bm{h}}^{-1}(0)$, the derivative $ \mathtt{d}\vec{\bm{h}}_{\bm{u}}:\mbb{R}^{n+k}\,\rightarrow\, \mbb{R}^{k}$ is surjective.
    Then, the set
 \begin{equation}
  \mcal{H}:= \vec{\bm{h}}^{-1}(\bm{0})=\left\{\bm{u}\in\mbb{R}^{n+k}: \vec{\bm{h}}(\bm{u})=\bm{0}\right\}, \nonumber
 \end{equation}
 is a $n$-dimensional manifold. 
 Moreover, there exists a \textbf{canonical} volume form $\Vol_{\mcal{H}}$ defined on $\mcal{H}$ such that the integral of $\Vol_{\mcal{H}}$ over $\mcal{H}$, $\int_\mcal{H} \dd \Vol_{\mcal{H}}$, computes the volume $\mcal{H}$ in the Euclidean space. Thus $\Vol_{\mcal{H}}$ acts as the Lebesgue measure on $\mcal{H}$.
\end{lemma}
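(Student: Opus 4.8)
The plan is to prove the two assertions in turn: first that $\mcal{H}$ is a smooth $n$-dimensional manifold, and then that its Euclidean embedding endows it with a canonical volume measure computing Euclidean volume. The first assertion is precisely the Regular Value Theorem, whose proof rests on the Implicit Function Theorem. Fix $\bm{u}_0\in\mcal{H}$. Since $\dd\vec{\bm{h}}_{\bm{u}_0}$ is surjective, its $k\times(n+k)$ Jacobian has rank $k$, so after permuting coordinates we may split $\mbb{R}^{n+k}=\mbb{R}^n\times\mbb{R}^k$, writing $\bm{u}=(\bm{a},\bm{b})$, in such a way that the $k\times k$ block $\partial\vec{\bm{h}}/\partial\bm{b}$ is invertible at $\bm{u}_0$. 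The Implicit Function Theorem then yields a neighbourhood on which $\vec{\bm{h}}(\bm{a},\bm{b})=\bm{0}$ is equivalent to $\bm{b}=\bm{g}(\bm{a})$ for a smooth $\bm{g}$, so the projection $\bm{u}\mapsto\bm{a}$ furnishes a chart $\varphi$ whose inverse is the graph parametrization $\psi:\bm{a}\mapsto(\bm{a},\bm{g}(\bm{a}))$. Smoothness of the implicitly defined $\bm{g}$ makes all transition maps smooth, so these charts form an atlas exhibiting $\mcal{H}$ as an $n$-dimensional smooth manifold.

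For the second assertion, I would pull the Euclidean inner product on $\mbb{R}^{n+k}$ back along the inclusion $\iota:\mcal{H}\hookrightarrow\mbb{R}^{n+k}$ to obtain the induced Riemannian metric $g$ on $\mcal{H}$. In a graph chart as above the metric is the Gram matrix $g_{ij}(\bm{a})=\langle\partial_i\psi(\bm{a}),\partial_j\psi(\bm{a})\rangle$, i.e. $g=J^\top J$ where $J=\dd\psi$ is the $(n+k)\times n$ Jacobian of the parametrization. I would then define the candidate volume measure locally by $\dd\Vol_{\mcal{H}}=\sqrt{\det g}\,\dd\bm{a}=\sqrt{\det(J^\top J)}\,\dd\bm{a}$, which is exactly the classical surface-area element of the parametrized patch.

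The real work is to show this local prescription patches into a single globally defined measure. Under a change of chart $\bm{a}=\Psi(\tilde{\bm{a}})$ the two parametrizations are related by $\tilde\psi=\psi\circ\Psi$, so their Jacobians satisfy $\tilde J=J\,\dd\Psi$ and hence $\det(\tilde J^\top\tilde J)=\det(J^\top J)\,(\det\dd\Psi)^2$; the resulting factor $|\det\dd\Psi|$ is precisely what the change-of-variables formula relating $\dd\bm{a}$ and $\dd\tilde{\bm{a}}$ absorbs, so the integrals $\int\sqrt{\det g}\,\dd\bm{a}$ agree on overlaps. This invariance holds even without a choice of orientation, since only the absolute value of the Jacobian enters, which lets me glue the local measures via a partition of unity subordinate to the atlas into a well-defined Borel measure $\Vol_{\mcal{H}}$, compatible with the measurable structure $\mcal{L}(\mcal{H})$ of the earlier definitions. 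Finally, to justify that $\int_{\mcal{H}}\dd\Vol_{\mcal{H}}$ computes Euclidean volume, I would observe that $\sqrt{\det(J^\top J)}$ is the standard $n$-dimensional Jacobian of an embedding, so by the area formula of geometric measure theory $\Vol_{\mcal{H}}$ coincides with the $n$-dimensional Hausdorff measure restricted to $\mcal{H}$; this is the sense in which it plays the role of the Lebesgue measure on the manifold.

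The main obstacle is the well-definedness and patching step: verifying chart-independence of $\sqrt{\det g}\,\dd\bm{a}$ and gluing the local measures into a single global one. The orientability subtlety flagged in the earlier definition is handled by working throughout with the measure (a density) rather than a genuine differential form, so that no consistent orientation of the charts is required.
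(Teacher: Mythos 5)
Your proof is correct and follows essentially the same route as the paper's: the Regular Value Theorem for the manifold structure, and the Riemannian metric induced by the Euclidean embedding for the canonical volume measure. The only difference is that the paper cites these two facts (Tu's Theorem 9.9 for the Regular Value Theorem, and Amann's Chapter XII for atlas-independence of the Riemannian volume), whereas you unfold them into explicit arguments --- the Implicit Function Theorem chart construction, the Gram-determinant change-of-variables check with partition-of-unity gluing, and the area-formula identification with Hausdorff measure --- making your version a self-contained rendering of the same proof.
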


\begin{proof}
    By the Regular Value Theorem, see for instance Theorem 9.9 of \cite{tu2011manifolds}, $\mcal{H}=\vec{\bm{h}}^{-1}$ is a submanifold of $\mbb{R}^{n+k}$ with dimension $n+k-k=n$.

    If $\mcal{H}$ is endowed with an indefinite-Riemannian metric $g$, a non-degenerate symmetric bilinear map on the tangent vectors, then $g$ defines a unique volume measure that is independent of the choice of the atlas $\mcal{A}$. (See Section 1 of Chapter XII in \cite{amann2009analysis}.)

    Moreover, since $\mcal{H}$ is embedded in the Euclidean space, the Riemannian metric $g$ induced by the Euclidean inner product defines a unique volume measure $\Vol_\mcal{H}$ such that $\Vol_{\mcal{H}}(\mcal{H})$ is exactly the volume occupied by $\mcal{H}$ inside $\mbb{R}^{n+k}$. \flushright{$\square$}
\end{proof}


Lemma \ref{lemma:volume} gives the sufficient condition for which we may identify canonically a measure on the manifold $\mcal{H}$ and define probability distributions on $\mcal{H}$. 
We also give some examples of constraints that satisfy the condition.
Note that not all constraints defined by smooth functions are satisfied.

\begin{example}[Example 1]
Consider $ \vec{\bm{h}}(x_1,x_2):=x_1^2+x_2^2$. Then $ \mathtt{d}\vec{\bm{h}}_{(x_1,x_2)}=\begin{pmatrix}2x_1 & 2x_2\end{pmatrix}$.
Since $(0,0)\in  {\vec{\bm{h}}}^{-1}(0)$ and $ \mathtt{d}\vec{\bm{h}}_{(0,0)} = \begin{pmatrix}0&0 \end{pmatrix}$ is degenerate.
Thus $ {\vec{\bm{h}}}^{-1}(0)$ is not a manifold.
However, $ {\vec{\bm{h}}}^{-1}(c)$, $c>0$ is a manifold.
\end{example}
\begin{example}[Example 2]
 If $ {\vec{\bm{h}}}:\mbb{R}^{n}\rightarrow\mbb{R}^m$ is linear, $n>m$, $
  \exists \bm{A}\in\mbb{R}^{m\times n},\,\,  \vec{\bm{h}}(\bmx)=\bm{A}\bmx$.
Then $  {\mathtt{d}\vec{\bm{h}}}_{\bmx}=\bm{A}$, thus $ {\vec{\bm{h}}}^{-1}(c)$ is a manifold of dimension $n-m$ if and only if $\bm{A}$ is full rank.
\end{example}

\begin{remark}
Locally, on the set $A\subset U_i$, the integral $\int_A \dd \Vol_\mcal{H}$ can be expressed in local coordinates, i.e., in $\mbb{R}^{n}$,
$$\int_A\dd\Vol_{\mcal{H}} = \int_{\varphi_i(A)} \sqrt{|\det(g)|}\dd x_1\dots\dd x_n$$
where $g$ is the Riemann metric associated with the manifold. 
If there is a global parameterization for $\mcal{H}$, namely a single chart $\varphi:\mcal{H}\rightarrow\mbb{R}^{n}$, then the above evaluation applies to any integrable set of $\mcal{H}$.
Luckily, since $\mcal{H}\subseteq \mbb{R}^{n+k}$ is a Lindelöf space, there is a countable atlas and any subset of $\mcal{H}$ may be partitioned into a countable sequence of disjoint sets $B_i:=(A\backslash \bigcup_{j<i}U_j)\cap U_i$, and define $\Vol_{\mcal{H}}(A) = \sum_i \Vol_{\mcal{H}}(B_i)$.
\end{remark}
Now, since for any $A\in\mcal{L}(\mcal{H})$, $\Vol_{\mcal{H}}(A):=\int_A \dd \Vol_{\mcal{H}}$ can be computed with respect to the volume form, we denote $\lambda_{\mcal{H}}$ the \textbf{Riemann-Lebesgue} volume measure of $\mcal{H}$, where 
$$\lambda_{\mcal{H}}(A):=\Vol_{\mcal{H}}(A), \forall A\in\mcal{L}(\mcal{H}).$$  

\begin{theorem}\label{thm:constrained_rn}
Let $\vec{\bm{h}}:\mbb{R}^{n+k}\rightarrow\mbb{R}^{k}$ be a smooth function such that $\vec{\bm{h}}^{-1}(\bm{0})$ is a $n$-dimensional manifold.
Suppose that $f,g:\mbb{R}^{n+k}\rightarrow \mbb{R}_{>0}$ are two density functions fully supported on $\mbb{R}^{n+k}$, with finite integral on $\mcal{H}$.
Then the following holds:
\begin{enumerate}
    \item we may define naturally the measures $P_f,P_g:\mcal{B}(\mcal{H})\rightarrow [0,1]$ induced by restricting $f,g$ on $\mcal{H}$ such that the Radon-Nikodym derivative with respect to the volume measure $\int_\cdot \dd\Vol_{\mcal{H}}$ is proportional to their corresponding density on the full space;
    \item $P_f \ll P_g$ with $$ \frac{\dd P_f}{\dd P_g}\propto \frac{f}{g}.$$
\end{enumerate}
\end{theorem}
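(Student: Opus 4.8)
The plan is to build the two probability measures $P_f,P_g$ explicitly as normalised restrictions against the Riemann--Lebesgue volume measure $\lambda_{\mcal{H}}$ constructed in Lemma \ref{lemma:volume}, and then to obtain the Radon--Nikodym derivative between them via the quotient (chain) rule. First I would record the elementary facts needed for the construction: since $\mcal{H}=\vec{\bm{h}}^{-1}(\bm{0})$ is an $n$-manifold it is second countable and admits a countable atlas, and the local charted expression of $\Vol_{\mcal{H}}$ given in the preceding Remark is locally finite, so $\lambda_{\mcal{H}}$ is $\sigma$-finite; moreover $f$ and $g$, being continuous on $\mbb{R}^{n+k}$, restrict to continuous (hence $\mcal{B}(\mcal{H})$-measurable) functions on $\mcal{H}$ in the subspace topology. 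These two observations are what legitimise the integrals and the Radon--Nikodym machinery used below.

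For part (1) I would set $Z_f:=\int_{\mcal{H}} f\,\dd\lambda_{\mcal{H}}$ and $Z_g:=\int_{\mcal{H}} g\,\dd\lambda_{\mcal{H}}$. The hypothesis of finite integral on $\mcal{H}$ gives $Z_f,Z_g<\infty$, while $f,g>0$ together with $\lambda_{\mcal{H}}(\mcal{H})>0$ (as $\mcal{H}$ is a positive-dimensional manifold) gives $Z_f,Z_g>0$, so both constants lie in $(0,\infty)$. I then define $P_f(A):=Z_f^{-1}\int_A f\,\dd\lambda_{\mcal{H}}$ and $P_g$ analogously, for $A\in\mcal{B}(\mcal{H})$. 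Non-negativity, $P_f(\mcal{H})=1$, and countable additivity (by monotone convergence) are immediate, so these are genuine probability measures; by construction $P_f\ll\lambda_{\mcal{H}}$ with $\dd P_f/\dd\lambda_{\mcal{H}}=f/Z_f\propto f$, and likewise for $g$, which is exactly the claimed proportionality.

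For part (2) the key structural observation is that $g>0$ everywhere forces $\dd P_g/\dd\lambda_{\mcal{H}}=g/Z_g>0$ $\lambda_{\mcal{H}}$-a.e., so $P_g$ and $\lambda_{\mcal{H}}$ are equivalent measures. Consequently any $P_g$-null set is $\lambda_{\mcal{H}}$-null, hence $P_f$-null, giving $P_f\ll P_g$. I would then invoke the standard quotient rule: for $\sigma$-finite $\mu=\lambda_{\mcal{H}}$ with $P_f\ll\mu$, $P_g\ll\mu$ and $\dd P_g/\dd\mu>0$ $\mu$-a.e.,
$$
\frac{\dd P_f}{\dd P_g}=\frac{\dd P_f/\dd\lambda_{\mcal{H}}}{\dd P_g/\dd\lambda_{\mcal{H}}}=\frac{f/Z_f}{g/Z_g}=\frac{Z_g}{Z_f}\,\frac{f}{g}\propto\frac{f}{g},
$$
which is the assertion.

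The calculation itself is routine; the step that deserves the most care is the measure-theoretic bookkeeping that licenses it. In particular I expect the main obstacle to be verifying that $\lambda_{\mcal{H}}$ is $\sigma$-finite and that the restrictions $f|_{\mcal{H}},g|_{\mcal{H}}$ are $\mcal{B}(\mcal{H})$-measurable --- both relying on the manifold being second countable/Lindel\"of and on the charted volume form being locally finite --- since without $\sigma$-finiteness and the strict positivity of $g$ the quotient rule for Radon--Nikodym derivatives cannot be applied. Everything else is normalisation.
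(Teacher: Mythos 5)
Your proposal is correct and follows essentially the same route as the paper: both define $P_f$ and $P_g$ as normalised integrals of the restricted densities against the volume measure $\lambda_{\mcal{H}}$, and both obtain $\dd P_f/\dd P_g = (Z_g f)/(Z_f g)\propto f/g$ by dividing the two densities with respect to that common reference measure. The only difference is that you spell out the measure-theoretic bookkeeping (sigma-finiteness of $\lambda_{\mcal{H}}$, measurability of the restrictions, and the role of $g>0$ in guaranteeing $P_f\ll P_g$) which the paper leaves implicit in its one-line rewriting of $P_f(A)$ as an integral against $P_g$.
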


\begin{proof}
    \begin{enumerate}
        \item Since $\mcal{H}\subseteq \mbb{R}^{n+k}$, the density $f$ can be restricted onto $\mcal{H}$ with $f_{|\mcal{H}}(p)=f(p)$, $\forall p\in\mcal{H}\subseteq\mbb{R}^{n+k}$. 
        Since $f_{|\mcal{H}}$ has finite integral on $\mcal{H}$, then let $Z_f:= \int_{\mcal{H}}f \dd\Vol_{\mcal{H}}<\infty$.
        We may define the measure $P_f$ as
        $$P_f(A):= \frac{1}{Z_f}\int_A f \dd\Vol_\mcal{H}, \quad \forall A\in\mcal{B}(\mcal{H}).$$
        Thus $\frac{\dd P_f}{\dd \Vol_{\mcal{H}}}=\frac{f}{Z_f}\propto f$.
        \item Observe that, $\forall A\in\mcal{B}(\mcal{H})$
        \begin{align*}
            P_f(A) = & \int_{A}\frac{f}{Z_f}\dd\Vol_{\mcal{H}}\\
            = & \int_A \frac{Z_g f}{Z_f g} \frac{g}{Z_g} \dd \Vol_{\mcal{H}}.
        \end{align*}
        Clearly, $P_f \ll P_g$ with $\frac{\dd P_f}{\dd P_g}=\frac{Z_g f}{Z_f g}\propto \lambda$. \hfill $\square$
    \end{enumerate}
\end{proof}


\begin{remark}
The integrability condition is automatically obtained when all $f_i$ are continuous and $\mcal{H}$ is compact.
For non-compact $\mcal{H}$, e.g., the linear constraint case is usually not hard to verify, otherwise, one needs to be careful so that the constrained densities are well-defined.
\end{remark}

\begin{remark}
Recall our target is to sample from the following density function (\ref{eq:cons1})
$$
f_{\mcal{H}}\left(\bmy^{(1)},\dots,\bmy^{(m)}\right)\propto f_1(\bmy^{(1)})f_2(\bmy^{(2)})\cdots f_m(\bmy^{(m)}) \mbb{I}_{\bm{y}^{(1:m)}\in\mcal{H}}.
$$
From Lemma \ref{lemma:volume}, we know that when the Riemannian metric $g$ associated with the Riemannian manifold $\mcal{H}$ is fixed, then $(\mcal{H},g)$ admits a unique canonical measure which we use as the dominating measure of the density in (\ref{eq:cons1}).
Since we are sampling the random variables $y^{(1)},\dots,y^{(m)}$, it is natural to consider the Riemannian metric induced by the state space of $(\bmy^{(1)},\dots,\bmy^{(m)})$, namely the Euclidean space.
Now (\ref{eq:cons1}) defines a density with respect to the volume measure of $\mcal{H}$ in the Euclidean space.
\end{remark}

\subsection{Invariance under Reparameterisation}
One important fact to note is that the density part of the formulation of $f_{\mcal{H}}(\bm{y})$ on manifold $\mcal{H}\subset \mbb{R}^n$
$$
    f_{\mcal{H}}(\bm{y}) \propto f(\bm{y})\mbb{I}_{\bm{y}\in\mcal{H}}
$$
is invariant under reparameterisation of $\mcal{H}$, with respect to the canonical volume measure of the manifold ${\cal H}$.

To see this, suppose we have two global parameterisations of $\mcal{H}$, $(x_1,\dots,x_d)$ and $(x'_1,\dots,x'_d)$, with their Riemannian metrics denoted $g$ and $g'$ under their parameterisation respectively.
Since the canonical volume measure of $\mcal{H}$ is fixed through the embedding of $\mcal{H}$ into the Euclidean space, it must hold for parameterizations (of the canonical measure) that 
$$
\sqrt{|\det(g)|}\dd x_1\dots \dd x_d = \sqrt{|\det(g')|}\dd x'_1\dots\dd x'_d
$$
and hence if the likelihood function $f(\bm{y})$ is expressed as $f(\bm{y})=g(x_1,\dots,x_d)=g'(x'_1,\dots,x'_d)$, then we still have
$$
\int_{\mcal{H}} g(x_1,\dots,x_d) \sqrt{|\det(g)|}\dd x_1\dots \dd x_d = \int_{\mcal{H}} g'(x'_1,\dots,x'_d) \sqrt{|\det(g')|}\dd x'_1\dots\dd x'_d = 1.
$$
Thus the density function is invariant, up to a constant, under reparameterisation. 

In the traditional perspective of multivariate calculus or the change of variable formula in probability theory, we may use $\dd x_1\dots \dd x_d$ as the base measure and $\sqrt{|\det(g)|}g(x_1,\dots,x_d)$ as the likelihood, or $\dd x'_1\dots\dd x'_d$ as the base measure and $ \sqrt{|\det(g')|}g'(x'_1,\dots,x'_d)$ as the likelihood which, however, are not written with respect to the canonical volume measure of the manifold ${\cal H}$.

\begin{example}[Polar Transformation]
Now we consider using the simple polar transformation $(x,y)\rightarrow (r,\theta)$ to explain this, 
\begin{align*}
    x \, = r\cos(\theta) \;\;\;\;
    y \, = r\sin(\theta)
\end{align*}
The density $f_{x,y}(x,y)$ is with respect to the Lebesgue measure $\dd x\dd y$. With the Jacobian of the variable transformation, we have the density for $(r,\theta)$ as
$$
    f_{r,\theta}(r,\theta) = f_{x,y}(r\cos(\theta),r\sin(\theta)) \left|\frac{\partial(x,y)}{\partial(r,\theta)}\right| := rg(r,\theta)
$$
where we define $g(r,\theta) = f_{x,y}(r\cos(\theta),r\sin(\theta))$.
However, the additional $r$ term does not belong to the density function, instead, it is included in the canonical volume measure of the constraint ${\cal H}$. In the equality
$$
    f_{x,y}(x,y)\cdot \dd x\dd y = g(r,\theta) \cdot r\dd r\dd\theta
$$
the densities $f_{x,y}(x,y)$ and $g(r,\theta)$ are equal, and both of the measures $\dd x\dd y$ and $r\dd r\dd\theta$ represent the ($\sigma$-finite) uniform measure on the Euclidean of $(x,y)$ space, i.e., the canonical volume measure of the base manifold. So re-parameterisation does not affect the expression of the density. 

Linking this example to our algorithm, the uniform samples from Step 2 of Algorithm \ref{alg:CMCF-2} will be generated from the Euclidean $(x,y)$ space  under the Lebesgue measure $\dd x\dd y = r\dd r\dd\theta$, rather than under the Lebesgue measure of $\dd r\dd\theta$ which instead represents the uniform distribution on the Euclidean $(r,\theta)$ space. 
\end{example}

\section{Constrained Fusion Sampler}
\label{appx:proof1}
\subsection{Rejection Sampling for Diffusions}\label{appx:proof Lemma 2}
\begin{condition}
Let $\bm{\alpha}_i(\bm{u}) = \bmD A_i(\bm{u})$.
\begin{enumerate}[label=(\roman*)]
 \item \[\exp\left\{\int_0^{T}\bm{\alpha}_i(\bm{\omega}^{(i)}_s)\cdot\dd \bm{\omega}^{(i)}_s -\int^{T}_0\frac{1}{2} \left\|\bm{\alpha}_i(\bm{\omega}^{(i)}_s)\right\|^2\dd s \right\}\] is a martingale with respect to $\mathbb W_i$, the Brownian motion measure.
 \item $\bm{\alpha}_i$ is continuously differentiable in all its arguments.
 \item The function
 \begin{equation}
 \phi_i(\bm{u}) = \frac{1}{2}\left[\|\bm{\alpha}_i(\bm{u})\|^2+ \text{\bf div } \bm{\alpha}_i(\bm{u})\right] - l_i \geq 0, \nonumber
\end{equation}
for some $l_i$ and for any $\bm u$, where {\bf div} means the divergence of $\alpha_i$.
\end{enumerate}
\label{condition:regularitycondition}
\end{condition}

\begin{proof}[Proof of Lemma \ref{theorem1}]
 Let $\mbb{P}_i^{(T,\bmx,\bmy)}$ be the law of the diffusion bridge $\bmX^{(i)}$ given the length $T$ and end points $\bmx,\bmy$. Let $\mbb{W}^{(T,\bmx,\bmy)}$ be the law of the Brownian bridge conditioned on length and two ends $(T,\bmx,\bmy)$. From Lemma 1 in \citep{beskos2006exact},
 \begin{equation}
      \frac{\dd \mbb{P}_i^{(T,\bmx,\bmy)}}{\dd\mbb{W}^{(T,\bmx,\bmy)}}= \frac{\mcal{N}_{T}(\bmy-\bmx)}{p_i(\bm{y}^{(i)}| {\bm x}^{(i)})} \times\exp\left\{A_i(\bmy^{(i)})-A_i(\bmx^{(i)})-\int_0^{T}\left( \phi_i(\bmx^{(i)})+l_i\right)\dd s\right\}.\label{eq:lemma1}
 \end{equation}
 Rearrange (\ref{eq:lemma1}) and take expectation with respect to $\mbb{W}^{(T,\bmx,\bmy)}$ leads to equation (\ref{eq:transition}).
 Recall from (\ref{eq:biased_langevin}) and (\ref{eq:wiener}) that without constraint, $g$ and $h$ are defined as
 \begin{equation*}
g\left(\bm{x}^{(1)},\cdots, \bm{x}^{(m)}, \bm{y}^{(1)},\cdots,\bm{y}^{(m)}\right) \propto 
\prod_{i=1}^m f_i^2(\bm{x}^{(i)}) p_i(\bm{y}^{(i)}|\bm{x}^{(i)}) \frac{1}{f_i(\bm{y}^{(i)})},
\end{equation*}
and
\begin{equation*}
h(\bm{x}^{(1)},\cdots\bm{x}^{(m)},\bm{y}^{(1)},\cdots,\bmy^{(m)}) =\prod_{i=1}^m f_i(\bmx^{(i)})(2\pi T)^{-1/2}\exp\left[ -\frac{\|\bmy^{(i)}-\bmx^{(i)}\|}{2T}\right].
\end{equation*}
Substitute (\ref{eq:transition}) into (\ref{eq:biased_langevin}) and we get (\ref{eq:theorem1}).
\hfill$\square$
\end{proof}

\begin{proof}[Proof for Corollary \ref{cor:constrained_rnd}]
    Apply Theorem \ref{thm:constrained_rn} to the result of Lemma \ref{theorem1}. \hfill $\square$
\end{proof}

\subsection{Diffusion Bridge Sampling via Poisson Thinning} \label{appx:poisson_process}
\begin{proposition}
Let $\omega\in C([0,T],\mbb{R})$, $\Phi$ be a Poisson point process of intensity $1$ on the space $[0,T]\times [0,M]$, where $M$ is the upper bound of the function $\phi$.
Let $A:= \{(t,u)\in [0,T]\times [0,M] : u\leq \phi(\omega(t))\}$ denote the region under the curve $\phi(\omega(t))$, then
\begin{equation*}
    \mbb{P}(N(A) = 0 | \omega) = \exp\left\{-\int_0^T \phi(\omega(t))\dd t\right\}
\end{equation*}
\end{proposition}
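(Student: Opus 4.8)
The plan is to recognise this as the classical void-probability (or ``empty-space'') formula for a Poisson point process, applied conditionally on the path $\omega$. Once we condition on $\omega$, the path is fixed, so the region $A$ becomes a deterministic Borel subset of the strip $[0,T]\times[0,M]$ and the only remaining randomness is that of the point process $\Phi$. The defining property of a Poisson point process of intensity $1$ (i.e. with intensity measure equal to the two-dimensional Lebesgue measure $\lambda_2$ restricted to $[0,T]\times[0,M]$) is that for every Borel set $B$ the count $N(B)$ is Poisson-distributed with mean $\lambda_2(B)$; in particular its void probability is $\mbb{P}(N(B)=0)=e^{-\lambda_2(B)}$.

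First I would verify that $A$ is a legitimate Borel set of finite measure. Since $\omega$ is continuous and (under Condition~\ref{condition:regularitycondition}) $\phi$ is continuous, the map $t\mapsto \phi(\omega(t))$ is continuous on $[0,T]$; hence the subgraph $A=\{(t,u):0\le t\le T,\ 0\le u\le \phi(\omega(t))\}$ is closed, and therefore Borel. Note that $\phi\ge 0$ by its definition in (\ref{eq:phi}), so the lower limit of $u$ is genuinely $0$, and $\phi(\omega(t))\le M$ for all $t$ by the choice of $M$ as an upper bound, so $A$ indeed lies inside the strip and the definition of $A$ in the statement is consistent.

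Next I would compute $\lambda_2(A)$ by Tonelli's theorem, using that the non-negative indicator $\mbb{I}_A$ is measurable,
\begin{equation*}
\lambda_2(A)=\int_0^T\!\!\int_0^M \mbb{I}\{u\le \phi(\omega(t))\}\,\dd u\,\dd t=\int_0^T \phi(\omega(t))\,\dd t,
\end{equation*}
where the inner integral evaluates to $\phi(\omega(t))$ precisely because $0\le \phi(\omega(t))\le M$. Substituting $B=A$ into the void-probability formula then yields
\begin{equation*}
\mbb{P}(N(A)=0\mid \omega)=\exp\{-\lambda_2(A)\}=\exp\left\{-\int_0^T \phi(\omega(t))\,\dd t\right\},
\end{equation*}
which is the claim.

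I do not anticipate a genuine obstacle, since the result is essentially a restatement of the void-probability formula. The only points requiring care are the measurability of $A$ and the fact that it fits inside the strip (both handled above), together with the conditional interpretation: the conditioning on $\omega$ is exactly what turns the random region into a deterministic set so that the Poisson count statistics apply directly. To be fully rigorous about the conditioning, I would remark that $\Phi$ is generated independently of $\omega$, so conditioning on $\omega$ leaves $\Phi$ a Poisson process of unchanged intensity, and the displayed computation is then valid as a regular conditional probability.
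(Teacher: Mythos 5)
Your proof is correct and follows essentially the same route as the paper's: condition on $\omega$ so that $A$ is a fixed set, use the Poisson void-probability formula, and compute $\lambda_2(A)=\int_0^T\phi(\omega(t))\,\dd t$. Your version merely adds the measurability, Tonelli, and independence justifications that the paper leaves implicit.
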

\begin{proof}
By the definition of $\Phi$, the random variable $N(A) | \omega$ is Poisson with intensity $\int_A \dd \lambda$ with $\lambda$ is the Lebesgue measure on $[0,T]\times [0,M]$.
Thus
$$\mbb{P}(N(A)=0 | \omega) = \exp\left\{ - \int_A \dd\lambda\right\} = \exp\left\{- \int_0^T \phi(\omega(t))\dd t\right\}.$$
\hfill $\square$
\end{proof}
Recall that conditioned on the underlying process $\bm{\omega}_s^{(i)}$, the acceptance probability is given by 
\begin{equation}\tag{\ref{eq:theorem1}}
    \exp\left( - \sum_{i=1}^m \int_0^{T} \phi_i(\bm{\omega}_s^{(i)}) \dd s \right).
\end{equation}
Thus, when the functions $\phi_i$ are all bounded above, we can simulate the event by implementing the proposition:
\begin{enumerate}
    \item For each $i$, find upperbound $\phi_i(u)\leq M^{(i)}, \forall u$;
    \item simulate a Poisson point process $\Phi^{(i)}=\{(t_1,u_1),\dots,(t_\kappa,u_\kappa)\}$ on $[0,T]\times[0,M_i]$;
    \item simulate a Brownian bridge $\bm{\omega}^{(i)}$ connecting $\bmx^{(i)}$ and $\bmy^{(i)}$ on the time points specified by the first coordinate $t_k$ of $\Phi^{(i)}$\;
    \item accept if no point of $\Phi^{(i)}$ lies below $\phi_i(\bm{\omega}_s)^{(i)}$, i.e., $u_k > \phi_i(\bm{\omega}_{t_k}^{(i)})$ for every $k\in\{1,\dots,\kappa\}$.
\end{enumerate}
Note that we only need the value of the proposal Brownian bridge at the time points specified by the Poisson point process.
\begin{figure}
    \centering
    \includegraphics[width=0.7\linewidth]{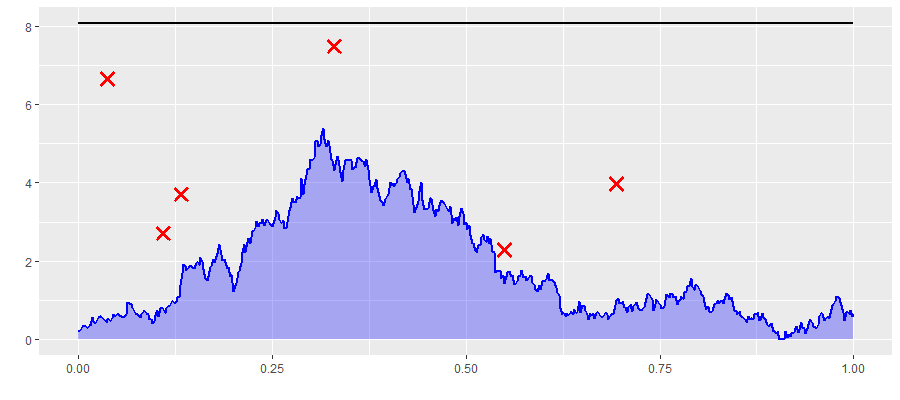}
    \caption{An example of the Poisson process rejection. The blue curve is the trajectory of $\phi(\omega_s)$ and the red crosses are the generated points. Since all the points are above the curve, the acceptance step is passed.}
    \label{fig:appx_poisson_rejection}
\end{figure}

\subsection*{For Unbounded $\phi_i$}
When $\phi_i$ is potentially unbounded, it is impossible to find a finite $M^{(i)}$ that upperbounds $\phi_i(\omega_s^{(i)})$ where $\omega_s^{(i)}$ is the trajectory of a Brownian bridge, unless we simulate $\omega_s^{(i)}$ conditioned on it not leaving a bounded interval $[a,b]$.

For simplicity, let's consider simulating a $1$-dimensional Brownian bridge $X_s$ connecting $x$ at $s=0$ and $y$ at $s=T$.
Let $0=a_0<a_1<a_2<\dots$ be a monotone increasing sequence of positive numbers, then the space of all Brownian bridges $(X_s)$ connecting $x,y$ with time length $T$ may be partitioned into the following sequence of sets, for $i=1,2,\dots$
\begin{align*}
    U_i := &\, \{(X_s): \inf_s X_s > \min(x,y)-a_{i}, \quad \max(x,y)+a_{i-1} \leq \sup_s X_s < \max(x,y)+a_i\} \\
    L_i := &\, \{(X_s): \sup_s X_s < \max(x,y)+a_{i}, \quad \min(x,y)-a_{i-1} \geq \inf_s X_s > \min(x,y)-a_{i}\}
\end{align*}
Let $U_i\cup L_i$ denote the $i$-th layer, the probability of a Brownian bridge not leaving the $i$-th layer can be simulated exactly with the help of following result.

Use $\mtt{p}(T,x,y,K)$ to denote the probability that a one-dimensional Brownian bridge connecting $x$ and $y$ of time $T$ does not leave the interval $[-K,K]$, $K>\max\{x,y\}$.
Then by, for instance, Theorem 3 of \cite{potzelberger2001boundary}, we have
\begin{lemma}
Define for $j\geq 1$,
\begin{align*}
    \bar{\sigma}_j(T,x,y,K) & = \exp\left\{-\frac{2}{T}[2jK-(K+x)][2jK-(K+y)] \right\} \\
    \bar{\tau}_j(T,x,y,K) & = \exp\left\{-\frac{2j}{s}[4jK^2 + 2K(x-y)] \right\}
\end{align*}
and 
\begin{align*}
    \sigma_j(T,x,y,K) &= \bar{\sigma}_j(T,x,y,K) + \bar{\sigma}_j(T,-x,-y,K)\\
    \tau_j(T,x,y,K) & = \bar{\tau}_j(T,x,y,K) + \bar{\tau}_j(T,-x,-y,K).
\end{align*}
Then
\begin{equation}\label{eq:boundary_crossing_1}
    \mtt{p}(T,x,y,K) = 1 - \sum_{j=1}^{\infty}\{\sigma_j(T,x,y,K) - \tau_j(T,x,y,K)\}.
\end{equation}
\end{lemma}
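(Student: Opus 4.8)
The plan is to identify $\mtt{p}(T,x,y,K)$ as a ratio of two transition densities and to compute the numerator by the classical method of images (iterated reflection) for an interval with two absorbing barriers. Write $p_T(u,v):=(2\pi T)^{-1/2}\exp(-(v-u)^2/(2T))$ for the free Brownian transition density, and let $q_T(x,y)$ denote the sub-transition density of a Brownian motion started at $x$, killed upon exiting $(-K,K)$, and evaluated at $y$ at time $T$. Since the bridge law is the free motion conditioned on its endpoint, the no-exit probability is exactly
\begin{equation*}
\mtt{p}(T,x,y,K)=\frac{q_T(x,y)}{p_T(x,y)},
\end{equation*}
so the whole task reduces to an explicit formula for $q_T$ and an algebraic simplification of the quotient.

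First I would construct $q_T$ by the method of images. On $(-K,K)$ (width $2K$, barriers $\pm K$) the Dirichlet heat kernel is the alternating sum over the reflection group generated by the two barriers,
\begin{equation*}
q_T(x,y)=\sum_{k\in\mbb{Z}}\Bigl[\,p_T(x,\;y+4kK)-p_T(x,\;2K-y+4kK)\,\Bigr],
\end{equation*}
the first family being the even (direct) images and the second the odd (reflected) images. I would justify this either by invoking the reflection principle for a two-sided barrier or, self-containedly, by checking that the right-hand side solves the heat equation, reduces to $\delta_x$ as $T\downarrow 0$, and vanishes at $y=\pm K$: at $y=K$ each reflected image cancels the direct image of the same index, and at $y=-K$ the reflected image of index $k$ equals the direct image of index $k+1$, so cancellation occurs across consecutive $k$. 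Nonnegativity and the probabilistic interpretation then follow from the Feynman--Kac/reflection argument. For $x,y\in(-K,K)$ the Gaussian tails give absolute convergence, which legitimises all rearrangements below.

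Next I would divide termwise by $p_T(x,y)$ using
\begin{equation*}
\frac{p_T(x,y')}{p_T(x,y)}=\exp\!\Bigl(-\tfrac{1}{2T}(y'-y)(y'+y-2x)\Bigr),
\end{equation*}
which turns each image into a single exponential. The direct image $k=0$ contributes the leading $1$. For the reflected family, the index $k=j-1\ge 0$ yields $\bar\sigma_j(T,x,y,K)$ and the index $k=-j\le -1$ yields $\bar\sigma_j(T,-x,-y,K)$, so they sum to $\sum_{j\ge1}\sigma_j$; for the direct images with $k\neq 0$, the pair $k=-j$ and $k=+j$ yields $\bar\tau_j(T,x,y,K)$ and $\bar\tau_j(T,-x,-y,K)$, so they sum to $\sum_{j\ge1}\tau_j$. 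Because the reflected images enter $q_T$ with a minus sign and the direct ones with a plus sign, collecting the pieces gives exactly $\mtt{p}=1-\sum_{j\ge1}(\sigma_j-\tau_j)$, which is (\ref{eq:boundary_crossing_1}).

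The routine but delicate part is the bookkeeping in the last step: matching the doubly-infinite image index $k\in\mbb{Z}$ to the one-sided summation $j\ge1$ and verifying that the two reflections of each image produce precisely the stated $(x,y)\mapsto(-x,-y)$ symmetrisation in $\sigma_j$ and $\tau_j$. The only genuine analytic obstacle is the rigorous justification of the image series for $q_T$ --- establishing that the alternating Gaussian sum really is the killed transition density rather than merely a formal solution of the boundary value problem; I would close that gap by leaning on the cited boundary-crossing result (Theorem~3 of \cite{potzelberger2001boundary}) or on the standard reflection-principle derivation.
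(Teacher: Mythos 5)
Your proof is correct, but it takes a genuinely different route from the paper, which offers no derivation at all: the lemma is quoted directly as a consequence of Theorem 3 of \cite{potzelberger2001boundary}, with all technical detail explicitly deferred to that reference and to \cite{beskos2008factorisation}. Your argument --- identifying $\mtt{p}(T,x,y,K)=q_T(x,y)/p_T(x,y)$ with $q_T$ the Dirichlet heat kernel on $(-K,K)$, expanding $q_T$ as the alternating image series over the reflection group generated by the two barriers, and dividing termwise by the free kernel --- is essentially the classical reflection-principle argument (Doob/Anderson) that underlies the cited theorem. The bookkeeping you call delicate does check out exactly: the $k=0$ direct image gives the leading $1$; the direct images $k=-j$ and $k=+j$ give $\bar\tau_j(T,x,y,K)$ and $\bar\tau_j(T,-x,-y,K)$ respectively; the reflected images $k=j-1$ and $k=-j$ give $\bar\sigma_j(T,x,y,K)$ and $\bar\sigma_j(T,-x,-y,K)$; and the signs assemble to $1-\sum_{j\ge 1}(\sigma_j-\tau_j)$ as claimed. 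Your derivation buys self-containedness (modulo the rigorous justification of the image series, which you flag honestly and which is standard, e.g.\ via uniqueness for the heat equation with Dirichlet boundary data), and as a bonus it certifies two slips in the paper's statement: the denominator $s$ appearing in $\bar\tau_j$ must be $T$ (your termwise quotient produces $\exp\{-\frac{2j}{T}[4jK^2+2K(x-y)]\}$), and the stated condition $K>\max\{x,y\}$ should really be $K>\max\{|x|,|y|\}$, since both endpoints must lie inside the interval for $q_T/p_T$ to be the bridge probability. The paper's citation-only approach buys brevity and outsourced rigor; yours would make the appendix self-contained.
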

Since $\mtt{p}(T,x,y,K)$ is given by a convergent alternating series, we can simulate $\mtt{p}(T,x,y,K)$ by drawing a uniform $u\sim\mcal{U}[0,1]$ and iteratively compute the upper and lower bounds of $\mtt{p}(T,x,y,K)$ until $u$ crosses one of the bounds.
More importantly, the probability of a Brownian bridge $(X_s)$ connecting $x,y$ with time length $T$ does not leave layer $i$ is given by
$$
\mbb{P}(I\leq i) = \mtt{p}\left(T, \frac{x-y}{2}, \frac{y-x}{2}, \frac{|x-y|}{2}+a_i\right), \quad i\geq 1.
$$ 
where $I$ is the layer of the bridge.
Given $I=i$, the rest of the construction will follow the steps below:
\begin{enumerate}
    \item With probability $\tfrac{1}{2}$, assume $(X_s)\in U_i$ (otherwise assume $(X_s)\in L_i$), which constrains the maximum point (minimum point resp.) of the bridge.
    \item Simulate the maximum (or minimum) and the time it is attained;
    \item Conditioned on the maximum (or minimum) point, a Brownian bridge can be decomposed into \textbf{two} Bessel bridges starting independently at the maximum (or minimum) point and connecting the starting/ending points of the Brownian bridge.
    \item Verify the path between the skeleton points, which are Bessel bridges, indeed does not leave layer $i$ from the other side, i.e., if the bridge is generated conditioned on its maximum, then check if the bridge leaves layer $i$ from below and vice versa.
    \item If the Brownian bridge does not leave layer $i$, check if the Bessel bridges in between do not leave layer $i-1$ from the other side conditioned on it not leaving layer $i$.
    \item If at least one Bessel bridge in between leaves layer $i-1$ from the other side, it means the generated Brownian bridge belongs to $U_i\cap L_i$ and we should reject the trajectory with probability $\tfrac{1}{2}$.
\end{enumerate}

\begin{example}[Example Instance of Step 3]
Suppose one needs to simulate a Brownian bridge $(X_s)_{s\in[0,T]}$ connecting $x$ and $y$ conditioned on hitting its maximum point $M$ at time $0<\tau<T$.
Then generating the trajectory from time $0$ to $\tau$ is equivalent to generating a Bessel bridge $B_s$ connecting $0$ at time $0$ and $M-x$ at time $\tau$, and computing the path $B_s+x$.
Similarly, the trajectory from $\tau$ to $T$ is equivalent to generating another Bessel bridge $B'_s$ connecting $0$ at time $0$ and $M-y$ at time $T-\tau$, and computing $B'_{T-\tau-s}+y$.
The resulting Brownian bridge is pieced together as follows:
$$
    X_s = \begin{cases}
        B_s+x, & s\in[0,\tau];\\
        B'_{T-s}, & s\in(\tau,T].
    \end{cases}
$$
The same (but opposite) procedure follows if the Brownian bridge is conditioned on its minimum.
\end{example}

Let $\mtt{q}(T,x,y,K)$ denote the probability of a Bessel bridge $(B_s)_{s\in[0,T]}$ connecting $x\geq0$ and $y\geq0$ of time length $T$ does not leave interval $(0,K)$ and $\mtt{q}(T,x,y,K;L)$, $K<L$, denote the probability of $(B_s)_{s\in[0,T]}$ does not leave $(0,K)$ conditioned on it not leaving $(0,L)$.
\begin{lemma}
\begin{align*}
    \mtt{q}(T,x,y,K;L) &\, = \frac{y - \sum_{j=1}^\infty \left\{\zeta_j(T,y,K) - \xi_j(T,y,K) \right\}}{y - \sum_{j=1}^\infty \left\{\zeta_j(T,y,L) - \xi_j(T,y,L) \right\}} \\
    \mtt{q}(T,x,y,K) &\, = 1-\frac{1}{y}\sum_{j=1}^\infty \left\{\zeta_j(T,y,K) - \xi_j(T,y,K) \right\} 
\end{align*}
where
$$
\zeta_j(T,y,K) = (2jK-y)\exp\left\{-\frac{2}{T}jK(jK-y)\right\}, \quad \xi_j(T,y,K) = \zeta_j(T,-y,K)
$$
\end{lemma}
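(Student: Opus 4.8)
The plan is to identify the three-dimensional Bessel bridge produced by the layered decomposition as a Brownian bridge conditioned to stay positive, and then to read off its two-sided non-crossing probability from the Brownian-bridge formula already recorded in (\ref{eq:boundary_crossing_1}). Recall that when a Brownian bridge is split at its minimum, the two pieces are independent three-dimensional Bessel bridges emanating from the extremal point; after centring at that point the relevant bridge starts at height $x=0$, which is why the right-hand sides in the Lemma involve only the terminal height $y$. The structural fact I would use is that a $\mtt{BES}(3)$ bridge from $x$ to $y$ over $[0,T]$ is a Brownian bridge from $x$ to $y$ conditioned (via the Doob $h$-transform with $h(z)=z$, equivalently Williams' path decomposition) to remain strictly positive on $(0,T)$. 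Consequently, for $x>0$,
\[
\mtt{q}(T,x,y,K)=\frac{\mbb{P}\big(\text{BM bridge } x\to y \text{ stays in }(0,K)\big)}{\mbb{P}\big(\text{BM bridge } x\to y \text{ stays in }(0,\infty)\big)}=:\frac{N(x)}{D(x)},
\]
and $\mtt{q}(T,0,y,K)$ is obtained by letting $x\downarrow 0$.

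Next I would compute the two ingredients. The denominator is the classical no-hit probability for a Brownian bridge between two positive levels, $D(x)=1-\exp(-2xy/T)$. The numerator is the two-sided non-crossing probability for the interval $(0,K)$, which the method of images (the device behind (\ref{eq:boundary_crossing_1}), see also \citep{potzelberger2001boundary}) expresses as a doubly infinite series
\[
N(x)=\sum_{k=-\infty}^{\infty}\Big[\exp\!\big(-\tfrac{2kK(y-x+kK)}{T}\big)-\exp\!\big(-\tfrac{2xy}{T}\big)\exp\!\big(-\tfrac{2kK(kK-x-y)}{T}\big)\Big].
\]
A short check shows that the $k=0$ term of $N$ equals $D(x)$, so both $N$ and $D$ vanish as $x\downarrow 0$ and the limit is a genuine $0/0$ indeterminacy.

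The heart of the argument is then to resolve this indeterminacy by l'Hopital's rule in $x$ at $x=0$. I would differentiate $N$ and $D$ term by term, the interchange of differentiation with the sum being justified by the Gaussian decay of the summands, which makes the series and its $x$-derivative converge uniformly near $x=0$. Using $D'(0)=2y/T$, the limit becomes $\tfrac{T}{2y}N'(0)$. Separating the $k=0$ contribution (which yields the constant $1$) and then folding the sum over $k\in\mbb{Z}\setminus\{0\}$ into a sum over $j\geq 1$ by pairing $k=j$ with $k=-j$, the surviving coefficients collapse to $(2jK-y)$ and $(2jK+y)$ multiplying $\exp(-\tfrac{2}{T}jK(jK-y))$ and $\exp(-\tfrac{2}{T}jK(jK+y))$ respectively; these are exactly $\zeta_j(T,y,K)$ and $\xi_j(T,y,K)$. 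This produces $\mtt{q}(T,x,y,K)=1-\tfrac1y\sum_{j\geq1}\{\zeta_j(T,y,K)-\xi_j(T,y,K)\}$, the second displayed identity. I expect this bookkeeping, tracking signs while merging the $\pm k$ image terms and matching them to $\zeta_j,\xi_j$, to be the main obstacle, the rest being routine.

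Finally, the conditional formula follows with no further stochastic analysis. Since $(0,K)\subset(0,L)$ for $K<L$, the event that the bridge stays in $(0,K)$ is contained in the event that it stays in $(0,L)$, so $\mtt{q}(T,x,y,K;L)=\mtt{q}(T,x,y,K)/\mtt{q}(T,x,y,L)$. Substituting the expression just derived and clearing the common factor $1/y$ gives
\[
\mtt{q}(T,x,y,K;L)=\frac{y-\sum_{j\geq1}\{\zeta_j(T,y,K)-\xi_j(T,y,K)\}}{y-\sum_{j\geq1}\{\zeta_j(T,y,L)-\xi_j(T,y,L)\}},
\]
which is the first displayed identity.
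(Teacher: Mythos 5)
Your proof is correct, but it supplies the derivation that the paper outsources to a citation, so the two are not identical in content. The paper's own proof consists of the single remark that a Brownian bridge conditioned to stay positive is a Bessel bridge, hence $\mtt{q}(T,x,y,K;L)=\mtt{p}(T,x-K/2,y-K/2,K/2)/\mtt{p}(T,x-L/2,y-L/2,L/2)$, and then refers to \citep{beskos2008factorisation} for the explicit series; it never derives $\zeta_j,\xi_j$ itself. You start from the same identification (Doob $h$-transform / decomposition at the extremum, which also correctly explains why $x=0$ and the stated formulas involve only $y$), but then carry out the computation: the image series for the two-sided probability $N(x)$, the one-sided probability $D(x)=1-e^{-2xy/T}$, and a l'H\^opital limit as $x\downarrow 0$. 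I checked the bookkeeping you flagged as the main obstacle: the $k=\pm j$ image terms pair to give
\begin{equation*}
(2jK+y)\,e^{-\frac{2}{T}jK(jK+y)}-(2jK-y)\,e^{-\frac{2}{T}jK(jK-y)}=\xi_j(T,y,K)-\zeta_j(T,y,K),
\end{equation*}
and $\tfrac{T}{2y}N'(0)$ then equals $1-\tfrac1y\sum_{j\ge1}\{\zeta_j-\xi_j\}$ as required; the conditional formula follows from event containment exactly as you say. What your route buys is self-containedness and, more importantly, correctness at the boundary: in the only case the algorithm actually uses ($x=0$, Bessel bridges started at the minimum or maximum), both $\mtt{p}$ quantities in the paper's displayed ratio vanish --- a bridge started on the boundary of an interval leaves it almost surely --- so that ratio is literally $0/0$ there, and your l'H\^opital step is precisely what resolves the indeterminacy (it is also, in essence, how the cited reference obtains these formulas). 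What the paper's route buys is brevity, at the cost of concealing this degeneracy behind the citation.
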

The above lemma can be derived by simply noting that a Brownian bridge connecting $x,y\geq0$ conditioned on not leaving interval $(0,K)$ is a Bessel bridge and thus
$$
\mtt{q}(T,x,y,K;L) = \frac{\mtt{p}(T, x-K/2, y-K/2,K/2)}{\mtt{p}(T, x-L/2, y-L/2, L/2)}.
$$
To avoid fully reiterating the result of \cite{beskos2008factorisation}, we have omitted the proofs and most of the technical details, please address the original paper, or see Section 2.3.3 of \cite{hu2023drawing} for a summary.

\subsection{Simulation from Linearly Constrained Proposal}
\label{appx:linear_constraint}

Without loss of generality, suppose that the manifold $\mcal{H}$ is given by the equation $\bm{A}y^{(1:m)}=\bmc$. 
Then 
\begin{equation}
 h \left(\bm{x}^{(1:m)},\bm{y}^{(1:m)}\right) \propto \left(\prod_{i=1}^m f_i(\bmx^{(i)})\right)\, f_{\bmy|\bmx}\left(\bmy^{(1:m)}|\bmx^{(1:m)}\right)\indi_{\bm{c}}(\bm{A}\bmy^{(1:m)}), 
\end{equation}
where
\begin{equation} \label{eq:split}
    f_{\bmy|\bmx}\left(\bmy^{(1:m)}|\bmx^{(1:m)}\right) = \prod_{i=1}^{m}(2\pi T)^{-1/2}\exp\left[ -\frac{\|\bmy^{(i)}-\bmx^{(i)}\|^2}{2T}\right].
\end{equation}
Note that $f_{\bmy|\bmx}$ is a Gaussian distribution, thus $\bmy^{(1:m)}|\bmx^{(1:m)}, \bm{A}\bmy^{(1:m)}=\bmc$ is also Gaussian.
Note that the conditional Gaussian density is given by 
$$
p(Y^{(1)},\dots,Y^{(m)}|\bmx^{(1:m)}, \bm{A}\bmy^{(1:m)}=\bm{c}) = \frac{f_{\bmy|\bmx}\left(\bmy^{(1:m)}|\bmx^{(1:m)}\right) \indi_{\bm{c}}(\bm{A}\bmy^{(1:m)})}{Z_{\mcal{H}}(\bmx^{(1:m)})},
$$
where the normalizing constant $Z_{\mcal{H}}(\bmx^{(1:m)})$ is the density function of random variable $\bm{A}\bmy\sim \mcal{N}\left(\bm{A}\bmx^{(1:m)}, T\bm{A}\bm{A}^{\top}\right)$ evaluated at $\bmc$, i.e.,
\begin{equation}
Z_{\mcal{H}}(\bmx^{(1:m)}) \propto \exp\left[-\frac{1}{2T}(\bm{c}-\bm{A}\bmx^{(1:m)})^\top (\bm{A}\bm{A}^\top)^{-1}(\bm{c}-\bm{A}\bmx^{(1:m)})\right]<1,\label{eq:ap1_eva}
\end{equation}
where $T$ is the length of the diffusion bridges.
To sample from the proposal distribution $h_{\mcal{H}}$, we just need to simulate $\boldsymbol x^{(i)}$ from each $f_i$ and then simulate the Gaussian random variables $\boldsymbol y^{(1:m)}$ given $\boldsymbol x^{(1:m)}$ and $\bm{A}\bmy^{(1:m)}=\bmc$.
Then followed by a rejection step with acceptance probability $Z_{\mcal{H}}\left(\bmx^{(1:m)}\right) < 1$. 
Simulation from linearly constrained Gaussian distribution has been studied in multiple papers \citep{vrins2018sampling,cong2017fast}.
Appendix \ref{appx:gaussian_sample} presents the routine that is utilized in our algorithm.
We also include a toy example in Appendix \ref{appx:toy_ex}.

\subsubsection{Linearly Constrained Gaussian}\label{appx:gaussian_sample}
\noindent
Consider the following constrained Gaussian distribution
\begin{equation}
    \bm{X}\sim\mathcal{N}(\bm{\mu},\bm{\Sigma}) \quad \text{ subject to } \quad \bm{A}X=\bm{c},\,\bm{A}\in\mathbb{R}^{k\times n},\, k<n.\label{eq:simX}
\end{equation}
Since the covariance matrix is always positive definite, $\bm{\Sigma}$ can be decomposed into $\bm{\Sigma} = \bm{UDU}^\top$, where $\bm{D}$ is a diagonal matrix with positive diagonal entries and $\bm{U}$ is orthogonal.
Then
\begin{equation*}
 \bm{\Sigma} = (\bm{UD}^{\frac{1}{2}})(\bm{UD}^{\frac{1}{2}})^\top
\end{equation*}
where $\bm{D}^{\frac{1}{2}}$ is computed by taking square root of each diagonal entry of $\bm{D}$.
Let $\bm{Z}$ follow a standard multivariate Gaussian distribution, then $\bm{X}\stackrel{d}{=}(\bm{UD}^{\frac{1}{2}})\bm{Z}+\bm{\mu}$.
Thus the simulation problem (\ref{eq:simX}) is equivalent to simulate
\begin{equation*}
 \bm{Z}\sim\mathcal{N}(0,\bm{I}_d) \quad \text{ given that }\quad (\bm{AUD}^{\frac{1}{2}})\bm{Z}=\bm{c}-\bm{A}\bm{\mu}
\end{equation*}
To simplify the notation, let $\bm{B} = \bm{AUD}^{\frac{1}{2}}$, $\bm{\alpha} = \bm{c}-\bm{A}\bm{\mu}$ and instead consider the following problem:
\begin{equation}\nonumber
 \bm{Z}\sim\mathcal{N}(0,\bm{I}_n) \quad \text{ given that }\quad \bm{B}\bm{Z}=\bm{\alpha}
\end{equation}
Let $\bm{B}=\bm{PWQ}^\top$ be a singular value decomposition of $\bm{B}$, where $\bm{P}\in\mbb{R}^{k\times k}$, $\bm{Q}\in\mbb{R}^{n\times n}$ are orthogonal matrices and $\bm{W}\in\mbb{R}^{k\times n}$ is a rectangular diagonal matrix with non-negative entries on the diagonal, i.e.
\begin{equation*}
 \bm{W}=\begin{bmatrix}
    w_1 & 0 & \cdots & 0 &0 & \cdots &0 \\
    0  & w_2 & \ddots & \vdots & 0 & \cdots & 0\\
    \vdots & \ddots &  \ddots & 0&0 & \cdots & 0 \\
    0 & \cdots & 0 & w_{k} & 0 & \cdots & 0 \\
   \end{bmatrix}, w_i \geq 0,\, i\in{1,2,\dots,k}
\end{equation*}
Then the constraint can be expressed as
\begin{equation}\label{eq:constraint}
 \bm{W}(\bm{Q}^\top \bm{Z}) = \bm{P}^\top \bm{\alpha}
\end{equation}
Let $\bm{Y}:=\bm{Q}^\top \bm{Z}\sim\mathcal{N}(\bm{0},\underbrace{\bm{Q}^\top \bm{I}_n\bm{Q}}_{=\bm{I}_n})$. 
Let $y_i$ denote the $i$th element of $\bm{Y}$, and $\tilde{\alpha}_i$ denote the $i$th element of $\bm{P}^\top\bm{\alpha}$. 
Then the constraint (\ref{eq:constraint}) is deterministic
\begin{align*}
 y_1 & = \tilde{\alpha}_1/w_1 \\
 \vdots & \\
 y_k & = \tilde{\alpha}_m/w_k
\end{align*}
Since $\bm{Y}$ is a standard multivariate Normal distribution, thus $[Y_{k+1},\dots,Y_n]$ conditioned on $[Y_1,\dots,Y_k]$ is still a standard Normal distribution. 
Thus the simulation can be done as follows:
\begin{enumerate}[label=(\roman*)]
 \item Compute the deterministic terms of $\bm{Y}$, $y_1,\dots,y_k$;
 \item Simulate the rest of $\bm{Y}$ given the deterministic terms;
 \item Recover $\bm{Z} = \bm{Q}\bm{Y}$;
 \item Recover $\bm{X}=(\bm{UD}^{\frac{1}{2}})\bm{Z}+\bm{\mu}$
\end{enumerate}
\begin{remark}
 In the constrained fusion algorithm, the covariance matrix $\bm{\Sigma}$ is always a diagonal matrix thus decomposition of $\bm{\Sigma}$ is not required.
\end{remark}

\subsection{Simulation from Spherically Constrained Proposal}\label{appx:spherical_constraint}
Let $\mcal{S}_{\bmc,r}^{p-1} := \{ \bmx\in\mbb{R}^p : \|\bmx-\bmc\|_2=r\}$ denote the $(p-1)-$sphere centred at $\bmc$ with radius $r>0$.
If we replace the linear constraint with a spherical constraint, the Brownian motion proposal is now given by
\begin{equation} \label{eq:spherical_1}
     h \left(\bm{x}^{(1:m)},\bm{y}^{(1:m)}\right) \propto \left(\prod_{i=1}^m f_i(\bmx^{(i)})\right)\, \prod_{i=1}^{m}(2\pi T)^{-1/2}\exp\left[ -\frac{\|\bmy^{(i)}-\bmx^{(i)}\|^2}{2T}\right] \delta_{ \mcal{S}_{\bmc,r}^{md-1}}\left(\bmy^{(1:m)}\right)
\end{equation}
where $\bmx^{(1:m)},\bmy^{(1:m)}\in\mbb{R}^{md}$ being the start and end points of the $m$ separate Brownian bridges. 
Notice that the spherical constraint on $\bmy^{(1:m)}$ can be rearranged into an equation of form $\sum_i (y^{(i)})^2 = \sum_i a_iy^{(i)} + b$. 
Thus all the second-order terms in the exponential in (\ref{eq:spherical_1}) can be removed and the resulting density function resembles the density of a von Mises-Fisher distribution.
\begin{definition}[Scaled and Shifted von Mises-Fisher Distribution]
Let $I_{\nu}$ denote the modified Bessel function of the first kind at order $\nu$.
The von Mises-Fisher distribution on $(d-1)-$sphere, denoted by $\vmf(\bm{\mu}, \kappa, \bm{c}, r)$, where $\bm{c}\in \mbb{R}^d$, $\bm{\mu}\in\mcal{S}_{\bmc,r}^{d-1}$, $\kappa\geq 0$, $r>0$, has density function
\begin{equation}
    f_{\vmf}(\bm{x})= \frac{C_d(\kappa)}{r} \exp\left(\frac{\kappa}{r}(\bm{\mu}-\bmc)^{\top}(\bmx-\bmc)\right), \quad \bmx\in\mcal{S}_{\bmc,r}^{d-1}
\end{equation}
where
$$
C_d(\kappa) = \frac{\kappa^{d/2-2}}{(2\pi)^{d/2}I_{d/2-1}(\kappa)},
$$
\end{definition}
To sample from $h_{\mcal{H}}$, we may consider the function $\tilde{h}$ given by
\begin{equation}
    \tilde{h}(\bmx^{(1:m)},\bmy^{(1:m)}) = \prod_{i=1}^m f_i(x^{(i)}) f_{\vmf}\left(\bmy^{(i)}; \bm{\mu}(\bmx^{(1:m)}), \kappa, \bm{c}, r\right) 
\end{equation}
which is also defined on the product space of $\mbb{R}^{md}\times \mcal{S}_{\bmc,r}^{{md}-1}$.
\begin{lemma}\label{lemma:vmf_distribution}
Let $\alpha:=\|\bmx^{(1:m)}-\bmc\|_2^{-1}$, $\bm{\mu}(\bmx^{(1:m)}) = \bmc + \alpha(\bmx^{(1:m)}-\bmc)$, $\kappa=\frac{r^2}{\alpha T}$, we have
\begin{equation}
    \frac{h_{\mcal{H}}(\bmx^{(1:m)},\bmy^{(1:m)})}{\tilde{h}(\bmx^{(1:m)},\bmy^{(1:m)})} \propto \exp\left(--\frac{1}{2T}\|\bmx^{(1:m)}-\bmc\|^2\right)
\end{equation}
for any $(\bmx^{(1:m)},\bmy^{(1:m)})\in \mbb{R}^{md}\times \mcal{S}_{\bmc,r}^{md-1}$.
\end{lemma}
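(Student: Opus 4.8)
The plan is to reduce the claimed identity to a direct algebraic manipulation of Gaussian kernels on the sphere, exploiting that $h_{\mcal{H}}$ and $\tilde{h}$ share the common factor $\prod_{i=1}^m f_i(\bmx^{(i)})$, which cancels identically in the ratio. Writing $\bmx,\bmy$ for the stacked vectors $\bmx^{(1:m)},\bmy^{(1:m)}$, I would first merge the $m$ isotropic Gaussian bridge kernels in (\ref{eq:spherical_1}) into a single $md$-dimensional kernel, so that the only $\bmy$-dependent part of $h_{\mcal{H}}$ is $\exp\!\left[-\|\bmy-\bmx\|^2/(2T)\right]$ restricted to $\bmy\in\mcal{S}_{\bmc,r}^{md-1}$.

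The key step is to recenter the squared distance at $\bmc$, using $\|\bmy-\bmx\|^2 = \|\bmy-\bmc\|^2 - 2(\bmy-\bmc)^\top(\bmx-\bmc) + \|\bmx-\bmc\|^2$. On the constraint the first term is the constant $r^2$ (equivalently, the spherical relation $\sum_i(y^{(i)})^2=\sum_i a_i y^{(i)}+b$ noted below (\ref{eq:spherical_1}) removes all second-order terms in $\bmy$), so the surviving $\bmy$-dependence of $h_{\mcal{H}}$ collapses to the single linear exponential $\exp\!\left[(\bmy-\bmc)^\top(\bmx-\bmc)/T\right]$. This is precisely the functional form of a von Mises--Fisher kernel on $\mcal{S}_{\bmc,r}^{md-1}$, which is what motivates the choice of $\tilde{h}$.

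It then remains to match parameters. Comparing the linear coefficient $\tfrac1T(\bmx-\bmc)$ with the vMF exponent and using $\bm{\mu}-\bmc=\alpha(\bmx-\bmc)$, I would set the mean direction to $\bm{\mu}=\bmc+\alpha(\bmx-\bmc)$ with $\alpha=\|\bmx-\bmc\|_2^{-1}$ and pick the concentration $\kappa=r^2/(\alpha T)$ so that the vMF exponent reproduces exactly $(\bmy-\bmc)^\top(\bmx-\bmc)/T$. With this choice all $\bmy$-dependence cancels in $h_{\mcal{H}}/\tilde{h}$, leaving only $\bmx$-dependent factors $\exp\!\left[-(r^2+\|\bmx-\bmc\|^2)/(2T)\right]$ together with the Gaussian prefactor $(2\pi T)^{-md/2}$ and the vMF normalizer. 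Absorbing every factor that is free of $(\bmx,\bmy)$ — notably $r^2$ and $(2\pi T)^{-md/2}$ — into the proportionality constant yields $\exp\!\left[-\tfrac{1}{2T}\|\bmx-\bmc\|^2\right]$, as asserted.

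The hard part here is the bookkeeping of normalizing constants rather than the algebra. The vMF normalizer $C_{md}(\kappa)/r$ enters $\tilde{h}$, and through $\kappa=r^2/(\alpha T)$ it depends on $\bmx$; I would need to argue this factor is handled consistently — it is nothing other than the normalizing constant $Z_{\mcal{H}}(\bmx)$ of the Gaussian restricted to the sphere, the spherical analogue of (\ref{eq:ap1_eva}) — so that it does not disturb the stated proportionality in $\bmx$. A secondary point requiring care is the substitution $\|\bmy-\bmc\|^2=r^2$: one must justify, via the surface (Riemann--Lebesgue) measure supplied by Lemma \ref{lemma:volume} and Theorem \ref{thm:constrained_rn}, that restricting to $\mcal{S}_{\bmc,r}^{md-1}$ legitimately turns the quadratic term into a constant inside the Radon--Nikodym derivative, exactly as in the linear case of Appendix \ref{appx:linear_constraint}.
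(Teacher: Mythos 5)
Your proposal is correct and takes essentially the same route as the paper's own proof: cancel the common factor $\prod_i f_i(\bmx^{(i)})$, expand the Gaussian exponent about $\bmc$, use $\|\bmy^{(1:m)}-\bmc\|^2=r^2$ on the constraint to eliminate the quadratic term in $\bmy^{(1:m)}$, and choose $\bm{\mu}$ and $\kappa$ exactly as in the statement so that the von Mises--Fisher exponent cancels the surviving linear term $\tfrac{1}{T}(\bmy^{(1:m)}-\bmc)^{\top}(\bmx^{(1:m)}-\bmc)$, leaving $\exp\left(-\tfrac{1}{2T}\|\bmx^{(1:m)}-\bmc\|^2\right)$. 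The point you flag as the hard part --- the $\bmx^{(1:m)}$-dependence of the von Mises--Fisher normalizer $C_{md}(\kappa)/r$ through $\kappa=r^2\|\bmx^{(1:m)}-\bmc\|/T$ --- is a genuine subtlety, but the paper's proof does not resolve it either: it compares only the exponential kernels and never accounts for $C_{md}(\kappa)$, so your attempt is, if anything, the more careful of the two.
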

Thus, by Lemma \ref{lemma:vmf_distribution}, we can use the von Mises-Fisher proposal to generate samples that land on a $(m-1)-$sphere followed by a rejection step with acceptance probability given by $Z_{\mcal{H}}(\bmx^{(1:m)})\propto \exp\left(--\frac{1}{2T}\|\bmx^{(1:m)}-\bmc\|^2\right)$.

\begin{proof}
    Recall that
    $$
    h(\bmx^{(1:m)},\bmy^{(1:m)}) \propto \prod_{i=1}^m f_i(\bmx^{(i)}) \exp\left(-\frac{1}{2T}\|\bmx^{(i)}-\bmy^{(i)}\|^2\right)
    $$
    defined on $\mbb{R}^{md}\times \mcal{S}_{c,r}^{p-1}$, i.e., $\|\bmy^{(1:m)}-\bmc\|=r$.
    Note that on the constraint, 
    $$
     \|\bmy^{(1:m)}\|^2 = r^2+2\bmy^{\top}\bmc -\|\bmc\|^2.
    $$
    Thus we may remove all the second-order terms in the proposal $h$
    \begin{align*}
     \exp\left[-\frac{1}{2T}\|\bmy^{(1:m)}-\bmx^{(1:m)}\|^2\right] &= \exp\left[-\frac{1}{2T}\|\bmy^{(1:m)}\|^2 - 2(\bmy^{(1:m)})^{\top}\bmx^{(1:m)} + + \|\bmx^{(1:m)}\|^2 \right] \\
     & =  \exp\left[-\frac{1}{2T} (r^2 + 2(\bmy^{(1:m)})^{\top}(\bmc-\bmx^{(1:m)}) + \|\bmx^{(1:m)}\|^2 -\|\bmc\|^2) \right] \\
     & = \exp\left[-\frac{1}{2T} (r^2 + 2(\bmy^{(1:m)} - \bmx^{(1:m)}-\bmc)^{\top}(\bmc-\bmx^{(1:m)}) \right] \\
     & \propto \exp\left[-\frac{1}{2T} (2(\bmy^{(1:m)} - \bmx^{(1:m)}-\bmc)^{\top}(\bmc-\bmx^{(1:m)}) \right].
    \end{align*}
    Now
    \begin{align*}
       \frac{h(\bmx^{(1:m)},\bmy^{(1:m)})}{\tilde{h}(\bmx^{(1:m)},\bmy^{(1:m)})} \propto & \exp\left[ -\frac{1}{2T} (2(\bmy^{(1:m)} - \bmx^{(1:m)}-\bmc)^{\top}(\bmc-\bmx^{(1:m)}) - \frac{\kappa}{r^2}(\mu(\bmx^{(1:m)}) - \bmc)^{\top}(\bmy^{(1:m)}-\bmc)\right] \\
       = & \exp\left[ -\frac{2}{2T} ((\bmy^{(1:m)} -\bmc)^{\top}(\bmc-\bmx^{(1:m)}) - \frac{1}{2T}\|\bmx^{(1:m)}-\bmc\|^2 - \frac{\kappa}{r^2}(\mu(\bmx^{(1:m)}) - \bmc)^{\top}(\bmy^{(1:m)}-\bmc)\right] \\
       = & \exp\left(-\frac{1}{2T}\|\bmx^{(1:m)}-\bmc\|^2\right)\exp\left[(\bmy^{(1:m)}-\bmc)^{\top}\left(\frac{1}{T}(\bmx^{(1:m)}-\bmc)-\frac{\kappa}{r^2}(\mu(\bmx^{(1:m)}) - \bmc)\right)\right].
    \end{align*}
    By letting 
    $$
    \mu(\bmx^{(1:m)}) = \bmc + \alpha(\bmx^{(1:m)}-\bmc),\,\,\,\, \alpha=\|\bmx^{(1:m)}-\bmc\|^{-1} \text{ and } \kappa = \frac{r^2}{\alpha T},
    $$
    the second exponential term will be canceled, giving the desired expression.
\end{proof}

\subsection{Simulation from Arbitrary Manifold}
For arbitrary manifold constraints, we rely on other MCMC algorithms to first generate samples uniformly from the constraint, for instance, the Constrained HMC algorithm\citep{lelievre2019hybrid}.
The benefit of this compared with direct sampling from the constrained target is that one only needs to verify the convergence property of the MCMC sampler on the constrained uniform case.
Thus for any target distribution, only one sampler needs to be tuned for each type of constraint up to translation, rotation, scaling, etc.
We summarize the algorithm in Algorithm \ref{alg:CMCF-2}.
\begin{algorithm}[t]
 \SetAlgoLined
 \SetKwInOut{Input}{input}
\Input{Manifold Constraint $\mcal{H}$; component distributions $f_i,i=1,\dots,C$; parameter $T$}
 Simulate, for each $1\leq i\leq m$, $\bm{x}^{(i)}\sim f_i(\cdot)$ \;
 Simulate $\bm{y}=(\bmy^{(1)},\dots,\bmy^{(m)})\sim \mcal{U}(\mcal{H})$, the uniform distribution on constraint set $\mcal{H}$\;
 Simulate a uniform random variable $U_1\in\mcal{U}[0,1]$\;
 \uIf{$\log U_1\leq \|\bmy^{(1:m)}-\bmx^{(1:m)}\|_2^2/2T$}{
    \For{$i=1,...,m$}{
        Simulate a Brownian Bridge of length $T$ connecting $\bm{x}^{(i)}$ and $\bm{y}^{(i)}$\;
    }
    Let $U_2\in\mcal{U}[0,1]$ and simulate the event ${\cal I}$ given by expression (\ref{eq:theorem1}), see Appendix \ref{appx:poisson_process}\;
    \uIf{$\mcal{I}$ is true}
    {
        Accept and return $\bm{y}^{(1:m)}$\;
    }
    \Else{
        Go back to step 1\;
    }
 }
 \Else {
    Go back to step 1\;
}
\caption{Constrained Fusion Sampler for Case 2}
\label{alg:CMCF-2}
\end{algorithm}

\section{Toy example for constrained sampling}\label{appx:toy_ex}
\begin{figure}[tb]
 \centering
 \begin{subfigure}[t]{0.48\linewidth}
  \centering
  \includegraphics[width=\linewidth]{./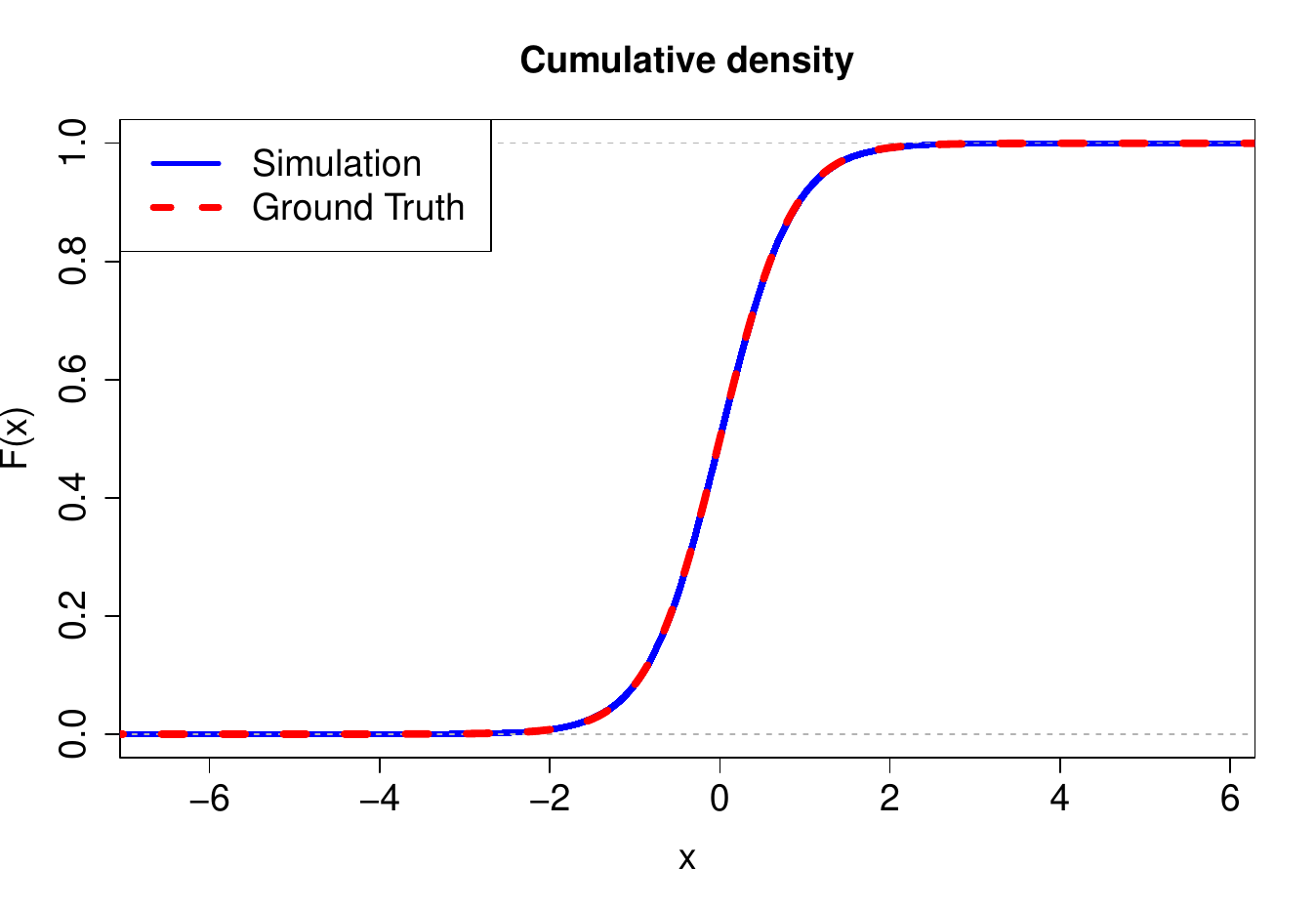}
  \caption{}
  \label{fig:cdf_1}
 \end{subfigure}
 \hfill
  \begin{subfigure}[t]{0.48\linewidth}
  \centering
  \includegraphics[width=\linewidth]{./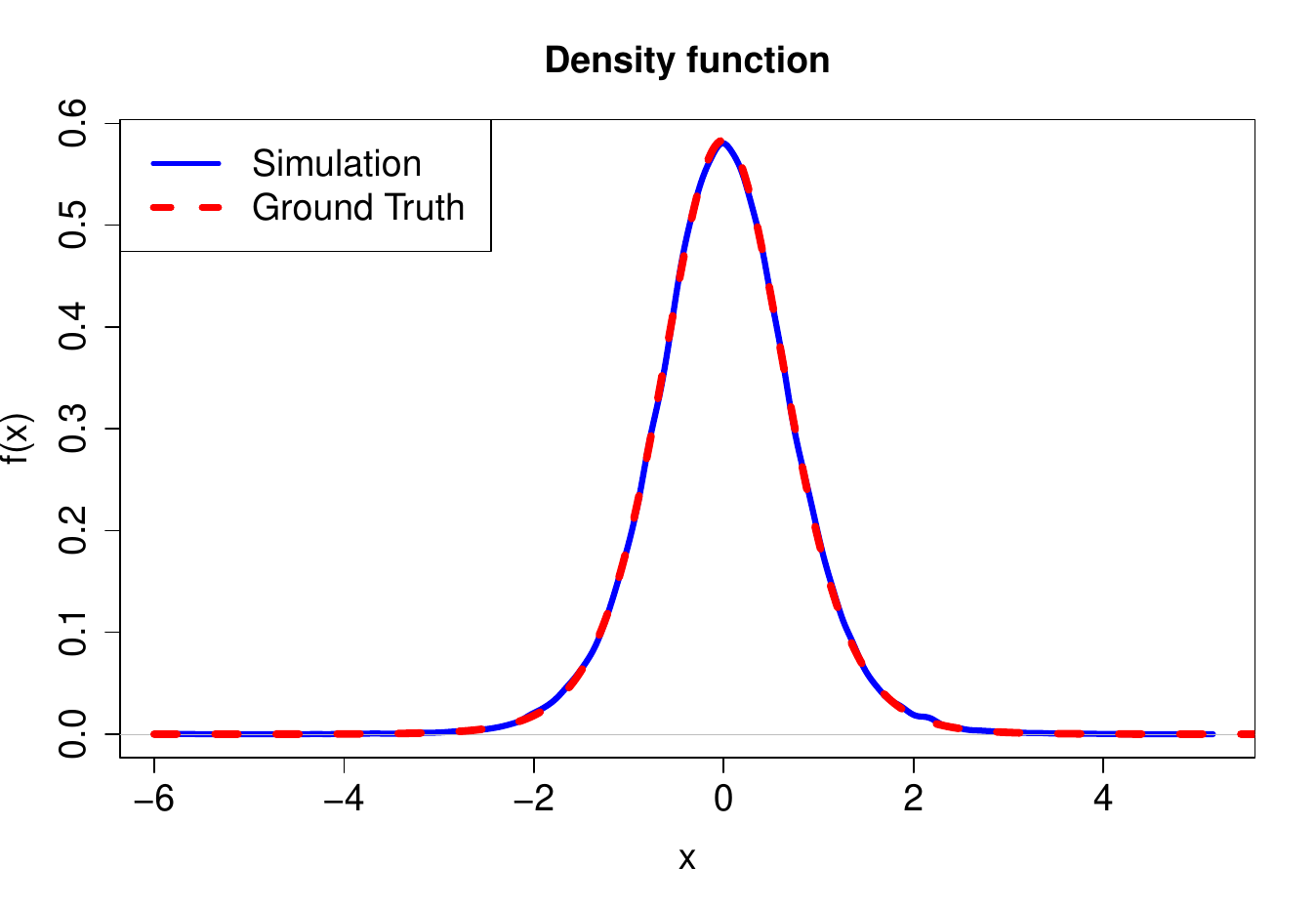}
  \caption{}
  \label{fig:pdf_1}
 \end{subfigure}
 \caption{CDF and PDF of simulated data against ground truth}
\end{figure}
We test the algorithm on a relatively simple setting.
Consider the density function
\begin{equation}
 f(x_1,x_2)\propto \left(1+\frac{x_1^2}{3}\right)^{-2} \left(1+\frac{x_2^2}{5}\right)^{-3}\indi_{x_1+x_2=0}\label{eq:toy_original}
\end{equation}
which is a product density of two distributions Student-${\mathtt T}_3(0)$ and Student-${\mathtt T}_5(0)$ subject to the constraint $x_1+x_2=0$.
We may resolve the constraint and rewrite the density as
\begin{equation}
 f(x_1)\propto \left(1+\frac{x_1^2}{3}\right)^{-2} \left(1+\frac{x_1^2}{5}\right)^{-3}.\label{eq:toy_simp}
\end{equation}
The normalizing constant can be computed numerically.

In the simulation, Algorithm \ref{alg:CMCF-1} is applied to sample from (\ref{eq:toy_original}), gathering 10,000 samples. 
Using these samples, we compute the fitted density function and cumulative function using \textit{ecdf} and \textit{density} function provided by R.
The fitted functions are plotted against the ground truth in Fig. \ref{fig:cdf_1} and \ref{fig:pdf_1}.
From the figures we can see that the fitted densities exactly match the ground truth, hence the validity of the algorithm is verified not only by theory but also by simulation.

\section{Generalized Logistic Distribution}\label{appx:genlog}

\subsection{Basics of Generalized Logistic Distribution}
\begin{definition}[Generalized Logistic Distribution]
    Let $\alpha,\beta,\gamma>0$, $C\in\mbb{R}$, and $Y_1\sim\Gamma(\alpha,1)$, $Y_2\sim\Gamma(\beta,1)$.
    Let 
    $$X:=\gamma \log\left(\frac{X_1}{X_2}\right) + C,$$
    then $X$ is said to follow a Generalized Logistic distribution with parameter $(\alpha,\beta,\gamma,C)$, denoted $X\sim \GenLog(\alpha,\beta,\gamma,C)$.
\end{definition}
\begin{proposition}
    Let $X\sim \GenLog(\alpha,\beta,\gamma,C)$, then the density function $f_X(x)$ is given by
    $$
    f_X(x) = \frac{\Gamma(\alpha+\beta)}{\gamma\Gamma(\alpha)\Gamma(\beta)}\left[1+\exp\left(-\frac{x-C}{\gamma}\right)\right]^{-\alpha} \left[ 1+\exp\left(\frac{x-C}{\gamma}\right)\right]^{-\beta}.
    $$
    Moreover, the first four cumulants (or centralized moments) are given by
    \begin{align*}
        \kappa_1 = & \mbb{E}[X] = C+\gamma(\Psi(\alpha)-\Psi(\beta))\\
        \kappa_2 = & \Var(X) = \gamma^2(\Psi'(\alpha) + \Psi'(\beta)) \,> \,0\\
        \kappa_3 = & \Skew(X) = \gamma^3(\Psi''(\alpha)-\Psi''(\beta))\\
        \kappa_4 = & \kurt(X) = \gamma^4(\Psi'''(\alpha) - \Psi'''(\beta)) \, > \, 0
    \end{align*}
    where $\Psi(u) = \frac{\mtt{d}}{\mtt{d}u}\log\Gamma(u)$ is the "digamma" function.
\end{proposition}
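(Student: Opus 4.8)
The plan is to handle the two assertions separately: first obtain the density by a change of variables from the defining pair $(Y_1,Y_2)$, and then read off the cumulants from the cumulant generating function (CGF) of $\log Y$ for a $\Gamma(a,1)$ variable.

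For the density I would reduce to the case $\gamma=1$, $C=0$ by writing $X=\gamma Z+C$ with $Z:=\log(Y_1/Y_2)$, so that $f_X(x)=\gamma^{-1}f_Z\!\left((x-C)/\gamma\right)$. To find $f_Z$ I would start from the joint density of the independent gammas, $\tfrac{1}{\Gamma(\alpha)\Gamma(\beta)}y_1^{\alpha-1}y_2^{\beta-1}e^{-y_1-y_2}$, change variables to the ratio $R=Y_1/Y_2$ and the nuisance variable $S=Y_2$ (Jacobian $s$), and integrate out $s$ using $\int_0^\infty s^{\alpha+\beta-1}e^{-s(1+r)}\,\dd s=\Gamma(\alpha+\beta)(1+r)^{-(\alpha+\beta)}$. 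This yields the Beta-prime density
\[
f_R(r)=\frac{\Gamma(\alpha+\beta)}{\Gamma(\alpha)\Gamma(\beta)}\,\frac{r^{\alpha-1}}{(1+r)^{\alpha+\beta}},\qquad r>0.
\]
Substituting $R=e^{Z}$ (so $\dd r=e^{z}\,\dd z$) gives $f_Z(z)=\tfrac{\Gamma(\alpha+\beta)}{\Gamma(\alpha)\Gamma(\beta)}\,e^{\alpha z}(1+e^{z})^{-(\alpha+\beta)}$. The only non-mechanical step is to recognise the identity $e^{\alpha z}(1+e^{z})^{-(\alpha+\beta)}=(1+e^{-z})^{-\alpha}(1+e^{z})^{-\beta}$, after which the affine substitution $z=(x-C)/\gamma$ produces exactly the stated $f_X$.

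For the cumulants, the key observation is that for $Y\sim\Gamma(a,1)$ the Mellin-type identity $\mbb{E}[Y^{t}]=\Gamma(a+t)/\Gamma(a)$ is precisely the moment generating function of $\log Y$, whence its CGF is $K_{\log Y}(t)=\log\Gamma(a+t)-\log\Gamma(a)$. Differentiating at $t=0$ identifies the $n$-th cumulant of $\log Y$ with the polygamma value $\Psi^{(n-1)}(a)$. Since $Z=\log Y_1-\log Y_2$ has independent summands, I would add CGFs while replacing $t$ by $-t$ in the second term, obtaining $K_Z(t)=\log\Gamma(\alpha+t)+\log\Gamma(\beta-t)-\log\Gamma(\alpha)-\log\Gamma(\beta)$. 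Each derivative of the $\beta$-term carries a chain-rule factor $-1$, so $\kappa_n(Z)=\Psi^{(n-1)}(\alpha)+(-1)^{n}\Psi^{(n-1)}(\beta)$. Finally, using $\kappa_1(X)=C+\gamma\,\kappa_1(Z)$ and $\kappa_n(X)=\gamma^{n}\kappa_n(Z)$ for $n\ge 2$ (affine invariance of higher cumulants) delivers the four formulas, and positivity of $\kappa_2,\kappa_4$ follows from the series $\Psi'(x)=\sum_{k\ge 0}(x+k)^{-2}>0$ and $\Psi'''(x)=6\sum_{k\ge 0}(x+k)^{-4}>0$.

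The main obstacle is pure sign bookkeeping rather than any analytic difficulty: one must track the alternating factor $(-1)^{n}$ coming from the $\log\Gamma(\beta-t)$ term. In particular the case $n=4$ gives $\kappa_4(X)=\gamma^{4}\bigl(\Psi'''(\alpha)+\Psi'''(\beta)\bigr)$, which is manifestly positive and hence consistent with the claimed $\kappa_4>0$; this indicates that the minus sign in the displayed $\kappa_4$ should in fact be a plus, and I would verify this entry directly against the CGF derivation before finalising.
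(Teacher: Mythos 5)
Your proof is correct, and it is worth noting that it supplies more than the paper does: the paper's entire ``proof'' of this proposition is a citation to Section 4 of Halliwell (2018). Your two-step argument --- the change of variables $(Y_1,Y_2)\to(R,S)$ giving the Beta-prime density, followed by the logistic substitution for $f_X$, and then the Mellin identity $\mbb{E}[Y^t]=\Gamma(a+t)/\Gamma(a)$ yielding the cumulant generating function $K_Z(t)=\log\Gamma(\alpha+t)+\log\Gamma(\beta-t)-\log\Gamma(\alpha)-\log\Gamma(\beta)$ --- is exactly the standard route taken in that reference, so there is no real methodological divergence; you have simply made the delegated argument explicit and self-contained.

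The genuinely valuable part of your write-up is the sign check at the end, and you are right: differentiating $K_Z$ four times, the $\log\Gamma(\beta-t)$ term contributes $(-1)^4\Psi'''(\beta)$, so the fourth cumulant is $\kappa_4=\gamma^4\left(\Psi'''(\alpha)+\Psi'''(\beta)\right)$, not $\gamma^4\left(\Psi'''(\alpha)-\Psi'''(\beta)\right)$ as displayed in the statement. The minus version is inconsistent with the paper's own claim $\kappa_4>0$: since $\Psi'''(x)=6\sum_{k\ge 0}(x+k)^{-4}$ is positive and strictly decreasing, $\Psi'''(\alpha)-\Psi'''(\beta)<0$ whenever $\alpha>\beta$, whereas the plus version makes positivity automatic, exactly as you argue. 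Be aware that the same mis-signed expression reappears in the paper's moment-fitting equations in Appendix D.2 (the equation matching $\hat{\kappa}_4$), so an implementation following the paper literally would fit the wrong quantity; your derivation identifies and repairs this.
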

\begin{proof}
See, for instance, Section 4 of \cite{halliwell2018log}.
\end{proof}

The generalized logistic distribution can fit both positively and negatively skewed data by properly setting the values of $\alpha$ and $\beta$ (or unskewed data with $\alpha=\beta$).
The fourth cumulant is always positive, meaning that it has a heavier tail than a Gaussian distribution which is desirable for modeling non-Gaussian residues.
It is also desirable that the $\phi$ function (used in the second rejection step of Algorithm \ref{alg:CMCF-1}) has a global bound of $\max\{\alpha^2,\beta^2\}/(2\gamma^2)$.
This means that the algorithm will be more consistent and efficient when generating samples from this distribution, compared with other distributions that have unbounded $\phi$.

\subsection{Generalized Logistic Distribution Regression}
Here, we will only present a naive way to fit regression and distribution parameters for an auto-regressive model (or a linear regression model) with an identity link function where the residue error is modeled by the \emph{Generalized Logistic Distribution}.

Suppose that we have a time series data $Y_t, t=\in\{1,\dots,n\}$ and the AR model is order $k$.
Let $\Xi_t\in\mbb{R}^d$ denote the extra regressors for the prediction of time $t$.
Let $\Phi\in\mbb{R}^{k}$ be the AR coefficients and $\psi\in\mbb{R}^{d}$ be the regression coefficients
Then the regression model can be written as
\begin{equation} \label{eq:appx_regression}
    Y_t = \underbrace{\psi_0 + \sum_{r=1}^k \Phi_r Y_{t-r} + \Xi_t \bm{\psi}}_{\mu_t} + \epsilon_t, \qquad \epsilon_t\sim \GenLog(\alpha,\beta,\gamma,C).
\end{equation}
Thus the goal is to fit the coefficients $\Phi$ and $\psi$, and the distribution parameters $\alpha,\beta,\gamma,C$.

The naive although not optimal way is to first determine the parameters $\Phi$ and $\psi$ by treating it as a simple linear regression.
By vectorizing and stacking (\ref{eq:appx_regression}), we can rewrite the equation into
$$
\bmY = \bmX \tilde{\Phi} + \bm{\epsilon}
$$
where $\bmY$ is the response vector $\left[Y_{k+1},\dots,Y_n\right]^{\top}\in\mbb{R}^{n-k}$, $\bmX\in\mbb{R}^{(n-k)\times (1+d+k)}$ is the design matrix with an added column of ones for intercept and $\tilde{\Phi} = [\psi_0, \Phi,\psi]^{\top}\in\mbb{R}^{1+d+k}$.
Use $(\bmX^{\top}\bmX)^{-1}\bmX^{\top}\bmy$ to approximate $\tilde{\Phi}$ and compute the residues $\bm{\epsilon}$.

Finally, we can determine the distribution parameters $\alpha,\beta,\gamma,C$ according to the residues.
Let $\hat{\kappa}_2$, $\hat{\kappa}_3$ and $\hat{\kappa}_4$ be the variance, skewness and excess kurtosis of the residue vector $\bm{\epsilon}$, we can implement a non-linear solver to solve the system of equations:
\begin{equation*}
    \begin{cases}
        \gamma^2(\Psi'(\alpha) + \Psi'(\beta)) = \hat{\kappa}_2\\
        \gamma^3(\Psi''(\alpha)-\Psi''(\beta)) = \hat{\kappa}_3\\
        \gamma^4(\Psi'''(\alpha) - \Psi'''(\beta)) = \hat{\kappa}_4. 
    \end{cases}
\end{equation*}
After solving for $\alpha$, $\beta$, $\gamma$, set
$
C = -\gamma(\Psi(\alpha)-\Psi(\beta))
$
to make the residue distribution have zero mean.
Since shifting the Generalized logistic distribution is equivalent to shifting its parameter $C$, the resulting predictor $Y_t=\mu_t +\epsilon_t \sim \GenLog(\alpha,\beta,\gamma, C+\mu_t)$.

\begin{remark}
    In practice, the excess kurtosis from the residues may be negative and the solver will not produce a valid solution.
    We manually set $\hat{\kappa}_4$ to be $0$ in these cases and use a non-linear optimizer to produce an approximate fit as close as possible.
\end{remark}
\begin{remark}
    In order for the sampling algorithm to have stable performance, we need to avoid the situation where $\alpha$, $\beta$, and $\gamma$ are fitted to insensible values.
    To make the fitting more robust, we utilized an optimizer to minimize the quadratic difference between the fitted cumulants $\kappa_{2:4}$ and the empirical cumulants $\hat{\kappa}_{2:4}$ subject to an $L_2$ regularization to keep the parameters small.
    Thus, the problem becomes
    \begin{align*}
        \argmin_{\alpha,\beta,\gamma} \quad & \left[\gamma^2(\Psi'(\alpha) + \Psi'(\beta)) - \hat{\kappa}_2\right]^2 + 
          \left[\gamma^3(\Psi''(\alpha)-\Psi''(\beta)) - \hat{\kappa}_3\right]^2  \\
        & \qquad+\left[\gamma^4(\Psi'''(\alpha) - \Psi'''(\beta)) - \hat{\kappa}_4\right]^2 + \lambda_1(\alpha^2+\beta^2) + \lambda_2\gamma^2 \\
        \text{subject to} \quad & \quad \alpha,\beta,\gamma>0.
    \end{align*}
    $\lambda_1$ and $\lambda_2$ can be very small when the data (residue) is suitable.
    This may require one to apply some scaling to the raw data.
    In our case, $\lambda_1=10^{-3}$ and $\lambda_2=10^{-6}$.
\end{remark}

\section{Simulation Studies}\label{appx:application}
\subsection{Covariates in Study 1} \label{appx:covariates}
The following survey results are used in the Study 1 (Sec \ref{sec:study1}):
\begin{enumerate}
    \item Number of people over 15 years of age in your home;
    \item Number of people under 15 years of age in your home;
    \item Number of bedrooms in your home;
    \item Equipped with a washing machine?
    \item Equipped with tumble dryer?
    \item Equipped with dishwasher?
    \item Equipped with an electric cooker?
    \item Equipped with electric heater (plug-in convector heaters)?
    \item Equipped with stand-alone freezer;
\end{enumerate}

\subsection{Study 2: Max-Min Prediction}\label{sec:study2}
In this example, we consider the energy consumption modelling problem from the Western Power Distribution challenge\footnote{https://codalab.lisn.upsaclay.fr/competitions/213}.
Spikes in energy demand could strain the network and one might mitigate the effect by monitoring the usage and reacting to the surge.
However, monitoring the power usage with high-resolution reading can be expensive since this requires the installation of additional facilities and an ever-expanding data storage system. 
Thus, instead of monitoring with high-frequency in the long term, one might instead want to gather enough data to train a model to impute the high-frequency data and only maintain a low-frequency monitoring system.
The goal is to predict the peaks and troughs of high-frequency time series for each half-hour using its average power consumption.
{The peaks and troughs are measured with respect to the discretized reading at the higher frequency.}

\subsubsection*{Parameter Estimation}
For monitoring peaks and troughs of energy usage, we consider the time series at a much higher frequency than once per day.
The low-frequency observation stream will have a reading every 30 minutes and the high-frequency stream will have a reading every 6 minutes. 
Both time series record the average power usage within the time interval and the goal is to estimate the peak and trough values every 30 minutes in the 6-minute time series.

Using a similar notation as in (\ref{eq:ARmodel}), let $S_t^{(i)}$ denote the 30-minute readings and $Y_t^{(i)}$ denote the 6-minute readings where $t$ still denotes the day number.
Thus we aim to over-sample the original time series by $5$-fold.
The difference from the study in section \ref{sec:study1} is we need more than $5$ models to solve the problem since 30 minutes is not a valid cycle for energy consumption data.
Instead, we still use one day as the cycle, and thus we consider each 6-minute period in a day separately which requires a total of $24\times60/6=240$ separate AR models.
Due to the high correlation between $Y_t^{(i)}$ in index $i$, we considered a Farlie-Gumbel-Morgenstern (FGM) copula model to capture the correlation.
\begin{definition}[FGM Copula]
Let $\bm{U}:=(U_1,\dots,U_m)$ be a random vector following a $m$-variate FGM copula, where $U_i$ takes value in $[0,1]$ for each $i$. 
The FGM copula is parameterized by $\theta\in\mbb{R}^{2^m-m-1}$ where we index each dimension of $\theta$ by a non-empty, non-singleton subset of $\{1,\dots,m\}$.
The joint distribution function is given by
\begin{align*}
    C_m(u_1,\dots,u_m;\theta) = &\, \mbb{P}(U_1\leq u_1,\dots, U_m\leq u_m) \\
    = &\, \prod_{i=1}^m u_i \left(1 + \sum_{    k=2}^m \sum_{1\leq j_1<\cdots<j_k\leq m} \theta_{j_1,\dots,j_k}(1-u_{j_1})\cdots(1-u_{j_k})\right),
\end{align*}
\end{definition}
In practice, we generate $u_i$ from the copula distribution where $u_i$ represents the quantile position of the $i$-th component $Y^{(i)}$. 
Then $u_i$ is transformed into a sample for $Y_t^{(i)}$ through the inverse transformation of the marginal distribution function.
The marginal distributions of the copula model can still be captured by (\ref{eq:ARmodel}):
\begin{equation}\tag{\ref{eq:ARmodel}}
 Y_{t}^{(i)} \sim  \GenLog\left(\alpha^{(i)}, \beta^{(i)}, \gamma^{(i)}, C^{(i)} + \mu_t^{(i)}\right), \;\;\;\;\;\;
\mu_{t}^{(i)} = \sum_{r=1}^K \Phi_r^{(i)} Y_{t-r}^{(i)} + \bmXi_t \bm{\psi}^{(i)}
\end{equation}
where $i$ ranges from $1$ to $240$.
Each day has $48$ half-hours and each half-hour induces a constraint, $j=1,\dots,48$,
$$
\mcal{H}_t^{(j)}:= \left\{\left(Y_t^{(5(j-1)+1)},\dots, Y_t^{(5j)}\right): S_t^{(j)} = \sum_{i=1}^5 Y_t^{(5(j-1)+i)}\right\}.
$$
Weather data near the power station is used as additional covariates $\bmXi_t$ which include an hourly temperature and humidity reading.
Due to the copula structure, instead of having a product target density of $\prod_i f_i(y^{(i)})\indi_{\mcal{H}}$, the target density is a single density function with constraint $f^*(y^{(1)},\dots,y^{(m)})\indi_{\mcal{H}}$ where $f^*$ denotes the density function of the copula.
In other words, the collection $Y^{(1)},\dots,Y^{(m)}$ is considered as a single random variable of dimension $m$, yet, sampling from $f^*$ without constraint is still simple.

\subsubsection*{Result}
\begin{figure}[t]
    \centering
    \includegraphics[width=0.75\linewidth]{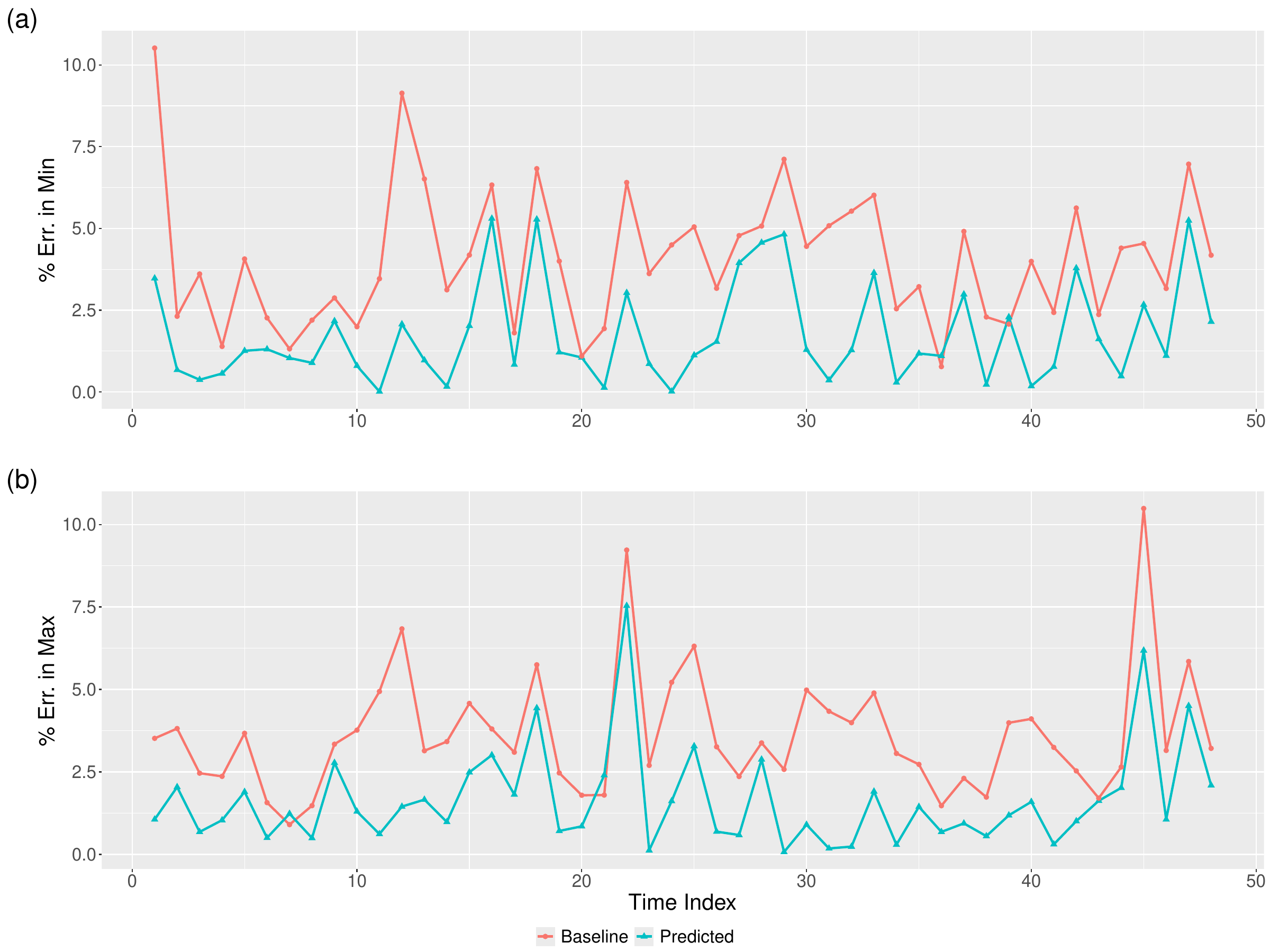}
    \caption{Percentage difference prediction of peak and trough values, comparing the constrained model with the baseline.}
    \label{fig:max_min}
\end{figure}
We used 3 months of high-frequency reading (from June to September 2021) to estimate the $240$ sets of parameters like in the first study.
Given the low-frequency readings for the subsequent 3 days, we imputed the high-frequency readings and extracted the peaks and troughs for each 30-minute period.
The peak and trough prediction performance is compared against the naive baseline which uses the 30-minute readings for both peak and trough estimates.
The result of peak and trough predictions is plotted in Fig. \ref{fig:max_min}, where red circles represent the baseline, green triangles represent the true values and blue squares represent the predicted values.
We can see clearly that the predicted values are closer to the true values than the baseline. 
To be more specific, the RMSE of our predicted values is only $55\%$ of the baseline RMSE.

\subsection{Study 3: Medicine Price Disaggregation}\label{sec:study3}
In this situation, we consider the problem of imputing the price of a certain generic medicine in different pharmacies given the sample mean and standard variation. Generic medicines are patent expired medicines which can be made by any company and most of the generic manufacturing now takes place in India and China. Once such imported generic medicine is approved by Medicines and Healthcare products Regulatory Agency (MHRA), any pharmacy can get their money back when the pharmacy gives the medicine to a patient, at a reimbursement rate. However, the pharmacy's purchasing price of the medicine may be much higher than the reimbursement rate due to short supply, which could lead to losses to the chemist. Therefore, it is important for the government to monitor the purchasing price regularly and introduce a different reimbursement price for medicines in short supply. In practice, we usually can obtain the purchasing price time series data from the past and due to privacy reasons, the recent months' data are given only in the form of summary statistics, i.e., mean and variance.
Let $Y^{(i)}_t$ be the price of the medicine in month $t$ at pharmacy $i$, then the constraints are given by
$$
 \mcal{H}_t:=\left\{\frac{1}{m}\sum_{i=1}^m Y^{(i)}_t = \mu_t,\qquad \frac{1}{m}\sum_{i=1}^m (Y^{(i)}_t-\mu_t)^2 = S_t\right\}.
$$
The goal is to fit a time series model for each $Y^{(i)}_t$ and sample for each month $t$ given the pair $(\mu_t,S_t)$.

\subsubsection*{Parameter Estimation}
For model fitting, we are using a similar autoregressive setting as in (\ref{eq:ARmodel}), except in this case we are modelling the error in price as Student's t-distributions and we use no extra covariates.
We assume the time series are independent with no additional covariance structure imposed.

\subsubsection*{Result}
For disaggregation we applied Algorithm \ref{alg:CMCF-2} corresponding to the second approach in Section \ref{sec:proposal_sampling} where instead of generating from constrained Gaussian, we sample the points uniformly from the constraint and apply a rejection step based on the Gaussian density.
We used the CHMC sampler proposed in \cite{lelievre2019hybrid} to generate points uniformly from the constraint.
\begin{figure}[t]
    \centering
    \includegraphics[width=0.85\linewidth]{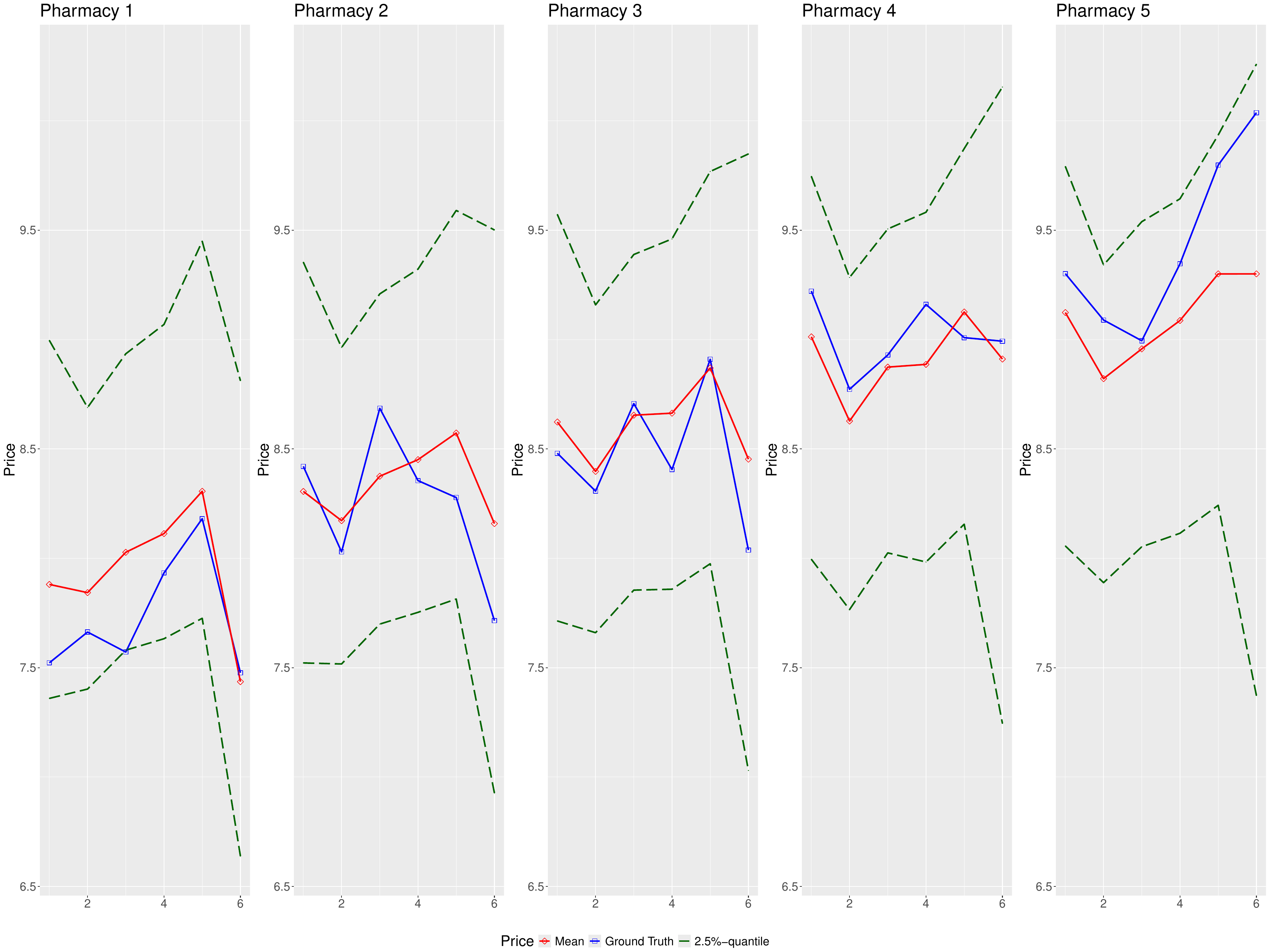}
    \caption{Medicine price disaggregated given sample mean and covariance.}
    \label{fig:medicine}
\end{figure}
Fig. \ref{fig:medicine} plots the disaggregated result given mean and sample variance. 
The mean value (in red) and $2.5\%,97.5\%$-quantiles (in green) are computed from the simulated samples and plotted against the ground truth (in blue).
Notably, since the sample variance is known in the simulation, the quantile lines quite precisely capture the uncertainties and at some point coincide with the ground truth value.

\section{Mean-Squared Error analysis}\label{appx:MSE}

\subsection{MSE analysis}
In this section, we consider a toy Gaussian example and use it to demonstrate why and when adding linear constraints can improve the accuracy of model forecasting (or using words `accuracy of data imputation').
For simplicity, we will focus on a toy Gaussian example without considering time-series data. 

Consider a simple linear regression setting.
Let $y^{(i)}\sim p(\cdot)$, $i=1,\cdots,m$, to be $n$ independent observations from the true population.
Through the regressoin model, each observation $y^{(i)}$ has a corresponding estimator $\hat{Y}^{(i)}$ with
$
 \hat{Y}^{(i)}\sim \mcal{N}\left(\hat{\mu}_i,\hat{\sigma}_i^2\right)
$
where $\hat{\mu}_i$ is the fitted mean of $y^{(i)}$.
Here we do not assume that $\hat{Y}^{(i)}$ is necessarily related to the target $y^{(i)}$ it intends to predict, i.e. $\hat{Y}^{(i)}$ may not be a sensible predictor of $y^{(i)}$ that may have a very large bias or uncertainty.
Consider another Normal random variable $S$ where
$$\hat{S} := \sum_{i=1}^m \hat{Y}^{(i)}, $$
then the joint distribution of $(\hat{Y}_1,\dots,\hat{Y}_m,\hat{S})$ is still a Multivariate Normal distribution with mean and variance given by
$$
 \hat{\bm{\mu}} = \begin{bmatrix}
        \hat{\mu}_1,&\cdots, &\hat{\mu}_m, & \sum_{i=1}^m\hat{\mu}_i
      \end{bmatrix}^{\top},$$ 
and 
$$ \hat{\bm{\Sigma}} = [
            \bm{D},   \bm{V} ;
            \bm{V}^\top, w^2 ]
$$
where 
$$\bm{D} := \text{diag}(\hat{\sigma}_1^2,\cdots, \hat{\sigma}_m^2),\quad \bm{V}:=\left[\hat{\sigma}_1^2,\cdots,\hat{\sigma}_m^2\right]^{\top},\quad w^2 := \sum_{i=1}^m\hat{\sigma}_i^2.$$
Let $\hat{\bm{Y}}$ denote the random vector of $(\hat{Y}_1,\dots,\hat{Y}_m)$, then its distribution conditioned on the constraint is
$
 \hat{\bm{Y}}\left|\hat{S}=s \right. \sim \mcal{N}\left(\mu^{*},\Sigma^{*}\right)
$
where
\begin{equation}
\bm{\mu}^*(s)  =  \left[\hat{\mu}_1 + \frac{\hat{\sigma}_1^2}{w^2}(s-\sum_{i=1}^m \hat{\mu}_i), \cdots, \hat{\mu}_m + \frac{\hat{\sigma}_n^2}{w^2}(s-\sum_{i=1}^m\mu_i)\right]^\top, \quad \bm{\Sigma}^*(s) = \bm{D} - \left[\frac{\hat{\sigma}_i^2\hat{\sigma}_j^2}{w^2}\right]_{i,j}^{1,m}.
\end{equation}
We examine the forecasting (imputation) performance in three aspects:
\begin{enumerate}
 \item \emph{Residue} of each predictor $\hat Y^{(i)}$ computed by $\alpha_i:=y^{(i)} -\hat{\mu}_i$
 \item \emph{Uncertainty} of the predictors computed by $\text{Var}(\hat{Y}^{(i)})$
 \item \emph{Mean-squared error} (MSE) computed by $\mbb{E}\left[\|\bm{\hat{Y}}-\bm{y}\|^2\right]$
\end{enumerate}
where MSE, measures both mean deviation and uncertainty, is a more comprehensive evaluation of the performance.
The potential improvement in MSE if the sum of the true values is incorporated into the model is given by:
\begin{equation}
  \mbb{E}\left[\|\hat{\bm{Y}}-\bm{y}\|^2\right] - \mbb{E}\left[\|\hat{\bm{Y}}-\bm{y}\|^2 \left| S = \sum_{i=1}^{m}\hat{Y}^{(i)}\right.\right] 
  =\underbrace{\sum_{i=1}^m \alpha_i^2-\sum_{i=1}^m\left(\alpha_i-\frac{\hat{\sigma}_i^2}{w^2}\sum_{j=1}^m\alpha_j\right)^2}_{\Psi_1} + \underbrace{\frac{1}{w^2}\sum_{i=1}^{m}(\hat{\sigma}_i^4)}_{\Psi_2}. \label{eq:mse_appen}
\end{equation}

\begin{remark}
 We can observe from $\Psi_1$ the formula for the constrained residue, namely $\alpha_i-\frac{\hat{\sigma}_i^2}{w^2}\sum_{j}\alpha_j$.
 Thus the correction on residue is in the favourable direction only for the components whose $\alpha_i$ has the same sign as $S_d:= \sum_{i=1}^m \alpha_i$ the overall deviation from the constraint.
 Thus, we expect to see the mean deviation improving in all dimensions if all components have over(under)-estimated the target.
 However, it is still possible for components with small $\alpha_i$ but large relative variance $\lambda_i:= \hat{\sigma}_i^2/w^2$ to be over-corrected and end up with a worse mean deviation.
\end{remark}
\begin{remark} 
 When the bias term $\Psi_1$ is not dominated by the variance term $\Psi_2$, then $\Psi_1$ can be negative and consequently  rendering (\ref{eq:mse_appen}) negative. 
 $\Psi_1$ can be viewed as a quadratic function of $\alpha_i$ for every $i$ and when the leading coefficient is negative, the function is more likely to take a value below zero. One sufficient condition for this to happen is when $\lambda_i:=\hat{\sigma}_i^2/w^2 <1-\sqrt{(m-1)/m}$.
\end{remark}
\begin{remark} \label{remark:special_case}
  One special case is when $\lambda_1=\cdots=\lambda_m$, in which case $\Psi_1$ is guaranteed to be non-negative and the constrained model is always better than the unconstrained model in terms of MSE. 
\end{remark}
We can deduce the following Propositions from equation (\ref{eq:mse_appen}).

\setcounter{theorem}{4}
\begin{proposition}[Garaunteed Uncertainty Reduction] \label{prop:domination}
The sum of constrianed variance $\text{tr}(\Sigma^*)$ is always less than the unconstrained variance with a reduction of
\[
\text{tr}(\bm{\Sigma}^*)-\sum_{i=1}^m \hat{\sigma}_i^2 = \frac{1}{w^2}\sum_{i=1}^{m}\hat{\sigma}_i^4, \quad \text{ where } w^2=\sum_{i=1}^m \hat{\sigma}^2_i.
\]
\end{proposition}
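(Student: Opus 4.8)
The plan is to read the result straight off the conditional-covariance formula already displayed, by taking its trace. Recall that conditioning the joint Gaussian vector $(\hat{\bm{Y}},\hat{S})$ on $\hat{S}=s$ gives, via the standard Schur-complement formula for Gaussian conditionals, the covariance $\bm{\Sigma}^*(s)=\bm{D}-\tfrac{1}{w^2}\bm{V}\bm{V}^\top$, which is exactly the matrix $\bm{D}-[\hat{\sigma}_i^2\hat{\sigma}_j^2/w^2]_{i,j}$ appearing in the statement. Note in particular that $\bm{\Sigma}^*$ does not depend on the conditioning value $s$, so the trace is unambiguous.

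First I would invoke linearity of the trace to write $\text{tr}(\bm{\Sigma}^*)=\text{tr}(\bm{D})-\tfrac{1}{w^2}\text{tr}(\bm{V}\bm{V}^\top)$. The first term is immediate, $\text{tr}(\bm{D})=\sum_{i=1}^m\hat{\sigma}_i^2=w^2$. For the second, since $\bm{V}=[\hat{\sigma}_1^2,\dots,\hat{\sigma}_m^2]^\top$, the diagonal entries of the rank-one matrix $\bm{V}\bm{V}^\top$ are $(\hat{\sigma}_i^2)^2=\hat{\sigma}_i^4$, so $\text{tr}(\bm{V}\bm{V}^\top)=\sum_{i=1}^m\hat{\sigma}_i^4$. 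Combining the two gives $\text{tr}(\bm{\Sigma}^*)=\sum_{i=1}^m\hat{\sigma}_i^2-\tfrac{1}{w^2}\sum_{i=1}^m\hat{\sigma}_i^4$, which is precisely the claimed identity and exhibits the reduction as the quantity $\tfrac{1}{w^2}\sum_i\hat{\sigma}_i^4$.

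To finish, I would note that this reduction is genuinely a reduction: since $w^2>0$ and each $\hat{\sigma}_i^4\ge 0$, the subtracted term $\tfrac{1}{w^2}\sum_i\hat{\sigma}_i^4$ is nonnegative, and it is strictly positive as soon as at least one $\hat{\sigma}_i\neq 0$. Equivalently, the matrix $\tfrac{1}{w^2}\bm{V}\bm{V}^\top$ removed from $\bm{D}$ is positive semidefinite, so the total variance can only decrease upon conditioning.

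There is essentially no obstacle here: the entire content is linearity of the trace together with the identification of the diagonal of a rank-one outer product. The only point demanding a little care is the sign convention in the displayed equality; I would phrase the statement so that the \emph{reduction} $\sum_i\hat{\sigma}_i^2-\text{tr}(\bm{\Sigma}^*)$ equals the nonnegative quantity $\tfrac{1}{w^2}\sum_i\hat{\sigma}_i^4$, making the ``uncertainty reduction'' interpretation unambiguous.
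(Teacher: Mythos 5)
Your proof is correct, and it fills in explicitly the computation the paper leaves implicit. The paper's own proof is a one-liner: it cites the MSE decomposition (\ref{eq:mse_appen}) and observes that the term $\Psi_2$ there is by construction the difference in total variance and is nonnegative. You instead compute $\text{tr}(\bm{\Sigma}^*)$ directly from the Gaussian conditional covariance $\bm{\Sigma}^* = \bm{D} - \tfrac{1}{w^2}\bm{V}\bm{V}^\top$, using linearity of the trace and the fact that the diagonal of the rank-one matrix $\bm{V}\bm{V}^\top$ has entries $\hat{\sigma}_i^4$. Both routes ultimately rest on the same Schur-complement form of the conditional covariance, but yours is self-contained, whereas the paper's defers to an identity whose own derivation is only sketched; your version is arguably the cleaner one to splice in. A genuine point in your favour: you caught that the proposition's displayed equality has its sign reversed --- as written, $\text{tr}(\bm{\Sigma}^*) - \sum_{i=1}^m \hat{\sigma}_i^2$ is negative and equals $-\tfrac{1}{w^2}\sum_{i=1}^m\hat{\sigma}_i^4$ --- and you correctly restate the claim as $\sum_{i=1}^m\hat{\sigma}_i^2 - \text{tr}(\bm{\Sigma}^*) = \tfrac{1}{w^2}\sum_{i=1}^m\hat{\sigma}_i^4 \ge 0$, which is what the paper's prose and proof actually intend. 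Your additional remark that the reduction is strictly positive whenever some $\hat{\sigma}_i \neq 0$, and that $\bm{\Sigma}^*$ does not depend on the conditioning value $s$, are small but worthwhile clarifications absent from the paper.
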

\begin{proof}
Follows directly from (\ref{eq:mse_appen}) since $\Psi_2$ denotes exactly the difference in uncertainty and $\Psi_2\geq0$.
\hfill$\square$
\end{proof}
It is important to note that, we made no assumption about the imputed value $\hat {\bm Y} = (\hat{Y}^{(1)}, \cdots, \hat{Y}^{(m)})$, except that it is independent in its components and is Gaussian.
If we could also bound its error, i.e. having a reasonable statistical model, then we can deduce a stronger result as follows:

\begin{proposition}[Uncertainty Domination]\label{prop:uncertainty}
Let $\alpha_i:=y^{(i)}-\hat{\mu}_i$  and $S_d$ denote the sum of $\alpha_i$.
Suppose that the predictors are reasonable that 
\[\exists M>0 \text{ such that } \forall i, |\alpha_i|\leq \lambda_i M, w^2\geq 2S_dM,\]
then the constrained model always has a lower mean-squared error compared with the unconstrained model.
\end{proposition}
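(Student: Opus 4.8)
The plan is to start from the exact decomposition (\ref{eq:mse_appen}), which writes the MSE improvement as $\Psi_1+\Psi_2$, and to show this quantity is positive under the stated hypotheses. Since $\Psi_2=\frac{1}{w^2}\sum_i\hat\sigma_i^4>0$ is automatically nonnegative (indeed strictly positive), the whole difficulty is to control the bias term $\Psi_1$, which can be negative. Writing $\lambda_i:=\hat\sigma_i^2/w^2$ (so that $\sum_i\lambda_i=1$ and $\hat\sigma_i^2=\lambda_i w^2$), I would first expand the constrained residuals $\alpha_i-\lambda_i S_d$ appearing in $\Psi_1$ and recombine with $\Psi_2$ to get the compact form
\[
\Psi_1+\Psi_2=\sum_{i=1}^m\lambda_i\bigl[\,2S_d\alpha_i+\lambda_i(w^2-S_d^2)\,\bigr],
\qquad S_d=\sum_{i=1}^m\alpha_i,
\]
a single expression that couples the per-coordinate hypothesis $|\alpha_i|\le\lambda_i M$ with the global hypothesis $w^2\ge 2S_dM$.

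Assume without loss of generality that $S_d\ge0$; the opposite sign follows from the reflection $\alpha_i\mapsto-\alpha_i$, which leaves $\Psi_1+\Psi_2$ invariant and turns the $w^2$-hypothesis into its absolute-value form $w^2\ge 2|S_d|M$. Since the coefficient of $w^2$ in the display above is $\sum_i\lambda_i^2\ge0$, replacing $w^2$ by its lower bound $2S_dM$ only decreases the right-hand side, so
\[
\Psi_1+\Psi_2\ \ge\ 2S_d\sum_{i=1}^m\lambda_i\Bigl[\alpha_i+\lambda_i\bigl(M-\tfrac12 S_d\bigr)\Bigr]\ =\ 2S_d\Bigl[Q+\bigl(M-\tfrac12 S_d\bigr)P\Bigr],
\]
with $Q:=\sum_i\lambda_i\alpha_i$ and $P:=\sum_i\lambda_i^2$. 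As $S_d\ge0$, it then suffices to establish the single scalar inequality $Q+(M-\tfrac12 S_d)P\ge0$, i.e. $2\sum_i\lambda_i(\alpha_i+\lambda_iM)\ge S_d P$; retaining a sliver of $\Psi_2$ in this step upgrades the conclusion to the strict inequality claimed.

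The main obstacle is exactly this last inequality, and the subtle point is that it \emph{cannot} be proved coordinate-by-coordinate. The naive bound $\alpha_i\ge-\lambda_iM$ only gives $Q+MP\ge0$, which falls short by the margin $\tfrac12 S_dP$, and the sign-coupling between $Q$ and $S_d$ can genuinely fail (one can have $S_d>0$ while $Q<0$). The way I would close the gap is to exploit that the \emph{upper} bounds $\alpha_i\le\lambda_iM$ prevent the mass $\beta_i:=\alpha_i+\lambda_iM\in[0,2\lambda_iM]$, which satisfies $\sum_i\beta_i=S_d+M$, from concentrating on low-weight coordinates: the total capacity $\sum_i 2\lambda_iM=2M$ only barely exceeds $S_d+M$ (note $|S_d|\le\sum_i|\alpha_i|\le M$ is itself forced by the hypotheses), so feasible configurations must spread $\beta$ onto the high-$\lambda_i$ coordinates, keeping $\sum_i\lambda_i\beta_i=Q+MP$ large. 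Concretely I would either (i) minimise the linear functional $Q$ over the polytope $\{-\lambda_iM\le\alpha_i\le\lambda_iM,\ \sum_i\alpha_i=S_d\}$ by an extreme-point argument, where at an optimum all but one coordinate sit at $\pm\lambda_iM$, reducing the claim to a one-parameter check; or (ii) combine the two admissible lower bounds $\alpha_i\ge-\lambda_iM$ and $\alpha_i\ge S_d-M+\lambda_iM$ (the latter coming from the upper bounds on the remaining coordinates), splitting the index set at the threshold $\lambda_i=(M-S_d)/(2M)$. Either route collapses the statement to an elementary scalar inequality in $(\lambda_i)$; verifying that this holds for every admissible weight profile is the crux, while every other step is routine algebra.
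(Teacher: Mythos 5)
Your reduction is algebraically correct and, in fact, more careful than the paper's own proof. Your identity $\Psi_1+\Psi_2=2S_dQ+(w^2-S_d^2)P$, with $Q:=\sum_i\lambda_i\alpha_i$ and $P:=\sum_i\lambda_i^2$, is the right expansion of (\ref{eq:mse_appen}); the paper's proof instead writes $\Psi_1=-2S_d\sum_i\lambda_i\alpha_i+S_d^2\sum_i\lambda_i^2$, which is $-\Psi_1$ (it contradicts the paper's own expansion of $\Psi_1$ in the subsection ``Effect of relative variance on MSE''). The paper's bound $\Psi_1\ge-2S_dM\sum_i\lambda_i^2$, and hence its conclusion, goes through only because of this sign error; with the correct sign, the coordinate-wise bound $\alpha_i\ge-\lambda_iM$ yields only $\Psi_1+\Psi_2\ge(w^2-2S_dM-S_d^2)P$, and you correctly identified the resulting deficit of $\tfrac12 S_dP$ as the crux.

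The genuine gap is that you defer exactly this crux, and it cannot be closed: the inequality $Q\ge-\bigl(M-\tfrac12S_d\bigr)P$ is false on your polytope, and Proposition \ref{prop:uncertainty} itself is false as stated. Fix $c\in\bigl(\tfrac12,1\bigr)$, $M=1$, one dominant weight $\lambda_1=1-c$ and $m-1$ tiny weights $\lambda_i=\epsilon:=c/(m-1)$ (realized by $\hat\sigma_i^2=\lambda_iw^2$), and take the vertex $\alpha_1=-\lambda_1$, $\alpha_i=+\lambda_i$ for $i\ge2$, with $w^2:=2(2c-1)$. Then $|\alpha_i|=\lambda_iM$ for every $i$, $S_d=2c-1>0$, and $w^2=2S_dM$, so both hypotheses hold, yet
\begin{equation*}
\Psi_1+\Psi_2=(2c-1)\Bigl[-(2c-1)(1-c)^2+c\epsilon(5-2c)\Bigr]<0
\quad\text{whenever}\quad
\epsilon<\frac{(2c-1)(1-c)^2}{c(5-2c)};
\end{equation*}
for instance $c=3/4$, $m=76$ gives $\Psi_1+\Psi_2=-0.0025$, so the constrained model has strictly larger MSE. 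This is precisely the regime your ``capacity'' heuristic misses: the many low-weight coordinates collectively absorb all of $\sum_i\beta_i=S_d+M=2cM$ while contributing only $O(\epsilon)$ to $\sum_i\lambda_i\beta_i$, so no mass is forced onto the dominant coordinate; an extreme-point search along your route (i) would have surfaced this same configuration. The statement does become true, with the naive coordinate-wise bound sufficing and no coupling argument needed, if the hypothesis is strengthened to $w^2\ge2S_dM+S_d^2$ (or, more simply, $w^2\ge3S_dM$, since the hypotheses already force $S_d\le M$).
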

\begin{proof}
We try to bound $\Psi_1$ in (\ref{eq:mse_appen}).
Firstly recall that 
\[
S_d = \sum_{i=1}^m \alpha_i
\]
Without loss of generality, let $S_d\geq 0$, then 
\begin{align*}
    \Psi_1 & =  -2S_d\sum_{i=1}^m \lambda_i\alpha_i + \sum_{i=1}^m \lambda_i^2 S_d^2\\
    & \geq  - 2S_d M\sum_{i=1}^m \lambda_i^2
\end{align*}
The second line follows by applying the bound on $\alpha_i$.
Note that
\[
\Psi_2 = w^2\sum_{i=1}^m \lambda_i^2.
\]
Thus
\begin{align*}
\Psi_1+\Psi_2 & \geq  w^2\sum_{i=1}^m\lambda_i^2 - 2S_d M\sum_{i=1}^m \lambda_i^2 \\
& \geq  0
\end{align*}
\hfill$\square$
\end{proof}
The main takeaway from Proposition \ref{prop:domination} is very similar to Proposition \ref{prop:uncertainty}, namely, it is most appropriate to apply a constrained model when the original model has a higher uncertainty compared to its expected error.

\subsection{Effect of relative variance on MSE}
Since there is a guaranteed improvement in the variance component $\Psi_2$, it is interesting to analyse what happens if $\Psi_1$ is not dominated by $\Psi_2$.
Define the following
\[
\lambda_i=\hat{\sigma_i}^2/w^2, \qquad \Lambda:=\sum_{i=1}^m\lambda_i^2
\]
\begin{align}
\Psi_1(\bm{\alpha}) &= \sum_{i=1}^m \alpha_i^2 - \sum_{i=1}^m\left(\alpha_i-           \frac{\sigma_i^2}{w^2}\sum_{j=1}^m\alpha_j\right)^2 \nonumber\\
    &= 2(\sum_{i=1}^m\alpha_i)\sum_{j=1}^m\alpha_j\lambda_j - \left(\sum_{i=1}^m\alpha_i\right)^2\sum_{j=1}^m\lambda_j^2 \nonumber\\
    &=  2\sum_{i=1}^m\alpha_i^2\lambda_i + 2\sum_{1\leq i<j}^{m}\alpha_i\alpha_j (\lambda_i+\lambda_j) - \sum_{i=1}^m\lambda_j^2\left(\sum_{i=1}^m\alpha_i^2 + \sum_{1\leq i<j}^{m}2\alpha_i\alpha_j\right)\nonumber\\
    &=\sum_{i=1}^m(2\lambda_i-\Lambda)\alpha_i^2 + 2\sum_{1\leq i<j}^m(\lambda_i+\lambda_j-\Lambda)\alpha_i\alpha_j \nonumber
\end{align}
Take $\alpha_1$ for example, the roots lie at
\begin{equation*}
    \alpha_1 = \frac{1}{2(\lambda_1-\Lambda)}\left(-2\sum_{i=2}^{m}(\lambda_1+\lambda_i-\Lambda)\alpha_i \pm \sqrt{\Delta}\right)
\end{equation*}
where
\begin{equation*}
    \Delta = \left(\sum_{i=2}^m (\lambda_1-\lambda_i)\alpha_i\right)^2 \geq 0.
\end{equation*}
Thus, the equation has a repeated root if and only if $\lambda_1=\cdots=\lambda_n$. 
In other words, depending on the mean deviation $\alpha_i$, it is almost always possible for $\Psi_1$ to be negative and have negative improvement for the overall MSE.

\begin{figure}[t!]
    \centering
    \makebox[\textwidth]{
    \begin{tabular}{  @{}r@{}lcr@{} c@{} lcr c@{} lcr c@{} lcr @{}}\toprule
     \multicolumn{4}{c}{Gaussian} & \phantom{====} & \multicolumn{3}{c}{MSE Improv.} &\phantom{a} & \multicolumn{3}{c}{Devi. Improv.} &\phantom{a} & \multicolumn{3}{c}{Var. Improv.}\\\cmidrule{1-4}\cmidrule{6-16}
     \splitcell{c}{$\bm{\alpha}=$\phantom{=} \\ $\bm{\sigma}^2=$\phantom{=}} & \splitcell{l}{(0.1 \\ (1} & \splitcell{c}{-1 \\ 4} & \splitcell{r}{2) \\ 10)} & & [\underline{0.08} & \underline{0.39} & \underline{9.06}] & &  [\underline{0.07} & \textbf{-0.29} & \underline{0.73}]  & & [\underline{0.33} & \underline{1.34} & \underline{3.34}]  \\\midrule
     \splitcell{c}{$\bm{\alpha}=$\phantom{=} \\ $\bm{\sigma}^2=$\phantom{=}} & \splitcell{l}{(0.1 \\ (1} & \splitcell{c}{-2 \\ 4} & \splitcell{r}{4) \\ 10)} & & [\underline{0.08} & \textbf{-1.49} & \underline{15.9}] & &  [\underline{0.06} & \textbf{-0.56} & \underline{1.39}]  & & [\underline{0.33} & \underline{1.34} & \underline{3.34}] \\\midrule
     \splitcell{c}{$\bm{\alpha}=$\phantom{=} \\ $\bm{\sigma}^2=$\phantom{=}} & \splitcell{l}{(1 \\ (1} & \splitcell{c}{-3 \\ 4} & \splitcell{r}{8) \\ 10)} & & [\underline{0.70} & \textbf{-11.1} & \underline{54.7}] & & [\underline{0.40} & \textbf{-1.60} & \underline{4.00}]  & & [\underline{0.33} & \underline{1.34} & \underline{3.34}] \\\midrule
     \splitcell{c}{$\bm{\alpha}=$\phantom{=} \\ $\bm{\sigma}^2=$\phantom{=}} & \splitcell{l}{(10 \\ (1} & \splitcell{c}{-3 \\ 2} & \splitcell{r}{-5) \\ 3)} & & [\underline{6.78} & \textbf{-3.80} & \textbf{-9.51}] & & [\underline{0.34} & \textbf{-0.67} & \textbf{-1.00}] & & [\underline{0.33} & \underline{0.67} & \underline{1.00}]\\\midrule
     \splitcell{c}{$\bm{\alpha}=$\phantom{=} \\ $\bm{\sigma}^2=$\phantom{=}} & \splitcell{l}{(10 \\ (2} & \splitcell{c}{-3 \\ 2} & \splitcell{r}{-5) \\ 2)} & & [\underline{13.6} & \textbf{-3.79} & \textbf{-6.46}] & & [\underline{0.67} & \textbf{-0.67} & \textbf{-0.67}] & &  [\underline{0.67} & \underline{0.67} & \underline{0.67}]\\\midrule
     \multicolumn{4}{c}{Student's T ($\nu$=5)} & \multicolumn{12}{c}{}\\\cmidrule{1-4}
     \splitcell{c}{$\bm{\alpha}=$\phantom{=} \\ $\bm{\sigma}^2=$\phantom{=}} & \splitcell{l}{(0.1 \\ (1} & \splitcell{c}{-2 \\ 4} & \splitcell{r}{4) \\ 10)} & & [\underline{0.28} & \underline{0.26} & \underline{21.2}] & &  [\underline{0.02} & \textbf{-0.60} & \underline{1.33}] & & [\underline{0.28} & \underline{3.02}& \underline{12.2}]\\\midrule
     \splitcell{c}{$\bm{\alpha}=$\phantom{=} \\ $\bm{\sigma}^2=$\phantom{=}} & \splitcell{l}{(1 \\ (1} & \splitcell{c}{-3 \\ 4} & \splitcell{r}{8) \\ 10)} & & [\underline{0.70} & \textbf{-11.8} & \underline{57.3}] & &  [\underline{0.46} & \textbf{-1.73} & \underline{3.82}]  & & [\underline{0.01} & \underline{1.48} & \underline{10.6}] \\\midrule
     \splitcell{c}{$\bm{\alpha}=$\phantom{=} \\ $\bm{\sigma}^2=$\phantom{=}} & \splitcell{l}{(10 \\ (1} & \splitcell{c}{-3 \\ 2} & \splitcell{r}{-5) \\ 3)} & & [\underline{7.85} & \textbf{-3.09} & \textbf{-7.32}] & & [\underline{0.38} & \textbf{-0.69} & \textbf{-0.94}]  & & [\underline{0.45} & \underline{1.53} & \underline{2.98}]\\\midrule
     \splitcell{c}{$\bm{\alpha}=$\phantom{=} \\ $\bm{\sigma}^2=$\phantom{=}} & \splitcell{l}{(10 \\ (2} & \splitcell{c}{-3 \\ 2} & \splitcell{r}{-5) \\ 2)} & & [\underline{14.3} & \textbf{-2.90} & \textbf{-5.55}] & & [\underline{0.66} & \textbf{-0.66} & \textbf{-0.67}] & &  [\underline{1.53} & \underline{1.53} & \underline{1.54}]\\\midrule
     \multicolumn{4}{c}{Gen. Logistic} & \multicolumn{12}{c}{}\\\cmidrule{1-4}
     \multicolumn{4}{l}{Set 1, $\bm{\alpha}=(10,\,\,-3,\,\, -5)$} & & [\textbf{-15.3} & \underline{3.45} & \underline{4.15}] & & [\underline{0.35} & \textbf{-0.82} & \textbf{-0.82}] & & [\underline{0.52} & \underline{2.07} & \underline{2.07}] \\\midrule
     \multicolumn{4}{l}{Set 2, $\bm{\alpha}=(10,\,\,-3,\,\, -5)$} & & [\underline{16.4} & \underline{12.8} & \underline{5.73}] & & [\underline{1.49} & \textbf{-0.25} & \textbf{-0.25}] & & [\underline{24.3} & \underline{3.26} & \underline{3.26}] \\
     \bottomrule
    \end{tabular}}
    \caption{Improvements in accuracy when adding sum constraint for different cases. Improvement in MSE, deviation and variance for all three components of the model are listed with positive values marked by an underscore and negative values marked in bold.}
    \label{tab:MSE1}
\end{figure}

\subsection{Gaussian Case}
A simulation is conducted to demonstrate the analysis made above, see Fig. \ref{tab:MSE1}.
A total of 100,000 samples are generated from both the constrained and unconstrained models in each setting. In the first case, the mean deviation is dominated by the variance and we see improvements in all three predictors (or called as imputed values).  
For the second and third settings, the variances are kept the same but the mean deviations are increased.
We can see the variance improvement are the same but MSE improvements are not all positive.
However, the total improvement in model MSE is still positive, mainly because the correction for the third component which has the largest mean error has a large positive effect on the model MSE.
In these two cases, the main contribution to improvement in overall MSE is no longer the reduction in variance (uncertainty).

Finally, for cases 4 and 5, we examine the situation when $\alpha_i$ are large but $\sigma_i^2$ are small.
Note that for the fourth case, the first predictor has a very small relative variance and is below the $1-\sqrt{(m-1/m)}$ threshold, thus the overall MSE improvement is negative.
This is different in the fifth case, when the variance in all components is the same, which matches the condition in Remark \ref{remark:special_case} and we observe a small but positive improvement in MSE.

\subsection{T-Distribution Case}
When the modeling distribution is non-Gaussian, the constrained distribution becomes intractable and hence there is no analytic formula for measuring the MSE improvement.
We have done the same simulation on Student's t-distribution.
$\bm{\alpha}$ and $\bm{\sigma^2}$ still represent the mean deviation and variance respectively in the T-distribution's case.

The generalized Student's T-distributions are implemented with degrees of freedom fixed to 5 and the density function can be expressed as
\[    
f(x;\nu, \mu,\sigma^2)=\frac{\Gamma(\frac{\nu+1}{2})}{\sqrt{\nu\pi}\Gamma(\frac{\nu}{2})}\left(1+ \frac{(x-\mu)^2}{\nu\sigma^2}\right)^{-\frac{\nu+1}{2}}
\]
The four cases examined for Student's t-distribution use the same parameter setting as cases 2-5 in the Gaussian simulation and the results almost match case by case.
We see similar improvements in deviation estimation but overall larger improvements in uncertainty when applying a constraint to Student's t-distribution compared with the Gaussian cases.
This is reasonable since Student's t-distribution has a heavier tail than the Gaussian distribution, and hence a larger uncertainty when the scaling parameter $\sigma^2$ is the same.

\subsection{Generalized Logistic Distribution Case}
The Generalized Logistic distribution has too many parameters, so the setting is omitted from the table.
The first setup assumes that $X_1\sim\GenLog(3,1.2,1,0)$, $X_2,X_3\sim\GenLog(3,2,2,0)$. 
The second setup assumes that 
$X_1\sim\GenLog(3,0.4,2,-5)$, $X_2\sim\GenLog(3,0.4,1,-2)$ and $X_3\sim\GenLog(3,0.4,1,-3)$.
To clarify, the $\bm{\alpha}$ stated in Table \ref{tab:MSE1} still represents the mean deviation of the estimation model from the true value.
In both setups, the true values are chosen such that the estimation model is off by exactly $10$, $-3$, and $5$ in the corresponding dimension.

The result is similar to the Gaussian case even though the distributions are skewed. 
The random variables in setup 1 have much lower variance than in setup 2.
Note that setup 2 is highly positively skewed with an extremely heavy tail towards the positive end.
Similar to setups 4 and 5 in Gaussian and T-distribution cases, we see a positive overall improvement in MSE when the model uncertainty is large and the deviation is not guaranteed to improve.
When the model uncertainty is small, the overall MSE actually becomes larger due to the inaccurate model.

\begin{remark}
    Overall, the results derived from the Gaussian case carry over relatively well to the non-Gaussian cases. One may expect better MSE performance as long as the unconstrained model captures the true value in its typical region with a good estimate of model uncertainty.
\end{remark}

\end{appendix}







\section*{Funding Information}
SH, HD, LA, MP, GOR were supported by the EPSRC research grant "Pooling INference and COmbining Distributions Exactly: A Bayesian approach (PINCODE)", reference (EP/X028100/1, EP/X028119/1, EP/X028712/1, EP/X027872/1).

LA, HD, MP and GOR were also supported by UKRI for grant EP/Y014650/1, as part of the ERC Synergy project OCEAN.

GOR was also supported by EPSRC grants Bayes for Health (R018561), CoSInES (R034710), and EP/V009478/1.

\bibliographystyle{abbrvnat}
\bibliography{bibtex}

\nocite{adhikari2013introductory}
\nocite{nicolaescu2020lectures}

\end{document}